\newcommand{\ee}{{\mathrm e}}
\newcommand{\natu}{{\mathbb N}}
\newcommand{\Erl}{\mathrm{Erl}}
\newcommand{\LRE}{\mathrm{LRE}}
\newtheorem{theorem}{Theorem}
\newtheorem{corollary}{Corollary}
\newtheorem{remark}{Remark}
\newcommand{\pr}[1]{{\mathbb P}\left\{{\displaystyle#1}\right\}}
\begin{document}

\title{Delay versus Stickiness Violation Trade-offs \\
for Load Balancing in Large-Scale Data Centers}

\author{Qingkai Liang$^\dagger$, Sem Borst$^{\star}$ \\
$^\dagger$Laboratory for Information and Decision Systems, MIT, Cambridge, MA \\
$^\star$Nokia Bell Labs, 
Murray Hill, NJ}

\maketitle

\begin{abstract}
Most load balancing techniques implemented in current data centers tend to rely on a mapping from packets to server IP addresses through a hash value calculated from the flow five-tuple. The hash calculation allows extremely fast packet forwarding and provides flow `stickiness', meaning that all packets belonging to the same flow get dispatched to the same server. Unfortunately, such static hashing may not yield an optimal degree of load balancing, e.g.~due to variations in server processing speeds or traffic patterns. On the other hand, dynamic schemes, such as the Join-the-Shortest-Queue (JSQ) scheme, provide a natural way to mitigate load imbalances, but at the expense of stickiness violation.

In the present paper we examine the fundamental trade-off between stickiness violation and packet-level latency performance in large-scale data centers. We establish that stringent flow stickiness carries a significant performance penalty in terms of packet-level delay. Moreover, relaxing the stickiness requirement by a minuscule amount is highly effective in clipping the tail of the latency distribution. We further propose a bin-based load balancing scheme that achieves a good balance among scalability, stickiness violation and packet-level delay performance. Extensive simulation experiments corroborate the analytical results and validate the effectiveness of the bin-based load balancing scheme.
\end{abstract}

\section{Introduction}
\label{intr}

Load balancing is a key mechanism for achieving efficient
resource allocation in data centers, ensuring high levels of server
utilization and robust application performance.
Unfortunately, most load balancing techniques implemented in current
web data centers are prone to delivering poor server utilization,
uneven performance, or both.
Standard load balancing mechanisms map incoming packets to server
IP addresses via a hash value calculated from the flow five-tuple
in the packet header.
The hash calculation allows extremely fast packet forwarding
(at line speed) and guarantees flow `stickiness', in the sense that all
packets belonging to the same flow get dispatched to the same server,
which is essential for smooth execution of stateful applications
(where relevant state information must be stored in server memory
for the duration of the flow, e.g. for transaction, accounting
or authentication purposes).

For uniform hashing functions, the above-described mechanisms map
packets across servers with equal probability.
This translates into equal long-term utilization levels in case all
servers have identical processing capacities.
In practice, however, processing speeds and traffic characteristics
of flows show notable variation, resulting in substantial imbalances
between traffic loads and server capacities.
Such a mismatch manifests itself in underutilization of some servers,
and serious performance degradation at others, which can only be
mitigated through costly overprovisioning. 


In the literature, several sophisticated dynamic load balancing algorithms
have been considered in order to address the above-mentioned issues
by exploiting state information in various ways.
For example, in the Join-the-Shortest Queue (JSQ) scheme,
each arriving packet is routed to the server with the minimum
current queue length.
Such a strategy minimizes the average packet latency in symmetric systems
with exponential service time distributions~\cite{Winston77}.
Unfortunately, a fundamental limitation of the JSQ scheme is its
poor scalability when implemented at packet level, since the
required amount of information exchange is proportional to the
number of servers as well as the packet arrival rate,
which could be prohibitive for even a moderate-size data center.

The scalability issue has motivated a strong interest in so-called
power-of-$d$ policies, where an incoming packet is routed to the
server with the minimum current queue length among $d$~randomly
selected servers.
Even for values as low as $d = 2$, these policies significantly
outperform randomized splitting (which corresponds to $d = 1$)
\cite{Mitzenmacher01,VDK96}.
In case of batch arrivals, the communication burden can be further
amortized over multiple packets~\cite{YSK15}.
Power-of-$d$ policies also extend to heterogeneous scenarios
and loss systems (rather than single-server queueing settings)
\cite{MKM16,MKMG15,MMG15}.
Despite their better scalability, power-of-$d$ policies still suffer
from high communication overhead, since the required amount
of information exchange is proportional to the packet arrival rate.

Besides the above `push-based' schemes, where a dispatcher directs
traffic to servers, significant attention has recently 
focused on alternative `pull-based' schemes where idle servers solicit traffic
by advertising their availability to the dispatcher
\cite{BB08,LXKGLG11,Mitzenmacher16,Stolyar15}.
These pull-based schemes provide effective solutions for assigning
jobs to servers for sequential processing,
with even lower communication overheads and better performance than power-of-$d$ policies.

In spite of their superior performance and improved scalability, the aforementioned dynamic load balancing schemes (i.e., JSQ, power-of-$d$ and pull-based schemes) are `flow-agnostic' in the sense that they ignore the flow stickiness requirement that all packets belonging to the same flow should be dispatched to the same server.
Flow stickiness could be preserved by implementing the above dynamic load balancing schemes at flow level, where each flow (instead of each individual packet) gets dispatched to servers according to some pre-defined rule such as JSQ. 
However, such a flow-level implementation requires us to maintain a flow table that records which flow is dispatched to which server. 
In large-scale deployments with massive numbers
of flows, the associated flow table may quickly become unmanageable.  More importantly, the flow stickiness requirement inevitably degrades the packet-level delay performance achieved by a load balancing scheme, since a coarser load balancing granularity has to be used (load balancing has to be performed at flow level instead of packet level). On the other hand, if some degree of stickiness violation is allowed, the packet-level delay performance should improve. Thus there is a trade-off between stickiness violation and delay performance.

In the present paper we examine the above-mentioned trade-offs
between flow stickiness violation and packet-level latency performance, which is governed by complex interactions between load balancing dynamics
and both flow and packet-level traffic activity patterns.
To the best of our knowledge, this paper is the {\bf first to explore
load balancing from a joint flow and packet-level perspective}.
We establish that stringent stickiness carries a significant
performance penalty in terms of packet-level delay.
In particular, even the simplest packet-level load balancing policy
outperforms the best flow-level load balancing policy that maintains
stringent stickiness.
Moreover, we show that a minor level of stickiness violation tolerance
is highly effective in clipping the tail of the latency distribution.
We further propose an efficient bin-based load balancing scheme
that achieves a good balance among scalability, flow stickiness
violation and packet-level delay performance.
Extensive simulation experiments corroborate the analytical results
and validate the effectiveness of the bin-based load balancing scheme.

The remainder of the paper is organized as follows.
In Section~\ref{sec:prelin} we present a detailed description of the
system model and relevant performance metrics.
In Sections~\ref{sec:flow-level-lb}--\ref{sec:simulation-flow-level}
we examine the trade-off between flow stickiness violation
and packet-level latency for various flow-level load balancing
schemes through a combination of analytical methods based
on mean-field limits and simulation experiments.
In Section~\ref{sec:bin-level} we propose a bin-based load
balancing scheme and investigate its performance through simulation.
Conclusions are presented in Section~\ref{conc}.

\section{Preliminaries}
\label{sec:prelin}

\subsection{Model Description}
\label{mode}

We consider a data center with $n$~parallel servers (virtual machines)
and a single dispatcher.
Flows are initiated as a Poisson process of rate $n \lambda$, and have
independent and exponentially distributed durations with mean~$\beta$. Most of the results below extend to phase-type distributions however,
at the expense of unwieldy notation.
Denote by $\rho = \lambda \beta$ the average load per server (in terms
of number of active flows).
Note that the durations of the various flows do not depend on the
number of flows contending for service, but the perceived performance
(e.g.~in terms of packet-level delay) does deteriorate
with an increasing number of concurrent flows.
These characteristics pertain for instance to video streaming sessions
or basic network processing functions.

Over the duration of a flow, packets are generated according to some
stochastic process with rate~$\nu$.
Each packet has a random processing time with mean $\frac{1}{\mu_j}$
at server~$j$. 
To facilitate a transparent analysis, we focus on homogeneous scenarios
where $\mu_j=\mu$ for any $j=1,\cdots,n$.
It is worth recalling that load balancing schemes are particularly
targeted for heterogeneous scenarios, where server processing
capacities may be different or unknown.
However, homogeneous scenarios provide pertinent insights in the
stationary behavior for heterogeneous scenarios after convergence
of the load balancing process to an equilibrium where the structural
imbalances have been remedied. 


For many stateful applications, flow stickiness is required, i.e.,
packets belonging to the same flow should be dispatched to the same server,
otherwise the application may not function properly or experience
severe performance degradation.
To preserve perfect flow stickiness, load balancing has to be performed
at the granularity of flows as opposed to packets, i.e., each newly
initiated flow is dispatched to some server and all packets in that
flow are directed to the same server.
Consequently, the flow stickiness requirement degrades the packet-level
delay performance as compared to packet-level load balancing
(as will be quantified in Sections
\ref{flow-perfect-stickiness}--\ref{sec:simulation-flow-level}).
One way to alleviate such performance degradation is to introduce
some degree of stickiness violation, which leads to a crucial trade-off
between stickiness violation and packet-level delay performance.

\subsection{Performance Metrics}

Next we introduce the performance metrics used to measure the
stickiness-delay trade-off.

\subsubsection{Metric for Stickiness Violation}
\label{sec:metric-stickiness}

The degree of stickiness violation is measured by the stickiness
violation probability~$\epsilon$, i.e., the probability that
a flow gets dispatched to more than one server.
It measures the long-term fraction of flows that do not preserve
stickiness.

\subsubsection{Metric for Packet-level Performance}
\label{sec:metric-packet}

As mentioned earlier, the duration of a flow does not depend on the
number of concurrent flows contending for service,
but the perceived performance (e.g.~packet-level latency) does
strongly vary with the number of concurrent flows at the same server.
In order to capture that dependence, we adopt the usual time scale
separation assumption between packet-level dynamics and flow-level
dynamics.
In particular, we suppose that the relevant packet-level performance
metric at each individual server can be described as a function $G(i)$
of the number of concurrent flows~$i$ at that server.

For any vector $Q \in \natu^n$,
let $\pi(Q) = \lim_{t \to \infty} \pr{Q(t) = Q}$ be the probability
that the flow population is~$Q$ in stationarity.
Then the relevant average packet-level performance in stationarity is
\begin{equation}
\widetilde{G} = \frac{\sum_{Q \in \natu^n} \pi(Q) \sum_{k = 1}^{n} Q_k G(Q_k)}
{\sum_{Q \in \natu^n} \pi(Q) \sum_{k = 1}^{n} Q_k},
\label{perf1}
\end{equation}
which may equivalently be expressed as
\begin{eqnarray}
\widetilde{G}
&=&
\frac{\sum_{i = 1}^{\infty} i G(i) p_i}{\sum_{i = 1}^{\infty} i p_i},
\label{perf2}
\end{eqnarray}
with $p_i$ being the expected fraction of servers with exactly $i$~flows.

It is worth emphasizing that the expressions~(\ref{perf1})
and~(\ref{perf2}) for the packet-level performance only entail
a generic time scale separation assumption.
\textbf{They do not rely on a particular performance criterion or specific
properties of the packet-level traffic characteristics,
which are entirely encapsulated by the function $G(\cdot)$.}
A prototypical performance criterion would be the fraction of packets
in stationarity for which the perceived delay exceeds a threshold
$\tau_\chi$ equal to $\chi \geq 1$ times the mean processing time.
The derivation of the function $G(\cdot)$ then involves a separate
queueing analysis, which is fairly case-specific and mostly orthogonal
to the central theme of the present paper, and therefore not pursued
in any detail or generality.
As a brief illustrative example, which will be used in the numerical
experiments, suppose that each active flow generates packets
as a Poisson process of rate $\nu \gg 1 / \beta$,
and that the packets have independent and exponentially distributed
processing times with mean $\frac{1}{\mu}$.
We assume that $\lceil \rho \rceil \nu < \mu$ to ensure that the
system in its entirety is not overloaded.
Thus, when there are~$i$ active flows at a particular server,
the number of packets evolves as the queue length in an M/M/1 system
with arrival rate $i \nu$ and service rate~$\mu$.
The probability that the packet delay exceeds some value~$t$ in that
case equals $\ee^{- (\mu - i \nu) t}$, assuming $i \leq \mu / \nu$.
In particular, the probability that the packet delay exceeds the mean
processing time by a factor~$\chi$ is $\ee^{- \chi (1 - i \nu / \mu)}$.
Thus we obtain
\begin{equation}\label{poiss-pkt}
G_\chi(i) =
\left\{\begin{array}{ll} \ee^{- \chi (1 - i \nu / \mu)} & i \leq \mu / \nu \\
1 & i > \mu / \nu \end{array} \right.
\end{equation}
Plugging into \eqref{perf2}, we derive a specific packet-level
metric $\widetilde{G}_\chi$:
\begin{equation}
\widetilde{G}_\chi =
\frac{\sum_{i = 1}^{\infty} i G_\chi(i) p_i}{\sum_{i = 1}^{\infty} i p_i}.
\label{perf3}
\end{equation}
This metric will be referred to as the $\chi$-delay tail probability,
i.e., the probability that a packet experiences a delay exceeding
$\chi$~times the mean processing time.
In the rest of the paper, the $\chi$-delay tail probability
$\widetilde{G}_\chi$ will be frequently used as an illustrative
example of the generic packet-level performance metric $\widetilde{G}$,
in both numerical experiments and analytical results.

Finally, note that once the function $G(i)$ has been determined,
the packet-level performance metric $\widetilde{G}$ only depends
on the stationary distribution of the flow population, namely~$p_i$.
Thus, in the analysis of packet-level performance, it suffices
to derive the expression for~$p_i$ and we omit the complete
expression for $\widetilde{G}$.

\section{Flow-Level Load Balancing}
\label{sec:flow-level-lb}

In the next few sections, we investigate the fundamental trade-off
between stickiness violation and packet-level latency under various
flow-level load balancing schemes.
We begin by discussing two flow-level load balancing schemes that
maintain perfect flow stickiness in Section~\ref{flow-perfect-stickiness}.
The analysis of these schemes demonstrates that stringent stickiness
carries a significant performance penalty in terms of packet-level delay.
Then in Section~\ref{flow-level-stickiness-violation} we investigate
several flow-level load balancing schemes that allow stickiness
violation to some extent.
It will be shown that relaxing the stickiness requirement
by a minuscule amount is highly effective in clipping the tail of the
latency distribution.

As it turns out, for virtually any state-dependent flow assignment scheme,
an exact analysis of the stationary distribution of the flow population
does not appear to be tractable.
For the sake of tractability, we therefore pursue mean-field limits
in an asymptotic regime where the number of servers~$n$ grows large.
While load balancing at flow level may not be feasible at that scale
(since we need to maintain a flow assignment table whose size is
proportional to the number of active flows in the system),
such a scenario allows derivation of explicit mean-field limits
and sheds light on the fundamental trade-off between flow stickiness
violation and packet-level latency.
Note that the mean-field regime is not only convenient
from a theoretical perspective, but also relevant from a practical
viewpoint given the large number of servers in typical data centers.

For convenience, we henceforth assume that the decisions of the load balancing scheme only depend
on the history of the process through the current flow population~$Q$,
so that the process $\{Q(t)\}_{t \geq 0}$ behaves as a Markov process.
Most of the results below extend to phase-type distributions however,
at the expense of unwieldy notation.
Introduce $S^n(t) = (S_i^n(t))_{i \geq 0}$, with $S_i^n(t)$
representing the fraction of servers with $i$ or more flows
at time~$t$ (when there are $n$ servers),
with $S^n_0(t) \equiv 1$, and observe that the process
$\{S^n(t)\}_{t \geq 0}$ also evolves as a Markov process.
Then any weak limit $s(t)$ of the sequence $\{S^n(t)\}_{t \geq 0}$
as $n \to \infty$ is called a mean-field limit.
Rigorous proofs to establish weak convergence to the mean-field
limit are beyond the scope of the present paper,
but can be constructed along similar lines as in~\cite{HK94}.
The function $s(t)$ can typically be described as a dynamical system,
and any stationary point $s^*$ of $s(t)$ is referred to as a fixed point.
Denote by $p_i^* = s_i^* - s_{i+1}^*$ the corresponding fraction
of servers with exactly $i$~flows.
Throughput we will suppose (without formal proof) that the
mean-field ($n \to \infty$) and steady-state ($t \to \infty$)
limits may be interchanged, and thus interpret $p_i^*$ as the limit
of the stationary probability that a particular server has exactly
$i$~flows when the total number of servers grows large.

We focus the attention on schemes that dispatch arriving flows based
on the numbers of flows at the various servers only,
and not the identities of the individual servers,
as is quite natural in homogeneous scenarios.
The number of flows at the server to which an arriving flow is assigned,
is then a random variable that depends on the current flow
population~$Q$ only through the vector~$S^n$, with $S_i^n$
representing the fraction of servers with $i$ or more flows.

The evolution of the mean-field limit may then in general be
described in terms of differential equations of the form
\[
\frac{{\rm d} s_i(t)}{{\rm d} t} =
\lambda q_{i-1}(s(t)) - i (s_i(t) - s_{i+1}(t)) / \beta,
\hspace*{.4in} i \geq 1.
\]
The terms $q_{i-1}(s)$ may be interpreted as the probabilities that
an arriving flow is assigned to a server with exactly $i - 1$ flows
when the mean-field state is~$s$, and depend on the specific flow
assignment scheme under consideration.
We note that there are many different sequences $S^n$ with the same
limit~$s$, for which the limiting probabilities $q_{i-1}(S^n)$
could be different.
For the schemes that we will consider, however, the limiting
probabilities $q_{i-1}(S^n)$ are uniquely determined by the
mean-field state~$s$, and hence we write $q_{i-1}(s)$
with minor abuse of notation.

The fixed point of the above system of differential equations is
in general determined by
\begin{equation}
\rho q_{i-1}(s) = i (s_i - s_{i+1}),
\hspace*{.4in} i \geq 1,
\label{fp1}
\end{equation}
or equivalently,
\begin{equation}
\rho q_{i-1}(p) = i p_i,
\hspace*{.4in} i \geq 1,
\label{fp2}
\end{equation}
where the probabilities $q_{i-1}(s)$ depend on the specific flow
assignment scheme under consideration.
In the next two sections we will derive the probabilities $q_{i-1}(s)$
and solve the above fixed-point equations for various specific schemes.

\section{Flow-level Load Balancing with Perfect Stickiness}
\label{flow-perfect-stickiness}

In this section we consider the following two flow-level load balancing
schemes which both provide perfect flow stickiness:
(I) power-of-$d$ flow assignment;
(II) pull-based flow assignment.\\

\subsubsection*{Scheme (I): power-of-$\mathlarger{\mathlarger{d}}$ flow assignment}

In this scheme, an arriving flow is assigned to a server with the
minimum number of active flows among $d$~randomly selected servers,
where $1 \leq d \leq n$.
Thus the probabilities $q_{i-1}(s)$ in Equations~(\ref{fp1}) are given by
\begin{equation}\label{eq:jsq-q}
q_{i-1}(s) = s_{i-1}^d - s_i^d,
\hspace*{.4in} i \geq 1.
\end{equation}
This scheme covers several existing flow-level load balancing schemes
as special cases, such as randomized flow assignment ($d = 1$)
and the flow-level Join-the-Shortest-Queue (JSQ) policy ($d = n$).

It is difficult to derive an explicit expression for the fixed point(s)
of Equations~(\ref{fp1}) in this case.
We present a useful upper bound in the next theorem.

\begin{theorem}\label{thm:pod}
Denote $k^* = \lfloor \rho \rfloor$, and let $s^*$ be any fixed point under the flow-level power-of-$d$ policy.
Then $s^*_i \leq \hat{s}_i$ for all $i \geq 0$, where
\[
\hat{s}_i =
\begin{cases}
1, & 0 \leq i \leq k^* \cr
\big(\frac{\rho}{k^*+1}\big)^{\frac{d^{i - k^*} - 1}{d - 1}}, & i \geq k^* + 1.
\end{cases}
\]
\end{theorem}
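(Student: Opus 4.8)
The plan is to convert the fixed-point equations into a single recursive inequality of the form $s^*_i \leq \frac{\rho}{i}(s^*_{i-1})^d$, and then propagate that inequality by induction to recover the stated tower of exponents. Throughout, let $s_i$ denote the components of an arbitrary fixed point $s^*$. Substituting the power-of-$d$ probabilities~(\ref{eq:jsq-q}) into the fixed-point relation~(\ref{fp1}) gives, for $i \geq 1$,
\[
\rho\,(s_{i-1}^d - s_i^d) = i\,(s_i - s_{i+1}).
\]
For $0 \leq i \leq k^*$ the claimed bound $\hat{s}_i = 1$ is immediate because $s_i \leq s_0 = 1$ for every fixed point, so the entire content lies in the range $i \geq k^*+1$.

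First I would extract the recursion. Summing the displayed identity over $j = i, i+1, \dots, N$, the left-hand side telescopes to $\rho\,(s_{i-1}^d - s_N^d)$, while an Abel (summation-by-parts) rearrangement turns the right-hand side into $i\,s_i + \sum_{j=i+1}^{N} s_j - N s_{N+1}$. Letting $N \to \infty$ and using that any fixed point corresponds to a genuine flow-population distribution with mean $\sum_{i\geq 1} s_i = \sum_{j\geq 1} j p_j = \rho < \infty$ (obtained by summing~(\ref{fp2}) over $i$ and invoking $\sum_{j\geq 0} q_j(s) = 1$), both boundary terms $s_N^d$ and $N s_{N+1}$ vanish, leaving the exact balance
\[
i\,s_i + \sum_{j=i+1}^{\infty} s_j = \rho\, s_{i-1}^d .
\]
Since the tail sum is nonnegative, this yields the clean recursion $s_i \leq \frac{\rho}{i}\, s_{i-1}^d$ for all $i \geq 1$.

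Next I would run the induction over $i \geq k^*+1$. Write $c = \frac{\rho}{k^*+1}$ and observe that $c < 1$, since $k^* = \lfloor \rho \rfloor$ forces $k^*+1 > \rho$; for $i \geq k^*+1$ we have $\frac{\rho}{i} \leq c$, so the recursion weakens to $s_i \leq c\, s_{i-1}^d$. The base case $i = k^*+1$ uses $s_{k^*} \leq 1$ to give $s_{k^*+1} \leq c = \hat{s}_{k^*+1}$. For the inductive step, assuming $s_{i-1} \leq c^{(d^{\,i-1-k^*}-1)/(d-1)}$, the bound $s_i \leq c\, s_{i-1}^d$ produces the exponent $1 + d\cdot\frac{d^{\,i-1-k^*}-1}{d-1} = \frac{d^{\,i-k^*}-1}{d-1}$, which is exactly the exponent in $\hat{s}_i$, closing the induction.

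The main obstacle I anticipate is the limit interchange in the summation step, specifically justifying $N s_{N+1} \to 0$. This is precisely where the finite-mean property is essential: for a nonincreasing nonnegative sequence with $\sum_i s_i < \infty$ one has $i\,s_i \to 0$ (bound $(i/2)\,s_i$ by the convergent tail $\sum_{j > i/2} s_j$), so the boundary term indeed disappears and the exact balance is legitimate. Everything else is routine algebra. It is worth emphasizing that the argument bounds every fixed point simultaneously, so no uniqueness of the fixed point is required, matching the ``any fixed point'' phrasing of the statement.
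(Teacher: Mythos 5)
Your proposal is correct and follows essentially the same route as the paper's own proof: substitute the power-of-$d$ probabilities into the fixed-point equations, sum over the tail so that the left-hand side telescopes to $\rho (s_{i-1}^*)^d$ and the right-hand side is bounded below by its leading term $i\,s_i^*$, obtain the recursion $s_i^* \leq \frac{\rho}{i}(s_{i-1}^*)^d$, and close the induction with the identical exponent arithmetic using $\frac{\rho}{i} \leq \frac{\rho}{k^*+1}$ for $i \geq k^*+1$. The only substantive difference is that you explicitly justify the vanishing of the boundary terms $s_N^d$ and $N s_{N+1}$ via the finite-mean property of the fixed point, a step the paper performs silently when it sums to infinity.
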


\begin{proof}
We prove the theorem by induction.

\textbf{Base Case:}
For any $0\le i\le k^*$, it is clear that $s_i \leq \hat{s}_i = 1$.

\textbf{Inductive Step:}
Suppose $s_{k^*+j}\le \hat{s}_{k^*+j}$ for some $j \geq 1$.
Then we prove that $s_{k^*+j+1} \leq \hat{s}_{k^*+j+1}$.
Summing both sides of Equation~\eqref{fp1} over $i \ge k^*+j+1$, we obtain
\[
\sum_{i=k^*+j+1}^{\infty} \rho q_{i-1}(s) =
\sum_{i=k^*+j+1}^{\infty} i (s_i-s_{i+1}).
\]
Substituting~\eqref{eq:jsq-q} into the above equation, we derive
\[
\begin{split}
\rho s_{k^*+j+1}^d = &(k^*+j+1) s_{k^*+j+1} + s_{k^*+j+2} + \cdots \\
\geq &
(k^*+j+1) s_{k^*+j+1},
\end{split}
\]
which implies that
\[
\begin{split}
s_{k^*+j+1} \leq &\frac{\rho s_{k^*+j}^d}{k^*+j+1} \\
\leq &\frac{\rho \hat{s}_{k^*+j}^d}{k^*+j+1} \\
=& \frac{\rho}{k^*+j+1} \big(\frac{\rho}{k^*+1}\big)^{\frac{d^{j+1} - d}{d - 1}} \\
\leq & \big(\frac{\rho}{k^*+1}\big)^{\frac{d^{j+1} - 1}{d - 1}} = \hat{s}_{k^*+j+1},
\end{split}
\]
where the second inequality is due the inductive assumption and the last
inequality holds because $\frac{\rho}{k^*+j+1} \leq \frac{\rho}{k^*+1}$.
This completes the proof.
\end{proof}

Note that $\hat{s}_i \downarrow 0$ for all $i \geq k^* + 2$ if
$d \to \infty$.
In addition, we have $\sum_{i = 0}^{\infty} i p^*_i =
\sum_{i = 0}^{\infty} i (s^*_i - s^*_{i+1}) = \rho$, which then yields
an exact formula for the fixed point in case $d \to \infty$ as $n\rightarrow \infty$. In particular when $d = n$ in the pre-limit system, which corresponds to the flow-level JSQ policy, we have the next corollary.
\begin{corollary}\label{col:JSQ}
The unique fixed point under the flow-level JSQ policy is given by
\begin{equation}
p_i^* =
\begin{cases}
0, & 0 \leq i \leq k^* - 1 \cr
k^* + 1 - \rho, & i = k^* \cr
\rho - k^*, & i = k^* + 1 \cr
0, & i \geq k^* + 2 \cr
\end{cases}
\label{fpjsq2}
\end{equation}
\end{corollary}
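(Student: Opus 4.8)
The plan is to obtain the fixed point in the $d\to\infty$ limit by combining the upper bound of Theorem~\ref{thm:pod} with the conservation relation $\sum_{i\ge 1} i\,p_i^* = \rho$ and a direct inspection of the limiting form of the fixed-point equations~\eqref{fp1}. The target is the two-point distribution concentrated on levels $k^*$ and $k^*+1$, so the whole task reduces to showing that in the limit no server occupies a level outside $\{k^*,k^*+1\}$, after which the two weights are pinned down by conservation.

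First I would record the truncation from above. For the flow-level JSQ policy we have $d=n\to\infty$, and Theorem~\ref{thm:pod} gives $s_i^*\le\hat s_i$ with $\hat s_i\to 0$ for every $i\ge k^*+2$ as $d\to\infty$; hence the limiting fixed point satisfies $s_i^*=0$, equivalently $p_i^*=0$, for all $i\ge k^*+2$. Next I would locate the lowest non-full level. Let $m$ be the smallest index with $s_m^*<1$, which exists because $s_i^*\downarrow 0$. For every $i\ge m+1$ monotonicity gives $s_{i-1}^*\le s_m^*<1$, so $(s_{i-1}^*)^d\to 0$; substituting~\eqref{eq:jsq-q} into~\eqref{fp1} and letting $d\to\infty$ in the equation $\rho\big((s_{i-1}^*)^d-(s_i^*)^d\big)=i\,(s_i^*-s_{i+1}^*)$ then forces $s_i^*=s_{i+1}^*$, and together with $s_i^*\downarrow 0$ this yields $s_i^*=0$ for all $i\ge m+1$. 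Finally I would invoke conservation: $\sum_{i\ge 1}s_i^*=\sum_{i\ge 1} i\,p_i^*=\rho$ collapses to $(m-1)+s_m^*=\rho$ with $0\le s_m^*<1$, and since $k^*=\lfloor\rho\rfloor$ this forces $m=k^*+1$ and $s_{k^*+1}^*=\rho-k^*$. Consequently $s_i^*=1$ for $0\le i\le k^*$, $s_{k^*+1}^*=\rho-k^*$, and $s_i^*=0$ for $i\ge k^*+2$; reading off $p_i^*=s_i^*-s_{i+1}^*$ produces~\eqref{fpjsq2}, and since every step was forced, the fixed point is unique.

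The hard part will be the passage to the $d\to\infty$ limit inside~\eqref{fp1}. At the full levels $1\le i\le k^*$ one has $s_i^*\to 1$ while $(s_i^*)^d$ tends to a constant strictly between $0$ and $1$ (a boundary layer), so one cannot simply set $s_i^*=1$ in the equations; indeed doing so produces an apparent inconsistency. The argument above is deliberately arranged to sidestep this, invoking~\eqref{fp1} only at levels $i\ge m+1$, where $s_{i-1}^*<1$ makes $(s_{i-1}^*)^d\to 0$ cleanly. What still requires justification---on the same footing as the weak-convergence statements assumed elsewhere in the paper---is the interchange of the mean-field limit $n\to\infty$ with the limit $d\to\infty$, and the fact that the JSQ policy $d=n$ inherits both the bound of Theorem~\ref{thm:pod} and the conservation relation.
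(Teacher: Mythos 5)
Your argument is correct, and its skeleton is the same as the paper's: use Theorem~\ref{thm:pod} to force $s_i^* \to 0$ for all $i \ge k^*+2$ as $d \to \infty$, then invoke the conservation identity $\sum_{i\ge 1} s_i^* = \sum_{i\ge 1} i\,p_i^* = \rho$. The difference is that the paper stops there: it presents \eqref{fpjsq2} as an immediate consequence of these two facts alone, whereas you correctly recognize that they do not suffice. A distribution supported on $\{0,\dots,k^*+1\}$ with mean $\rho$ need not be the two-point law \eqref{fpjsq2} (for instance, weight $(k^*+1-\rho)/2$ at level $k^*-1$ and the remaining weight at level $k^*+1$ also has mean $\rho$). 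The step you add---defining $m$ as the lowest non-full level of the limit and applying \eqref{fp1} with \eqref{eq:jsq-q} only at levels $i \ge m+1$, where $s_{i-1}^* < 1$ makes $(s_{i-1}^{*})^d \to 0$ and hence forces $s_i^* = s_{i+1}^*$, so that these tail values all vanish---is precisely what rules out mass below level $k^*$, after which conservation pins down $m = k^*+1$ and both weights. Your caution about the boundary layer is also well placed: summing the fixed-point equations over $i \ge k^*+1$ shows $(s_{k^*}^*)^d \to (k^*+1)(\rho-k^*)/\rho \in (0,1)$, so one cannot simply substitute $s_i^* = 1$ into \eqref{eq:jsq-q} at the full levels, and your argument never evaluates the $d$-th power there. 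The limit interchanges you flag ($n \to \infty$ versus $d \to \infty$, limit versus infinite sum) are left implicit in the paper as well, so you are on the same footing of rigor. In short: same route as the paper, but you supply a step that the paper's two-sentence derivation omits and genuinely needs.
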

Plugging \eqref{fpjsq2} into~(\ref{perf2}) gives the packet-level performance under the flow-level JSQ policy. In particular, taking \eqref{fpjsq2} into~(\ref{perf3}) yields the $\chi$-delay tail probability of the flow-level JSQ policy:
\[
\small
\widetilde{G}_\chi^{~\text{flow-JSQ}} = \frac{(k^*+1-\rho)}{\rho}G_\chi(k^*)+\Big[1-\frac{(k^*+1-\rho)}{\rho}\Big]G_\chi(k^* + 1),
\]
which is a convex combination of $G_\chi(k^*)$ and $G_\chi(k^* + 1)$. Optimizing  over $k^* \in [0, \rho]$, we can further deduce that $\widetilde{G}_\chi^{~\text{flow-JSQ}} \ge G_\chi(\rho)$.\\

\noindent \textbf{Price of Perfect Flow Stickiness.}
Note that the fixed point in~(\ref{fpjsq2}) is the most balanced flow distribution through flow-level load balancing if  strict flow stickiness is maintained. Thus, $\widetilde{G}_\chi^{~\text{flow-JSQ}}$ is the best $\chi$-delay tail probability that can be achieved by any flow-level load balancing scheme that preserves perfect stickiness. 

On the other hand, in the absence of any stickiness requirement, one could implement
load balancing at the packet level (leaving aside any practical
feasibility constraints).
Consider the simplest randomized packet-level load balancing policy, where a packet is routed to any of the~$n$ servers with equal probability. Under this policy and those packet-level assumptions adopted for \eqref{poiss-pkt}, the queues
at the various servers behave as independent M/M/1 systems
with arrival rate $\rho \nu$ and service rate~$\mu$, so that the $\chi$-delay tail probability is
\[
\widetilde{G}_\chi^{~\text{pkt-rand}}
=
\ee^{- \chi (1 - \frac{\rho \nu}{\mu})} = G_{\chi}(\rho)\le \widetilde{G}_\chi^{~\text{flow-JSQ}}.
\]
Thus, even the simplest packet-level load balancing policy
outperforms the best flow-level load balancing policy that maintains perfect
stickiness, indicating that perfect stickiness carries
a significant performance penalty in terms of packet-level delay.\\

\subsubsection*{Scheme (II): pull-based flow assignment}

This scheme involves two threshold values, a low threshold~$l$
and a high threshold $h > l$.
When the number of flows at a server reaches the threshold~$h$,
a disinvite message is sent from the server to the dispatcher;
the disinvite message is revoked as soon as the number of flows
drops below the level~$h$ again.
When the number of flows at a server falls below the threshold~$l$,
an invite message is sent from the server to the dispatcher;
the invite message is retracted as soon as the number of flows
at the server reaches the level~$l$ again.

When a new flow arrives, the dispatcher assigns it to an arbitrary server
with an outstanding invite message, if any.
Otherwise, the flow is assigned to an arbitrary server without
an outstanding disinvite message, if any.
If all servers have outstanding disinvite messages,
then the flow is assigned to a randomly selected server.
Note that this scheme subsumes various existing load balancing
schemes as special cases.
For example, this scheme reduces to random flow assignment if $l = 0$
and $h = \infty$, and corresponds to the flow-level
``Join-the-Idle-Queue'' (JIQ) policy in \cite{LXKGLG11,Stolyar15}
for $l = 1$ and $h = \infty$.

In order to determine the probabilities $q_{i-1}(s)$ in~(\ref{fp1}),
we need to distinguish three cases. \\

\noindent $\bullet$ \textbf{Case~(i):} $s_l < 1$.

Then
\[
q_{i-1}(s) = \frac{s_{i-1} - s_i}{1 - s_l},
\hspace*{.4in} i = 1, \dots, l,
\]
and $q_{i-1}(s) = 0$ for all $i \geq l + 1$.\\

\noindent $\bullet$ \textbf{Case~(ii):} $s_l = 1$, but $s_h < 1$.

In this case, flows at servers with a total of $l$~flows complete
at rate $l (1 - s_{l+1}) / \beta$, generating invite messages
at that rate.
These invite messages will be used before any arriving flow can be
assigned to a server with $l$ or more flows.

We need to distinguish two sub-cases, depending on whether
$\lambda$ is larger than $(1 - s_{l+1}) / \beta$ or not.
If $\lambda \leq (1 - s_{l+1}) / \beta$, i.e.,
$\rho \leq l (1 - s_{l+1})$, then $q_{l-1}(s) = 1$
and $q_{i-1}(s) = 0$ for all $i \neq l$.
If $\rho > l (1 - s_{l+1})$, then
\[
\lambda q_{l-1}(s) = l (1 - s_{l+1}) / \beta, \mbox{ i.e., }
\rho q_{l-1}(s) = l (1 - s_{l+1}),
\]
and
\[
\lambda q_{i-1}(s) = \tilde\lambda \tilde{q}_{i-1}(s),
\hspace*{.4in} i = l + 1, \dots, h,
\]
with $\tilde\lambda = \lambda - l (1 - s_{l+1}) / \beta$,
\[
\tilde{q}_{i-1}(s) = \frac{s_{i-1} - s_i}{1 - s_h},
\hspace*{.4in} i = l + 1, \dots, h,
\]
and $\tilde{q}_{i-1}(s) = 0$ for all $i = 1, \dots, l - 1$ and $i \geq h + 1$.

In particular, if $s_{l+1} = 1$, but $s_h < 1$, then
\[
q_{i-1}(s) = \frac{s_{i-1} - s_i}{1 - s_h},
\hspace*{.4in} i = l + 1, \dots, h,
\]
and $q_{i-1}(s) = 0$ for all $i = 1, \dots, l$ and $i \geq h + 1$.\\

\noindent $\bullet$ \textbf{Case~(iii):} $s_h = 1$.

In this case, flows at servers with a total of $h$~flows complete
at rate $h (1 - s_{h+1}) / \beta$.
Arriving flows will be dispatched to these servers before any flow
can be assigned to a server with $h$ or more flows
(and hence an outstanding disinvite message).

We need to distinguish two sub-cases, depending on whether
$\lambda$ is larger than $h (1 - s_{h+1}) / \beta$ or not.
If $\lambda \leq h (1 - s_{h+1}) / \beta$, i.e.,
$\rho \leq h (1 - s_{h+1})$, then $q_{h-1}(s) = 1$
and $q_{i-1}(s) = 0$ for all $i \neq h$.
If $\rho > h (1 - s_{h+1})$, then
\[
\lambda q_{h-1}(s) =  h (1 - s_{h+1}) / \beta, \mbox{ i.e., }
\rho q_{h-1}(s) = h (1 - s_{h+1}),
\]
and
\[
\lambda q_{i-1}(s) = \tilde\lambda \tilde{q}_{i-1}(s),
\]
with $\tilde\lambda = \lambda - h (1 - s_{h+1}) / \beta$,
\[
\tilde{q}_{i-1}(s) = s_{i-1} - s_i,
\hspace*{.4in} i \geq h + 1,
\]
and $\tilde{q}_{i-1}(s) = 0$ for all $i \leq h$.

In particular, if $s_{h+1} = 1$, then
\[
q_{i-1}(s) = s_{i-1} - s_i,
\hspace*{.4in} i \geq h + 1,
\]
and $q_{i-1}(s) = 0$ for all $i \leq h$.\\

Having determined the expressions for $q_{i-1}(s)$, we can solve the
fixed point(s) of Equations (\ref{fp1}) explicitly.

\begin{theorem}\label{thm:pull}
Assume $l \leq \rho < h$.
The unique fixed point under pull-based flow assignment is given by
\begin{equation}\label{eq:pull-perfect}
p_i^*  =
\begin{cases}
0, & i<l~\text{or}~i>h \cr
\frac{f_{\sigma}(i)}{F_{\sigma}(h)-F_{\sigma}(l-1)}, & i = l, \dots, h,\cr
\end{cases}
\end{equation}
where $f_{\sigma}(\cdot)$ and $F_{\sigma}(\cdot)$ denote the PMF
and CDF of a Poisson random variable with parameter~$\sigma$,
respectively, and $\sigma$ is the unique root of the equation
\[
\sigma [1 - p^*_h(\sigma)]+lp^*_l(\sigma) = \rho.
\]
\end{theorem}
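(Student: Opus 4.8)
The plan is to pin down which of the three cases for $q_{i-1}(s)$ the fixed point must satisfy, and then solve the resulting recursion. First I would record the balance identity $\sum_{i \geq 1} i\,p_i^* = \rho$, obtained by summing~\eqref{fp2} over all $i$ and using $\sum_{i \geq 1} q_{i-1} = 1$. Combined with the hypothesis $l \leq \rho < h$, this identity excludes Cases~(i) and~(iii). In Case~(i) we have $q_{i-1}(s) = 0$ for all $i \geq l+1$, so~\eqref{fp2} forces $p_i^* = 0$ for $i \geq l+1$; then the mean $\sum_i i\,p_i^*$ is at most $l$, with equality only if all mass sits at $l$ (contradicting $s_l < 1$), hence strictly below $l \leq \rho$, a contradiction. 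Symmetrically, in Case~(iii) every server carries at least $h$ flows, forcing the mean to be at least $h > \rho$, again impossible. Thus the fixed point lies in Case~(ii), and in fact in the sub-case $\rho > l(1 - s_{l+1})$; the boundary sub-case corresponds precisely to the degenerate solution $p_l^* = 1$ arising at $\rho = l$.

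Within Case~(ii) I would read off from the expressions for $q_{i-1}(s)$ that $q_{i-1}(s) = 0$ for $i \geq h+1$, so~\eqref{fp2} gives $p_i^* = 0$ for $i > h$; together with $s_l = 1$ this yields $p_i^* = 0$ for $i < l$ and $i > h$. For the interior indices $i = l+1, \dots, h$, I would substitute $\lambda q_{i-1}(s) = \tilde\lambda\,\tilde q_{i-1}(s)$ with $\tilde\lambda = \lambda - l(1-s_{l+1})/\beta$ and $\tilde q_{i-1}(s) = (s_{i-1}-s_i)/(1-s_h)$ into~\eqref{fp2}. Using $s_h = p_h^*$ (since $p_i^* = 0$ for $i > h$) and $1 - s_{l+1} = p_l^*$ (since $s_l = 1$), the fixed-point equation collapses to the clean recursion $i\,p_i^* = \sigma\,p_{i-1}^*$ for $i = l+1, \dots, h$, where $\sigma := (\rho - l\,p_l^*)/(1 - p_h^*)$. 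This is exactly the ratio relation $f_\sigma(i)/f_\sigma(i-1) = \sigma/i$ of a Poisson PMF, so $p_i^* \propto f_\sigma(i)$ on $\{l, \dots, h\}$; normalizing over this range delivers the truncated-Poisson form in~\eqref{eq:pull-perfect}.

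It remains to characterize $\sigma$ and establish uniqueness, which I expect to be the main obstacle, because $\sigma$ appears on both sides of its defining relation (the truncated weights $p_l^*, p_h^*$ themselves depend on $\sigma$). Summing $i\,p_i^* = \sigma\,p_{i-1}^*$ over $i = l+1, \dots, h$ and reindexing gives $\sum_{i=l+1}^{h} i\,p_i^* = \sigma\sum_{k=l}^{h-1} p_k^* = \sigma(1 - p_h^*)$ by normalization; adding the $i=l$ term $l\,p_l^*$ recovers the mean, so the balance identity becomes precisely $\sigma[1 - p_h^*(\sigma)] + l\,p_l^*(\sigma) = \rho$. To show this has a unique root I would rewrite the left-hand side as the mean $\phi(\sigma) = \sum_{i=l}^h i\,f_\sigma(i)/\sum_{j=l}^h f_\sigma(j)$ of the truncated Poisson and argue that $\phi$ is strictly increasing: writing $Z(\sigma) = \sum_{j=l}^h \sigma^j/j!$ one gets $\phi = \sigma Z'/Z$, and a direct computation yields $\sigma\,\phi'(\sigma) = \mathrm{Var}(i) > 0$, the variance of the non-degenerate (since $l < h$) truncated distribution. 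Since $\phi(\sigma) \to l$ as $\sigma \to 0^+$ and $\phi(\sigma) \to h$ as $\sigma \to \infty$, strict monotonicity and continuity give a unique $\sigma \in [0,\infty)$ solving $\phi(\sigma) = \rho$ for every $\rho \in [l, h)$. Finally I would close the loop on self-consistency: this $\sigma$ satisfies $\rho - l\,p_l^* = \sigma(1 - p_h^*) > 0$, confirming $\rho > l\,p_l^* = l(1 - s_{l+1})$, while $p_h^* < 1$ confirms $s_h < 1$, so the assumed Case~(ii) sub-case genuinely holds and the constructed solution is the unique fixed point.
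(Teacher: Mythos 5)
Your proposal is correct and follows essentially the same route as the paper's proof in Appendix~A: identify Case~(ii) as the operative regime, derive the recursion $i\,p_i^* = \sigma\,p_{i-1}^*$ with $\sigma = (\rho - l\,p_l^*)/(1-p_h^*)$, recognize the truncated-Poisson form, and pin down $\sigma$ via the balance equation $\sigma[1-p_h^*] + l\,p_l^* = \rho$ together with monotonicity and the limiting values $l$ and $h$. You even supply rigorous detail at the two points the paper only asserts: the exclusion of Cases~(i) and~(iii) via the mean constraint $\sum_i i\,p_i^* = \rho$, and the strict monotonicity of the truncated-Poisson mean, which you prove cleanly through the identity $\sigma\,\phi'(\sigma) = \mathrm{Var}(i) > 0$, whereas the paper appeals to an unproven monotonicity claim for the carried traffic in a modified Erlang loss system.
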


\begin{proof}
See Appendix \ref{ap:II}.
\end{proof}

The fixed point \eqref{eq:pull-perfect} shows that the fraction
of servers with less than $l$~flows or more than $h$~flows is
negligible in the mean-field limit.
In other words, through the invite/disinvite messages,
the variation in the number of flows at each of the servers is
effectively constrained to the range $\{l, \dots, h\}$.

In particular, if $l \leq \rho < h = l + 1$, so that $l = k^*$
and $h = k^* + 1$, then $p^*_l = h - \rho = k^* + 1 - \rho$
and $p^*_h = \rho - l = \rho - k^*$, implying that the fixed point
coincides with that in~(\ref{fpjsq2}) for the flow-level JSQ scheme.
Of course, setting the parameters~$l$ and~$h$ to $\lfloor \rho \rfloor$
and $\lfloor \rho \rfloor + 1$, respectively, requires knowledge
of the value of~$\rho$, which may be difficult to obtain.
When $l$ is higher than $\lfloor \rho \rfloor$, or $h$ is lower
than $\lfloor \rho \rfloor + 1$, the invite/disinvite messages lose
their effectiveness, yielding an unbalanced flow population,
and occasional overload at individual servers.
It may thus be advantageous to set $l$ lower and $h$ somewhat higher
to reduce the risk of overload in case the estimate for~$\rho$ is
inaccurate, at the expense of variation in the number of flows at each
of the servers across a somewhat greater range.


\section{Flow-level Load Balancing with Stickiness Violation}
\label{flow-level-stickiness-violation}

In the previous section we considered two flow-level load balancing
schemes that guarantee perfect stickiness.
In this section we turn to the following three schemes which sacrifice
some flow stickiness for improvement of the packet-level delay performance:
(III) random flow assignment with load shedding;
(IV) random flow assignment with threshold-based flow transfer
to a server with an invite message;
(V) random flow assignment with threshold-based transfer to the
least-loaded server.\\

\subsubsection*{Scheme (III): random flow assignment with load shedding}

In this scheme, an arriving flow is assigned to a randomly selected server.
However, if the selected server already has $h$~flows,
the flow is immediately terminated and discarded.

The numbers of flows at the various servers are then independent,
and the number of flows at each individual server evolves as the number
of jobs in an Erlang loss system with load~$\rho$ and capacity~$h$.
Thus the number of active flows at a server in stationarity follows
a Poisson distribution with parameter~$\rho$ truncated at level~$h$.
In particular, the stickiness violation probability is the blocking probability, i.e.,
$\epsilon_h = p_h^* = f_\rho(h) \slash F_\rho(h)$,
where $f_{\rho}(\cdot)$ and $F_{\rho}(\cdot)$ denote the PMF and CDF
of a Poisson random variable with parameter~$\rho$.



We now examine the $\chi$-delay tail probability by applying~\eqref{perf3}.
In case $h < \mu / \nu$, it can be shown that
\[
\widetilde{G}_\chi^h =
\frac{\sum_{i = 0}^{h} i \frac{\rho^i}{i!} \ee^{- \chi (1 - i \nu / \mu)}}
{\sum_{i = 0}^{h} i \frac{\rho^i}{i!}} =
\frac{\ee^{- \chi (1 - \nu / \mu)} \sum_{i = 0}^{h - 1}
\frac{(\rho \ee^{\chi \nu / \mu})^i}{i!}}{\sum_{i = 0}^{h - 1} \frac{\rho^i}{i!}}.
\]
Defining $a_\chi \triangleq \rho \ee^{\chi \nu / \mu}$
and $b_\chi \triangleq \exp\{- \chi (1-\nu / \mu) + a_\chi - \rho\}$,
we can rewrite $\widetilde{G}_\chi^h$ as
\begin{equation}
\label{eq:shedding-h-small}
\widetilde{G}_\chi^h = \frac{b_\chi F_{a_\chi}(h-1)}{F_\rho(h-1)}.
\end{equation}
In case $h \geq \mu / \nu$, it can be similarly shown that
\begin{small}
\[
\widetilde{G}_\chi^h=\frac{b_\chi F_{a_\chi}\big(\lfloor\mu\slash \nu\rfloor-1\big)+F_\rho(h-1)-F_\rho\big(\lfloor\mu\slash \nu\rfloor-1\big)}{F_\rho(h-1)}.
\]
\end{small}

\noindent In particular, when $h = \infty$ (i.e., when perfect stickiness is
required), we obtain
\begin{equation}
\label{eq:shedding-h-infty}
\widetilde{G}_\chi^\infty =
b_\chi F_{a_\chi}\big(\lfloor \mu \slash \nu \rfloor - 1 \big) + 1 -
F_\rho\big(\lfloor \mu \slash \nu \rfloor - 1\big).
\end{equation}
Thus, for any $h < \mu / \nu$, a stickiness violation probability
$\epsilon_h = f_\rho(h) \slash F_\rho(h)$ improves the $\chi$-delay tail
probability by a factor of $\widetilde{G}_\chi^\infty \slash \widetilde{G}_\chi^h$,
where $\widetilde{G}_\chi^h$ and $\widetilde{G}_\chi^\infty$ are given
in~\eqref{eq:shedding-h-small} and~\eqref{eq:shedding-h-infty},
respectively.
This trade-off will be numerically examined
in Section~\ref{sec:simulation-flow-level}.
It is observed that relaxing the strict stickiness requirement by even
a minimal amount can significantly improve the packet-level latency. \\

\subsubsection*{Scheme (IV): random flow assignment with threshold-based flow transfer to a server with an invite message}

In this scheme, an arriving flow is assigned to a randomly selected server.
However, if the selected server already has $h$~flows,
then the flow (or a randomly selected flow associated with that server) is instantly diverted to an arbitrary server
with less than $l$~flows, if possible.
Otherwise, the flow is instantly redirected to an arbitrary server
with less than $h$~flows, if possible, or entirely discarded otherwise.

In order to determine the probabilities $q_{i-1}(s)$ in~(\ref{fp1}),
we need to distinguish three cases.\\

\noindent $\bullet$ \textbf{Case~(i):} $s_l < 1$.

In this case, we have
\[
\begin{split}
q_{i-1}(s) &= s_{i-1} - s_i + \frac{s_{i-1} - s_i}{1 - s_l} s_h \\
&=\frac{(s_{i-1} - s_i) (1 - s_l + s_h)}{1 - s_l},~i = 1, \dots, l,
\end{split}
\]
\[
q_{i-1}(s) = s_{i-1} - s_i,
~i = l + 1, \dots, h,
\]
and $q_{i-1}(s) = 0$ for all $i \geq h + 1$.\\

\noindent $\bullet$ \textbf{Case~(ii):} $s_l = 1$, but $s_h < 1$.

In this case, flows at servers with a total of $l$~flows complete
at rate $l (s_l - s_{l+1}) / \beta$.
Flows that are initiated at servers that have already $h$~flows,
will be dispatched to these servers before any flow can be assigned
to a server with $l$ or more flows.

We need to distinguish two sub-cases, depending on whether
$\lambda s_h$ is larger than $l (1 - s_{l+1}) / \beta$ or not.
If $\lambda s_h \leq l (1 - s_{l+1}) / \beta$, i.e.,
$\rho s_h \leq l (1 - s_{l+1})$, then
then $q_{l-1}(s) = s_h$, $q_{i-1}(s) = s_{i-1} - s_i$ for all
$i = l + 1, \dots, h$, and $q_{i-1}(s) = 0$ for all
$i = 1, \dots, l - 1$ and $i \geq h + 1$.
If $\rho s_h > l (s_l - s_{l+1})$, then
\[
\lambda q_{l-1}(s) = l (1 - s_{l+1}) / \beta, \mbox{ i.e.,}
\rho q_{l-1}(s) = l (1 - s_{l+1}),
\]
and
\[
\lambda q_{i-1}(s) =
\lambda (s_{i-1} - s_i) + \tilde\lambda \tilde{q}_{i-1}(s),
~i = l + 1, \dots, h,
\]
with $\tilde\lambda = \lambda s_h - l (1 - s_{l+1}) / \beta$,
\[
\tilde{q}_{i-1}(s) = \frac{s_{i-1} - s_i}{1 - s_h},
~i = l + 1, \dots, h,
\]
and $\tilde{q}_{i-1}(s) = 0$ for all $i = 1, \dots, l - 1$ and $i \geq h + 1$.

In particular, in case $s_{l+1} = 1$, but $s_h < 1$, then
\[
\begin{split}
q_{i-1}(s) &= s_{i-1} - s_i + \frac{s_{i-1} - s_i}{1 - s_h} s_h\\
& =\frac{s_{i-1} - s_i}{1 - s_h},
~i = l + 1, \dots, h,
\end{split}
\]
and $q_{i-1}(s) = 0$ for all $i = 1, \dots, l$ and $i \geq h + 1$.\\

\noindent $\bullet$ \textbf{Case~(iii):} $s_h = 1$.

In this case, flows at servers with a total of $h$~flows complete
at rate $h (1 - s_{h+1}) / \beta$.
Arriving flows will be dispatched to these servers before any flow
can be assigned to a server with $h$ or more flows.

We need to distinguish two sub-cases, depending on whether
$\lambda$ is larger than $h (1 - s_{h+1}) / \beta$ or not.
If $\lambda \leq h (1 - s_{h+1}) / \beta$, i.e.,
$\rho \leq h (1 - s_{h+1})$, then $q_{h-1}(s) = 1$
and $q_{i-1}(s) = 0$ for all $i \neq h$.
If $\rho > h (1 - s_{h+1})$, then
\[
\lambda q_{h-1}(s) = h (1 - s_{h+1}) / \beta, \mbox{ i.e.,}
\rho q_{h-1}(s) = h (1 - s_{h+1}),
\]
and
\[
\lambda q_{i-1}(s) = \tilde\lambda \tilde{q}_{i-1}(s),
~i \geq h + 1,
\]
with $\tilde\lambda = \lambda - h (1 - s_{h+1}) / \beta$,
\[
\tilde{q}_{i-1}(s) = s_{i-1} - s_i,
~i \geq h + 1,
\]
and $\tilde{q}_{i-1}(s) = 0$ for all $i \leq h$.

In particular, in case $s_{h+1} = 1$, then
\[
q_{i-1}(s) = s_{i-1} - s_i,
~i \geq h + 1,
\]
and $q_{i-1}(s) = 0$ for all $i \leq h$.\\

Having determined the expressions for $q_{i-1}(s)$, we can solve the
fixed point(s) of Equations (\ref{fp1}) explicitly.

\begin{theorem}\label{thm:scheme-IV}
Assume $l \leq \rho < h$.
The unique fixed point under Scheme (IV) is given by
\begin{equation}\label{eq:fix-iv}
p_i^*  =
\begin{cases}
0, & i<l~\text{or}~i>h \cr
\frac{f_{\sigma}(i)}{F_{\sigma}(h)-F_{\sigma}(l-1)}, & i = l, \dots, h,\cr
\end{cases}
\end{equation}
where $\sigma$ is the unique root of the equation
\[
\sigma [1 - p^*_h(\sigma)]+lp^*_l(\sigma) = \rho.
\]
\end{theorem}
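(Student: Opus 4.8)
The plan is to solve the fixed-point equations~\eqref{fp2}, $\rho q_{i-1}(s) = i p_i$, using the scheme-specific expressions for $q_{i-1}(s)$ derived just above the statement, and to show that every fixed point with $l \le \rho < h$ must lie in Case~(ii) (i.e.\ $s_l = 1$ and $s_h < 1$), where the equations collapse to a truncated-Poisson recursion. A useful guiding observation is that the claimed fixed point~\eqref{eq:fix-iv} is \emph{identical} to that of the pull-based scheme in Theorem~\ref{thm:pull}; I therefore expect most of the computation to parallel the proof in Appendix~\ref{ap:II}, the genuinely new ingredient being to verify that the transfer/shedding dynamics of Scheme~(IV) confine the flow population to the same band $\{l,\dots,h\}$ and induce the same reduced recursion.

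First I would locate the fixed point among the three cases by means of the conservation identity obtained from summing~\eqref{fp2} over all $i \ge 1$: the mean number of flows per server equals $\rho \sum_{j \ge 0} q_j(s)$, and since $\sum_{j} q_j(s) \le 1$ always, the mean is at most $\rho$. A case inspection shows that discarding occurs only in Case~(iii), so $\sum_j q_j(s) = 1$ (hence mean $=\rho$) in Cases~(i)--(ii). Case~(iii) ($s_h = 1$) forces every server to hold at least $h$ flows, so the mean is at least $h > \rho$, contradicting mean $\le \rho$; this rules out Case~(iii).

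Ruling out Case~(i) is the step I expect to be the main obstacle, since the mean alone does not separate it from Case~(ii). Here $s_l < 1$, and substituting the Case~(i) expressions into~\eqref{fp2} yields $p_i/p_{i-1} = \rho c/i$ for $i \le l$ and $p_i/p_{i-1} = \rho/i$ for $l < i \le h$, with $c = (1 - s_l + s_h)/(1 - s_l)$; feasibility demands $c \ge 1$, equivalently $s_h \ge 0$. I would then close the self-consistency loop: express $s_l$ and $s_h$ through the resulting Poisson-type weights together with the normalization $\sum_{i=0}^{h} p_i = 1$, and show that the only solution of the defining relation $c(1 - s_l) = 1 - s_l + s_h$ has $c < 1$, i.e.\ $s_h < 0$, which is infeasible. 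For a narrow band this reduces to a low-degree polynomial in $c$ whose admissible root lies strictly below $1$; the general case requires a monotonicity argument that the reconstructed $c$ never reaches $1$ when $\rho \ge l$. Consequently no fixed point lies in Case~(i), leaving only Case~(ii) with $s_l = 1$ and $s_h < 1$, which immediately gives $p_i = 0$ for $i < l$ and, since $q_{i-1}(s) = 0$ for $i \ge h+1$ there, also $p_i = 0$ for $i > h$.

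Within Case~(ii) the derivation is then routine and mirrors Appendix~\ref{ap:II}. Because $p_l > 0$ we have $s_{l+1} < 1$, and the relevant sub-case is $\rho s_h > l(s_l - s_{l+1})$ (which holds since $\rho p_h > l p_l$ for the claimed solution); substituting the corresponding $q_{i-1}(s)$ into~\eqref{fp2} and simplifying with $1 - s_{l+1} = p_l$ and $s_h = p_h$, the equations for $i = l+1,\dots,h$ reduce to $p_i/p_{i-1} = \sigma/i$ with $\sigma = (\rho - l p_l)/(1 - p_h)$, exactly the Poisson ratio, while the $i = l$ equation is the boundary identity $\rho q_{l-1} = l p_l$. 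Solving this recursion on the support $\{l,\dots,h\}$ with normalization produces the truncated Poisson~\eqref{eq:fix-iv}, and the definition of $\sigma$ rearranges precisely to $\sigma[1 - p^*_h(\sigma)] + l p^*_l(\sigma) = \rho$. Uniqueness then follows exactly as in Theorem~\ref{thm:pull}: the map $\sigma \mapsto \sigma[1 - p^*_h(\sigma)] + l p^*_l(\sigma)$ is continuous and strictly increasing, with limit $l$ as $\sigma \downarrow 0$ and limit $h$ as $\sigma \to \infty$, so it has a unique root for every $\rho \in [l,h)$, pinning down a single fixed point.
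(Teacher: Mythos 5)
Your reduction to Case~(ii) is where the proposal breaks down, and unfortunately the step you yourself flag as ``the main obstacle'' is not merely hard --- the claim it requires is false: Case~(i) fixed points \emph{do} exist in part of the range $l \le \rho < h$. Concretely, take $l=1$, $h=2$, $\rho = 6/5$. The vector $(p_0^*,p_1^*,p_2^*) = (7/55,\,30/55,\,18/55)$ satisfies every fixed-point equation $\rho\, q_{i-1}(s) = i p_i$ with the Case~(i) probabilities: here $1-s_1 = 7/55$, $s_2 = 18/55$, so $c = (1-s_1+s_2)/(1-s_1) = 25/7 > 1$, and one checks $\rho q_0 = \tfrac{6}{5}\cdot\tfrac{5}{11} = p_1$ and $\rho q_1 = \tfrac{6}{5}\cdot\tfrac{6}{11} = 2p_2$. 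In general, your self-consistency function $\Phi(c) = (c-1)\sum_{i=0}^{l-1}\frac{(c\rho)^i}{i!} - \frac{(c\rho)^l\rho^{h-l}}{h!}$ satisfies $\Phi(1) = -\rho^h/h! < 0$ and $\Phi(c) \sim c^l\bigl[\frac{\rho^{l-1}}{(l-1)!} - \frac{\rho^h}{h!}\bigr]$ as $c\to\infty$, so a root $c^* > 1$ exists whenever $\rho^{h-l+1} < h!/(l-1)!$, i.e.\ whenever $\rho$ lies below the geometric mean of $\{l,\dots,h\}$ --- a condition perfectly compatible with $\rho \ge l$. The same defect resurfaces in your Case~(ii) step: the sub-case condition $\rho p_h^* > l p_l^*$, which you assert ``holds for the claimed solution,'' fails in exactly this regime (for $l=1,h=2$ it is equivalent to $\rho \ge \sqrt{2}$), and when it fails the truncated-Poisson point falls into the \emph{first} sub-case, where the balance equation at level~$l$ reads $\rho s_h = l p_l$ and is violated ($\tfrac{6}{5}\cdot 0.2 \ne 0.8$ in the example) --- so the claimed point is then not a fixed point at all.

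Thus no argument can close the gap you identified; the theorem as stated needs an extra hypothesis such as $\rho\, p_h^*(\sigma) \ge l\, p_l^*(\sigma)$ at the truncated-Poisson solution. For comparison, the paper's own proof (Appendix~\ref{ap:IV}) never attempts your case elimination: it splits on the value of $\rho$, simply declares that $l \le \rho < h$ ``corresponds to case~(ii),'' writes down the second sub-case equations, and observes that they coincide with Equations~(\ref{piII1}) and~(\ref{sigmaiiII}) for Scheme~(II), inheriting existence and uniqueness of $\sigma$ from the modified-Erlang monotonicity argument of Appendix~\ref{ap:II}. In other words, you correctly isolated the soft spot that the paper glosses over --- your Case~(iii) exclusion via the conservation identity is sound, and your Case~(ii) recursion and uniqueness argument match the paper's --- but the repair you propose cannot work, because the statement it needs is contradicted by the example above.
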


\begin{proof}
See Appendix \ref{ap:IV}.
\end{proof}

Note that the fixed point \eqref{eq:fix-iv} is identical to the fixed
point in Scheme (II), yet the proof is somewhat different.
The above fixed point shows that the fraction of servers with less than
$l$~flows or more than $h$~flows is negligible in the mean-field limit.
In other words, through the threshold-based flow transfer,
the variation in the number of flows at each of the servers is
effectively limited to the range $\{l, \dots, h\}$.
The trade-off between the stickiness violation probability
and packet-level latency can be analyzed in a similar way as for
Scheme (III) by substituting $p_i^*$ into~\eqref{perf3} and noticing
that the stickiness violation probability is $p_h^*$. \\

\subsubsection*{Scheme (V): random flow assignment with threshold-based transfer to the least-loaded server}

In this scheme, an arriving flow is assigned to a randomly selected
server.
However, if the selected server already has $h$~flows, then the flow
(or a randomly selected flow associated with that server) is
instantly diverted to a server with the minimum number of flows.


Let $m = \min\{i: s_{i + 1} < 1\}$ be the minimum number of flows
across all servers, in the sense that the fraction of servers
with less than $m$~flows is negligible, while the fraction
of servers with exactly $m$~active flows is strictly positive.
Note that $s_m = 1$ by definition.

In order to determine the probabilities $q_{i-1}(s)$ in~(\ref{fp1}),
we need to distinguish two cases.\\

\noindent $\bullet$ \textbf{Case~(i):} $m < h$, and thus $s_h < 1$.

In this case, flows at servers with a total of $m$~flows complete
at rate $m (s_m - s_{m+1}) / \beta$.
Flows that are initiated at servers that have already $h$~flows,
will be dispatched to these servers before any flow can be assigned
to a server with $m$~flows.

We need to distinguish two sub-cases, depending on whether
$\lambda s_h$ is larger than $m (1 - s_{m+1}) / \beta$ or not.
If $\lambda s_h \leq m (1 - s_{m+1}) / \beta$, i.e.,
$\rho s_h \leq m (1 - s_{m+1})$, then $q_{m-1}(s) = s_h$,
$q_{i-1}(s) = s_{i-1} - s_i$ for all $i = m + 1, \dots, h$,
and $q_{i-1}(s) = 0$ for all $i = 1, \dots, m - 1$ and $i \geq h + 1$.
If $\rho s_h > m (1 - s_{m+1})$, then
\[
\rho q_{m-1}(s) = m (1 - s_{m+1}), \mbox{ i.e.,}
\lambda q_{m-1}(s) = m (1 - s_{m+1}) / \beta,
\]
amd
\[
\begin{split}
\lambda q_m(s) &= \lambda (1 - s_{m+1}) + \tilde\lambda \\
&=
\lambda s_h + (\lambda - m / \beta) (1 - s_{m+1}),
\end{split}
\]
with $\tilde\lambda = \lambda s_h - m (1 - s_{m+1}) / \beta$,
\[
q_{i-1}(s) = s_{i-1} - s_i,
\hspace*{.4in} i = m + 2, \dots, h,
\]
and $q_{i-1}(s) = 0$ for all $i = 1, \dots, m - 1$ and $i \geq h + 1$.\\

\noindent $\bullet$ \textbf{Case~(i):} $m \geq h$, and thus $s_h = 1$.

Like in the previous case, flows at servers with a total of $m$~flows
complete at rate $m (1 - s_{m+1})$.
Arriving flows will be dispatched to these servers before any flow
can be assigned to a server with $m$~flows.

We need to distinguish two sub-cases, depending on whether
$\lambda$ is larger than $m (1 - s_{m+1}) / \beta$ or not.
If $\lambda \leq m (1 - s_{m+1}) / \beta$, i.e.,
$\rho \leq m (1 - s_{m+1})$, then $q_{m-1}(s) = 1$,
and $q_{i-1}(s) = 0$ for all $i \neq m$.
If $\rho > m (1 - s_{m+1})$, then
\[
\rho q_{m-1}(s) = m (1 - s_{m+1}), \mbox{ i.e.,}
\lambda q_{m-1}(s) = m (1 - s_{m+1}) / \beta,
\]
and
\[
\lambda q_m(s) = \tilde\lambda = \lambda - m (1 - s_{m+1}) / \beta, 
\]
and $q_{i-1}(s) = 0$ for all $i \neq m, m + 1$.\\

Having determined the expressions for $q_{i-1}(s)$, we can solve the
fixed point(s) of Equations (\ref{fp1}) explicitly.

\begin{theorem}\label{thm:scheme-V}
Assume $\rho < h$. Then the fixed point under Scheme (V) is given by
\begin{equation}\label{eq:fix-v}
p_i^* =
\begin{cases}
0, & i<i^*~\text{or}~i>h \cr
\frac{\rho p^*_h}{\rho - i^*} \left[\frac{h! \rho^{i^* - 1}}{(i^* - 1)! \rho^h}
\frac{\rho}{i^*} - 1\right], &i=i^* \cr
\frac{h!}{\rho^{h - i} i!} p^*_h, & i = i^* + 1, \cdots, h,\cr
\end{cases}
\end{equation}
with
\[
p_h^* = \left[\frac{\rho}{\rho - i^*}
\left[\frac{h! \rho^{i^* - 1}}{(i^* - 1)! \rho^h} \frac{\rho}{i^*} - 1\right] +
\sum_{i = i^* + 1}^{h} \frac{h!}{\rho^{h - i} i!}\right]^{- 1},
\]
and $i^* = \min\{i: \frac{\rho^i}{i!} > \frac{\rho^h}{h!}\}$.
\end{theorem}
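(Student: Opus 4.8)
The plan is to solve the fixed-point equations~\eqref{fp2} directly from the transition probabilities $q_{i-1}(s)$ derived above for Scheme~(V). Let $m = \min\{i : s_{i+1} < 1\}$ be the minimum occupied level at a fixed point, so that $s_m = 1$ and $p_i^* = 0$ for $i < m$. Summing~\eqref{fp1} over all $i$ gives $\sum_{i \ge 1} i\,p_i^* = \rho$, so $m \le \rho$ and hence $m \le \lfloor\rho\rfloor < h$; thus only Case~(i) of the derivation is relevant. Since there $q_{i-1}(s) = 0$ for $i \ge h+1$, \eqref{fp2} immediately forces $p_i^* = 0$ for $i > h$, so the support lies in $\{m,\dots,h\}$ and $s_h = p_h^*$, $1 - s_{m+1} = p_m^*$. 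I would next argue that a genuine fixed point must fall in the active sub-case $\rho s_h > m(1 - s_{m+1})$, i.e.\ $\rho p_h^* > m p_m^*$: this is precisely the regime in which the flows redirected from saturated servers (arriving at rate $\lambda s_h$) suffice to refill servers draining below level~$m$, keeping the sub-$m$ fraction negligible, whereas in the complementary sub-case the $i=m$ equation degenerates into the boundary identity $\rho p_h^* = m p_m^*$, tenable only for non-generic $\rho$.

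Within this regime the solution follows from a two-tier recursion. For $i = m+2,\dots,h$ one has $q_{i-1}(s) = s_{i-1}-s_i = p_{i-1}^*$, so \eqref{fp2} reads $\rho p_{i-1}^* = i\,p_i^*$, i.e.\ $p_i^* = \tfrac{\rho}{i}p_{i-1}^*$; unrolling downward from $i = h$ gives $p_i^* = \tfrac{h!}{\rho^{h-i}\,i!}\,p_h^*$ for every $i = m+1,\dots,h$, which is the claimed middle branch once $m=i^*$ is established. The boundary level is handled separately: inserting the expression for $q_m(s)$ into \eqref{fp2} at $i = m+1$ yields $\rho p_h^* + (\rho - m)p_m^* = (m+1)p_{m+1}^*$, and substituting $p_{m+1}^*$ from the recursion produces $p_m^* = \tfrac{\rho p_h^*}{\rho - m}\big(\tfrac{h!}{\rho^{h-m}m!} - 1\big)$, which matches the stated $p_{i^*}^*$ after writing $\tfrac{h!}{\rho^{h-m}m!} = \tfrac{h!\,\rho^{m-1}}{(m-1)!\,\rho^h}\tfrac{\rho}{m}$. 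Normalizing through $\sum_{i=m}^h p_i^* = 1$ then fixes $p_h^*$ and reproduces the displayed formula.

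The crux --- and the step I expect to be the main obstacle --- is pinning the minimum level to $m = i^*$, since the recursion alone is consistent for a range of candidate values of~$m$. I would close this with two opposing inequalities. Nonnegativity $p_m^* \ge 0$ together with $\rho - m > 0$ forces $\tfrac{h!}{\rho^{h-m}m!} \ge 1$, i.e.\ $\tfrac{\rho^m}{m!} \ge \tfrac{\rho^h}{h!}$, which by the definition of $i^*$ excludes every $m < i^*$; hence $m \ge i^*$. Conversely the feasibility condition $\rho p_h^* \ge m p_m^*$ identified above, after substituting the closed forms and using $m\,\rho^m/m! = \rho\cdot\rho^{m-1}/(m-1)!$, simplifies to $\tfrac{\rho^{m-1}}{(m-1)!} \le \tfrac{\rho^h}{h!}$; since $m \le \lfloor\rho\rfloor$ places $m-1$ below the mode of the unimodal weights $\rho^i/i!$, monotonicity there together with $\tfrac{\rho^{i^*}}{i^*!} > \tfrac{\rho^h}{h!}$ forces $m \le i^*$. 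The two bounds give $m = i^*$. Finally I would record that $\rho - i^* > 0$, so the formula is well defined, by noting that the super-level set $\{i : \rho^i/i! > \rho^h/h!\}$ is an interval containing the mode, with left endpoint $i^* \le \lfloor\rho\rfloor < \rho$.
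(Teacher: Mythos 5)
Your proposal is correct and follows essentially the same route as the paper's own proof: unroll the recursion $\rho p_{i-1}^* = i p_i^*$ down from $i=h$, use the boundary equation $\rho p_h^* + (\rho - i^*)p_{i^*}^* = (i^*+1)p_{i^*+1}^*$ to extract $p_{i^*}^*$, normalize, and pin down the minimum level via the two inequalities $p_m^* > 0$ (giving $\frac{\rho^m}{m!} > \frac{\rho^h}{h!}$) and nonnegativity of the redirected rate $\tilde\lambda$, i.e.\ $\rho p_h^* \geq m p_m^*$ (giving $\frac{\rho^{m-1}}{(m-1)!} \leq \frac{\rho^h}{h!}$). Your write-up is in fact slightly more careful than the paper's at the edges — making explicit the unimodality of the weights $\rho^i/i!$, the exclusion of the degenerate sub-case $\rho p_h^* \le m p_m^*$, and the well-definedness $\rho - i^* > 0$ — but these are refinements of, not departures from, the same argument.
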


\begin{proof}
See Appendix \ref{ap:V}.
\end{proof}

The fixed point \eqref{eq:fix-v} shows that the fraction of servers
with less than $i^*$~flows or more than $h$~flows is negligible
in the mean-field limit.
In other words, through the threshold-based load transfer,
the variation in the number of flows at each of the servers is
effectively limited to the range $\{i^*, \dots, h\}$.
The trade-off between the stickiness violation probability
and packet-level latency can be analyzed in a similar way as for
Scheme (III) by substituting $p_i^*$ into~\eqref{perf3} and noticing
that the stickiness violation probability is~$p_h^*$.

\section{Numerical evaluation}
\label{sec:simulation-flow-level}

In this section we numerically evaluate the performance of the various
flow-level load balancing schemes considered in the previous two sections.

\subsection{Simulation settings}

\begin{figure}[]
\subfigure[flow-level JSQ policy]{\label{fig:perfect1-1}\includegraphics[width=42mm,height=36mm]{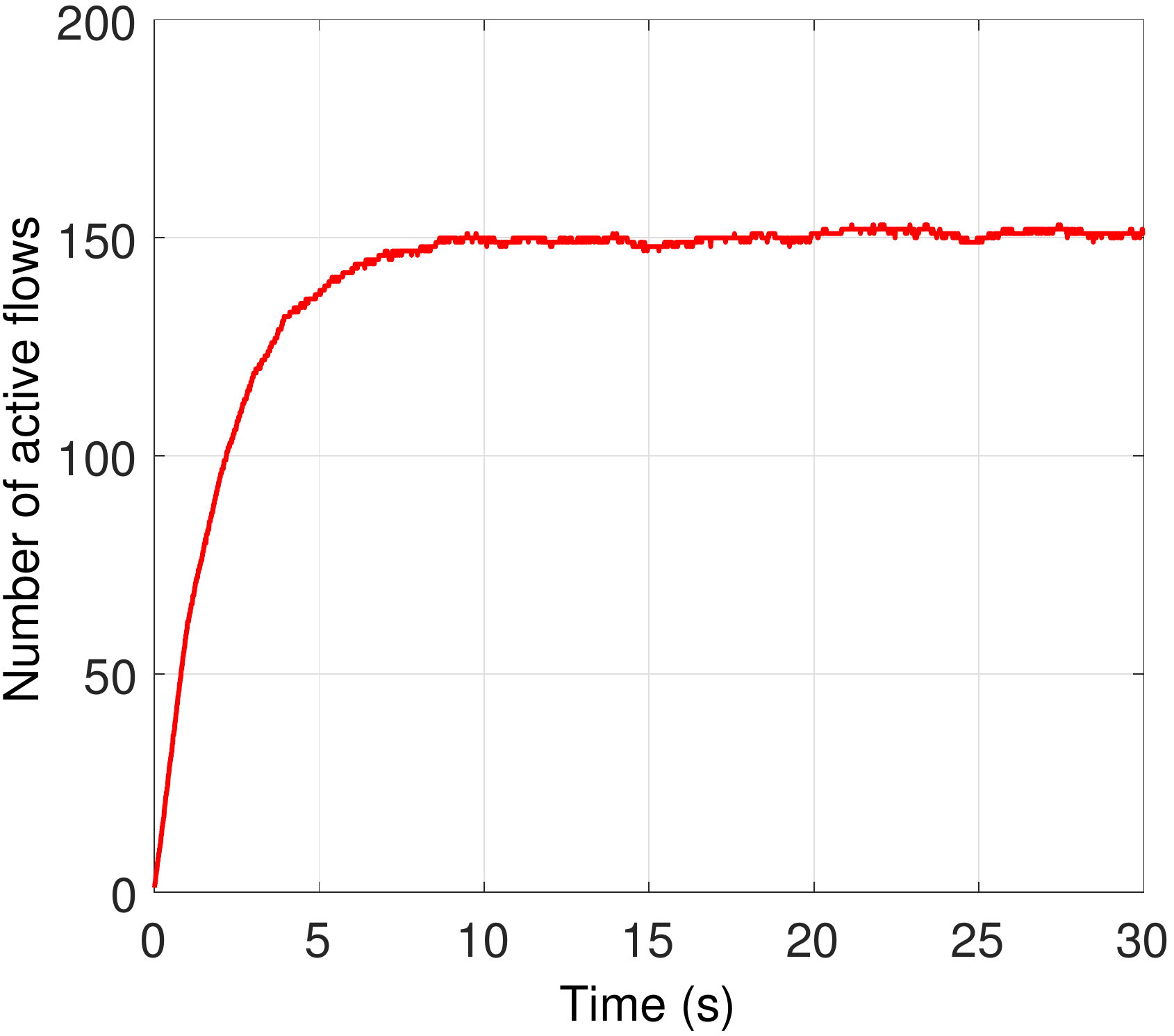}\includegraphics[width=42mm,height=35mm]{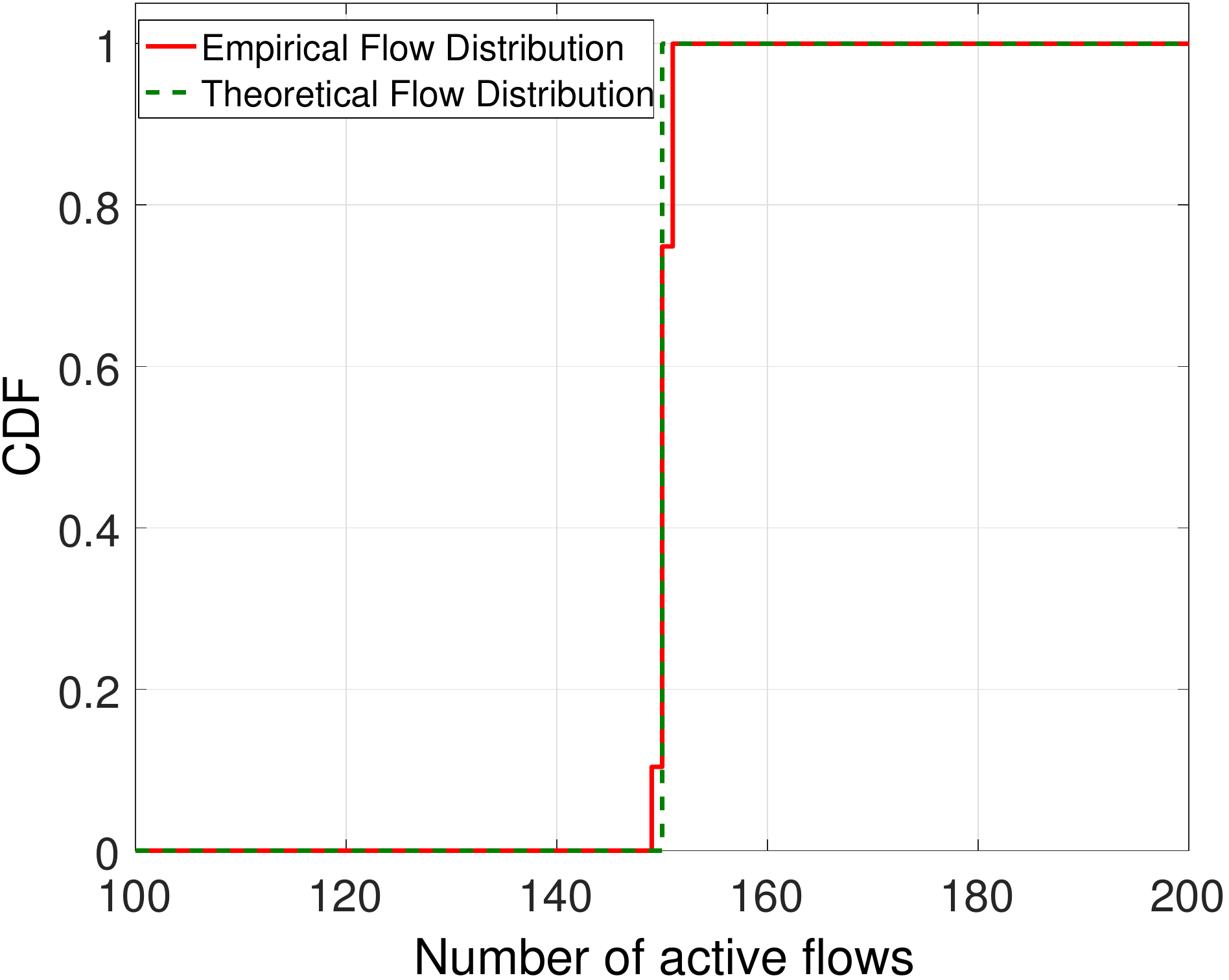}}
\subfigure[flow-level power-of-$d$ policy ($d=2$)]{\label{fig:power-1}\includegraphics[width=42mm,height=36mm]{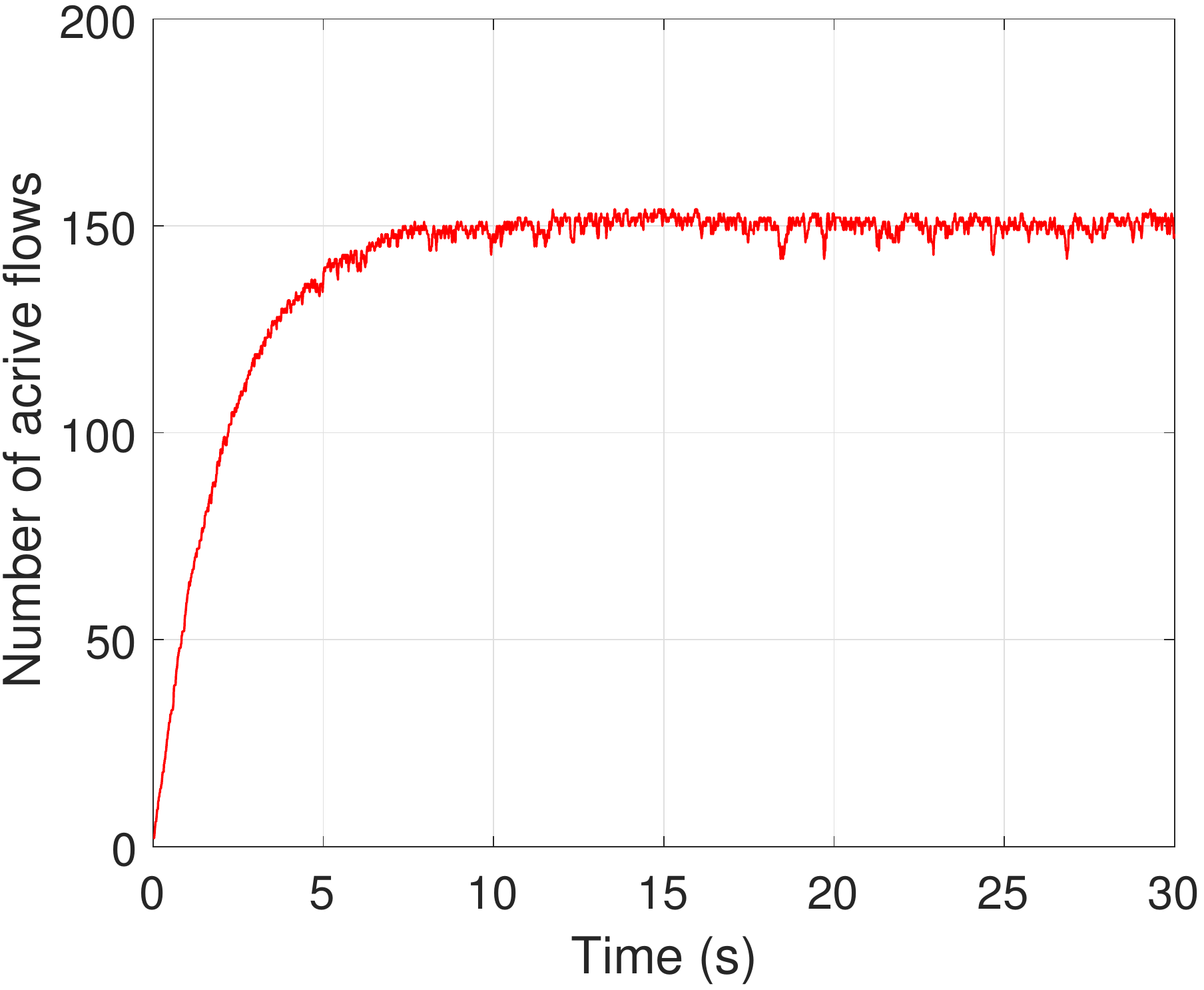}\includegraphics[width=42mm,height=35mm]{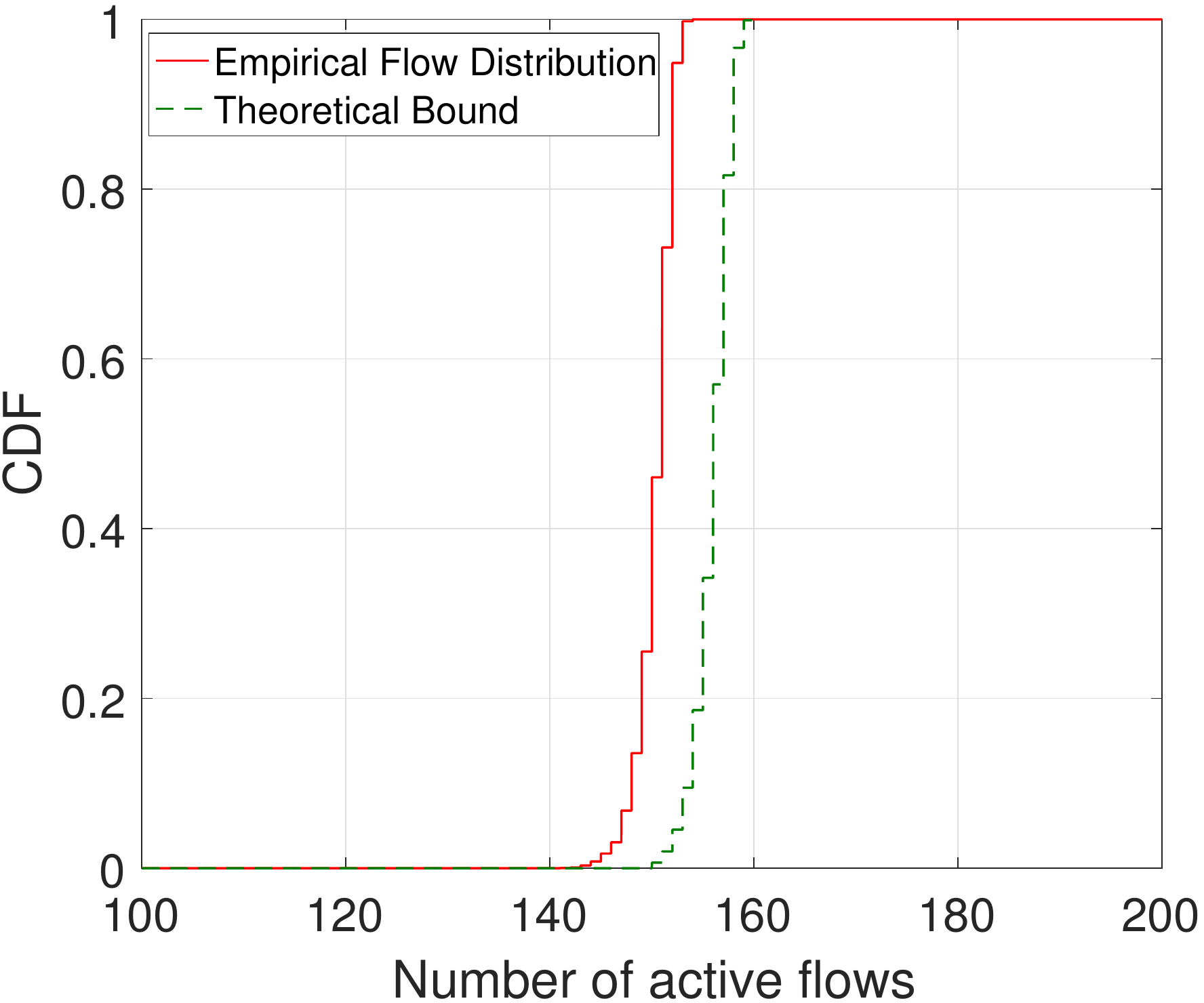}}
\subfigure[$l=0,~h=\infty$ (flow-level randomized load balancing)]{\label{fig:perfect1-2}\includegraphics[width=42mm,height=36mm]{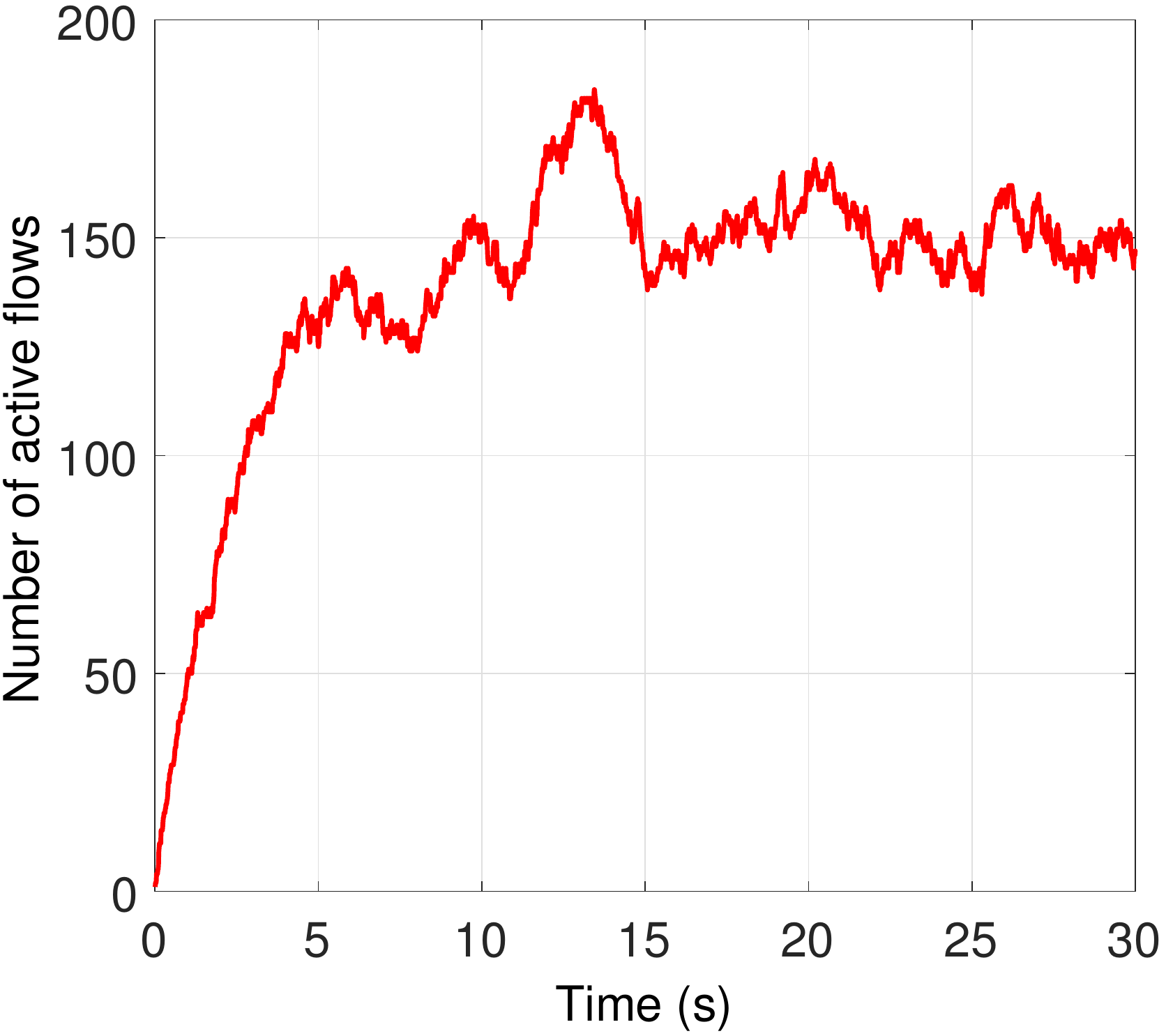}\includegraphics[width=42mm,height=35mm]{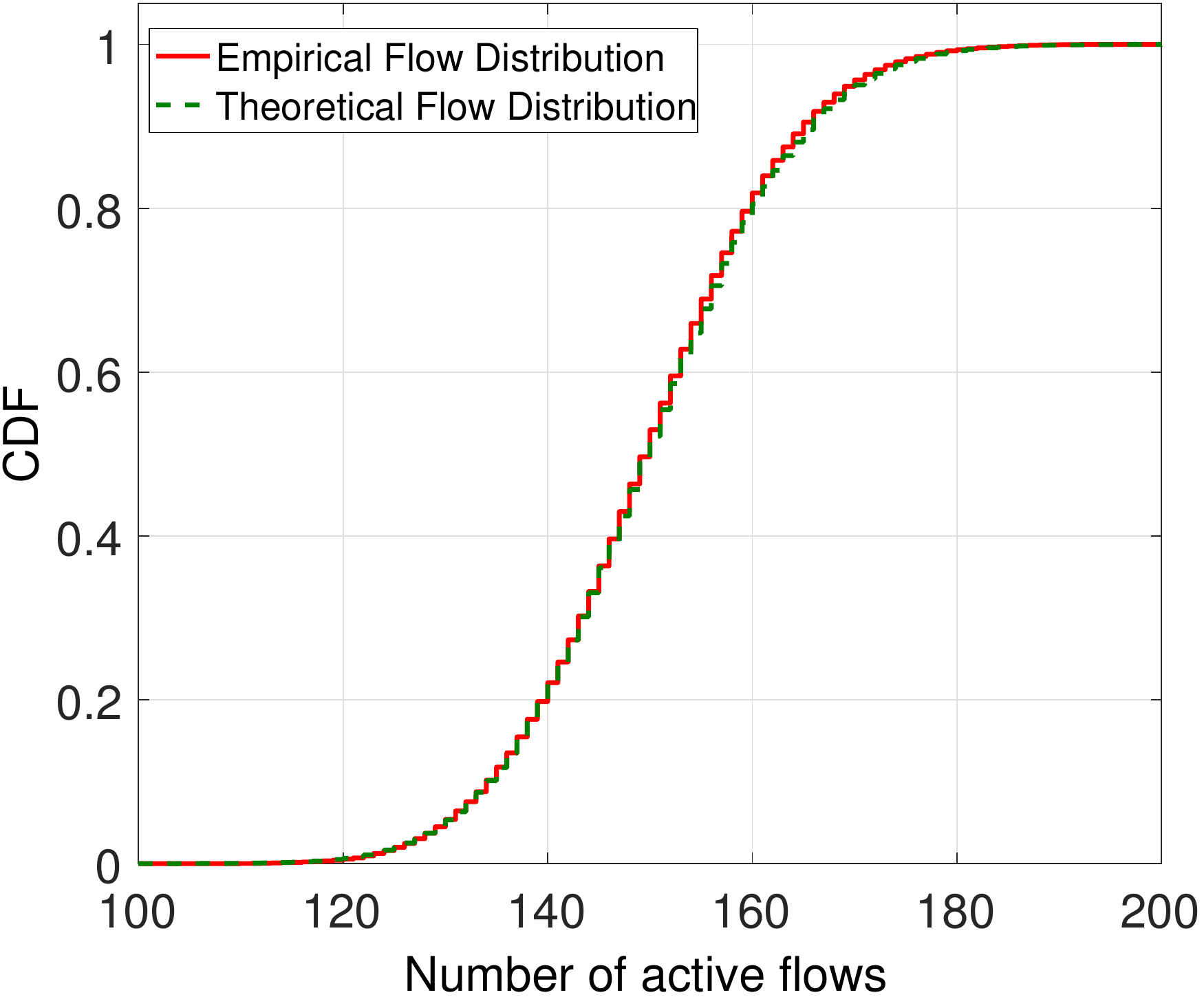}}
\subfigure[$l=k^*=150,~h=k^*+1=151$]{\label{fig:perfect1-3}\includegraphics[width=42mm,height=36mm]{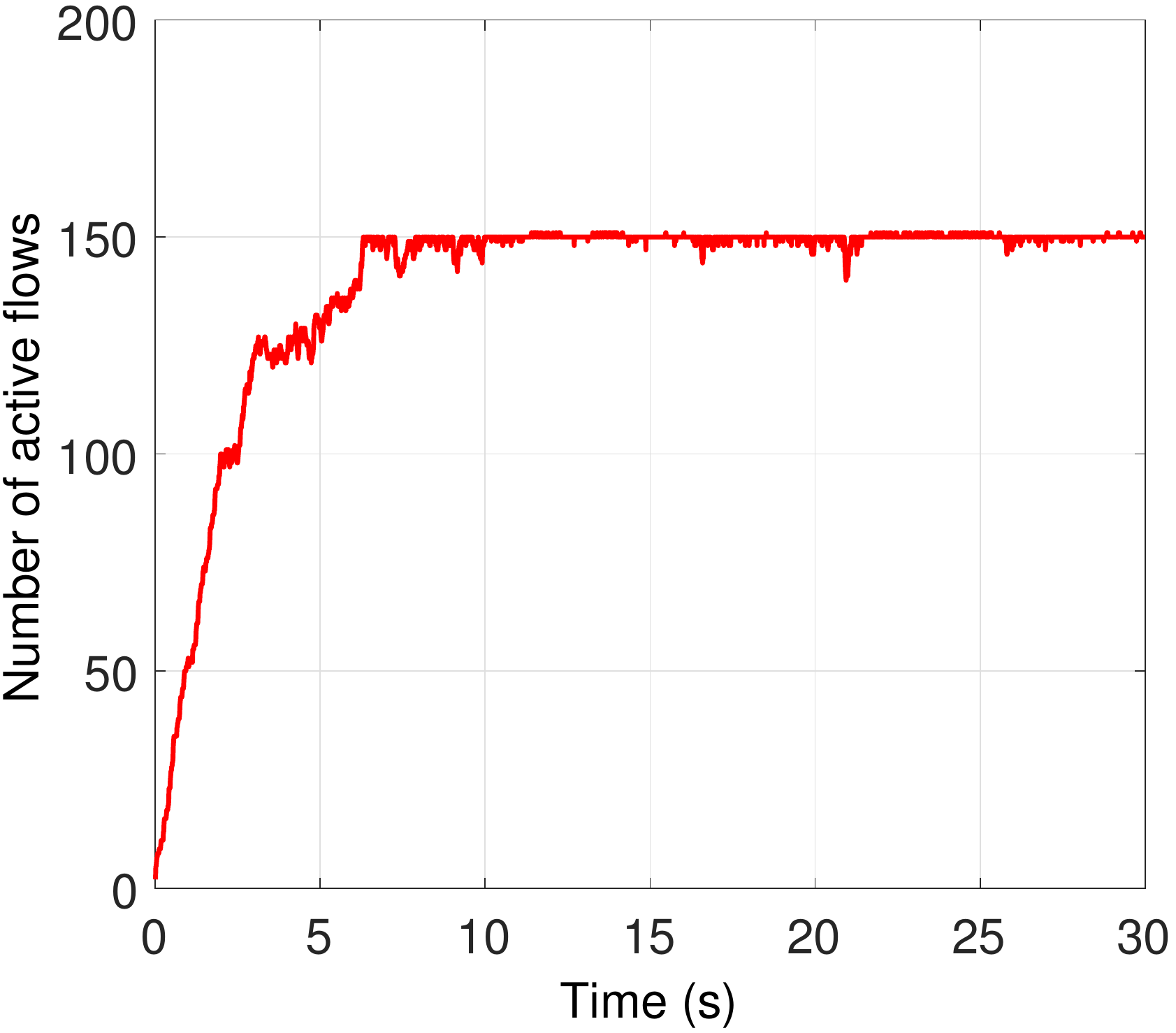}\includegraphics[width=42mm,height=34mm]{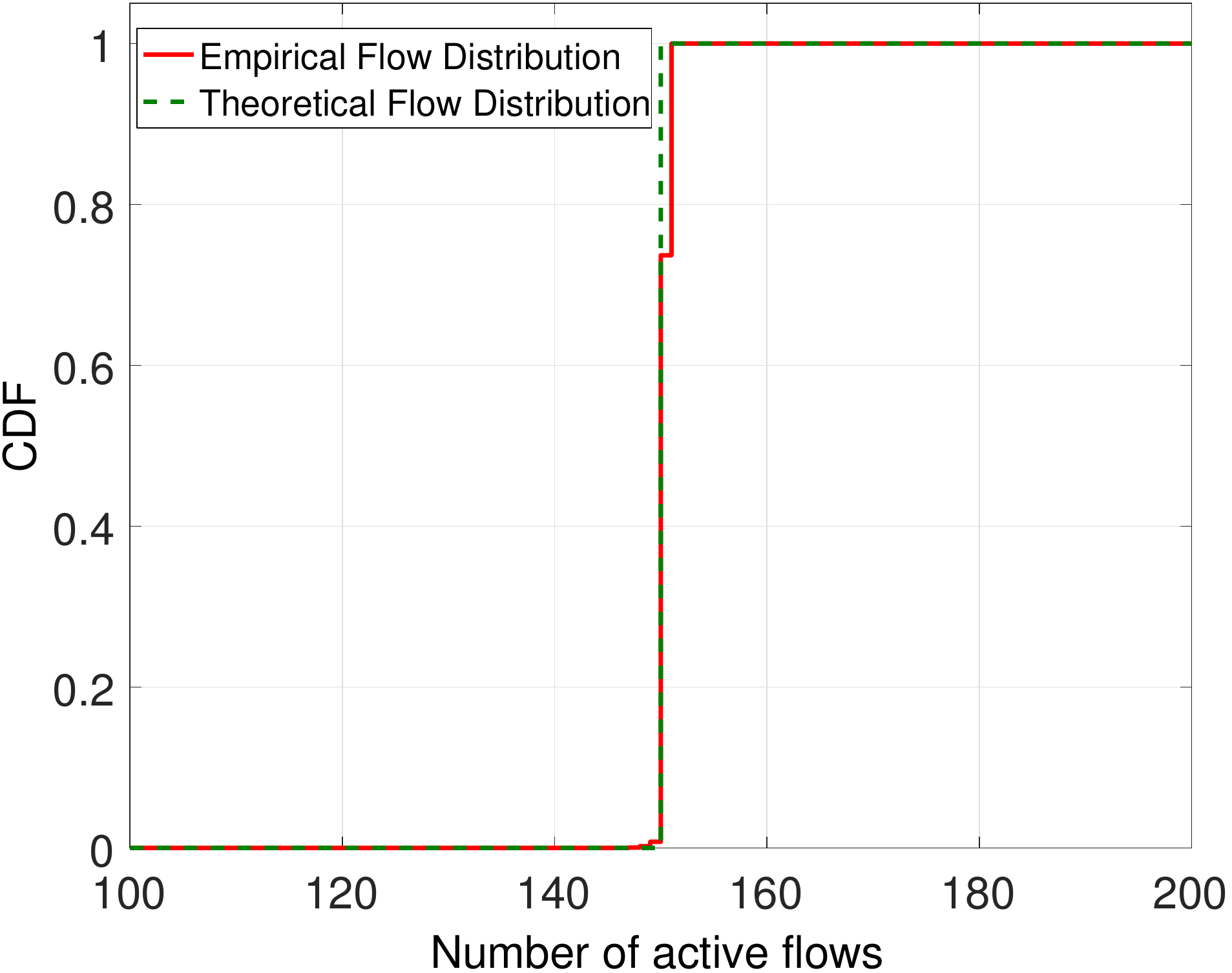}}
\subfigure[$l=k^*-10=140,~h=k^*+10=160$]{\label{fig:perfect1-4}\includegraphics[width=42mm,height=36mm]{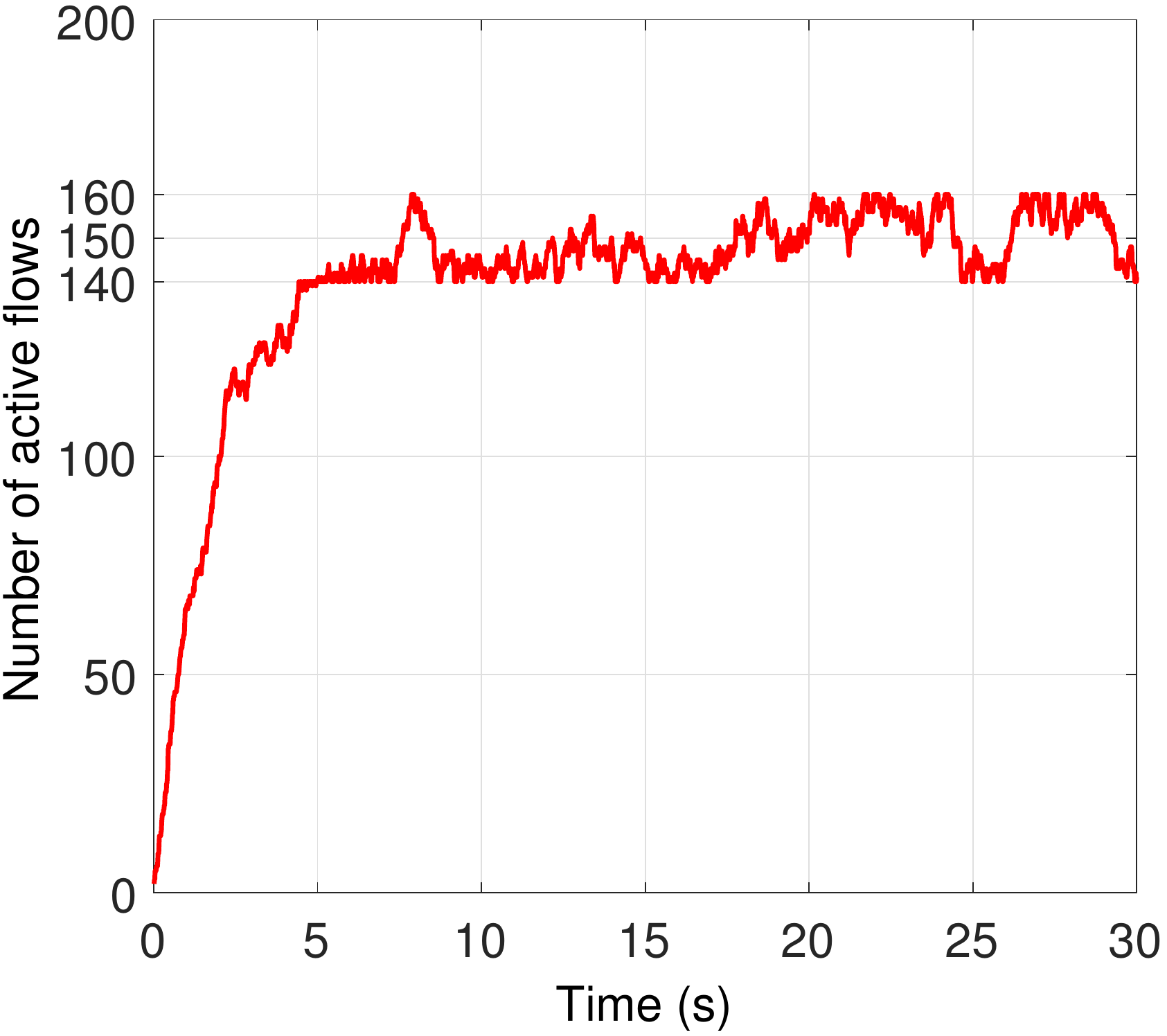}\includegraphics[width=42mm,height=35mm]{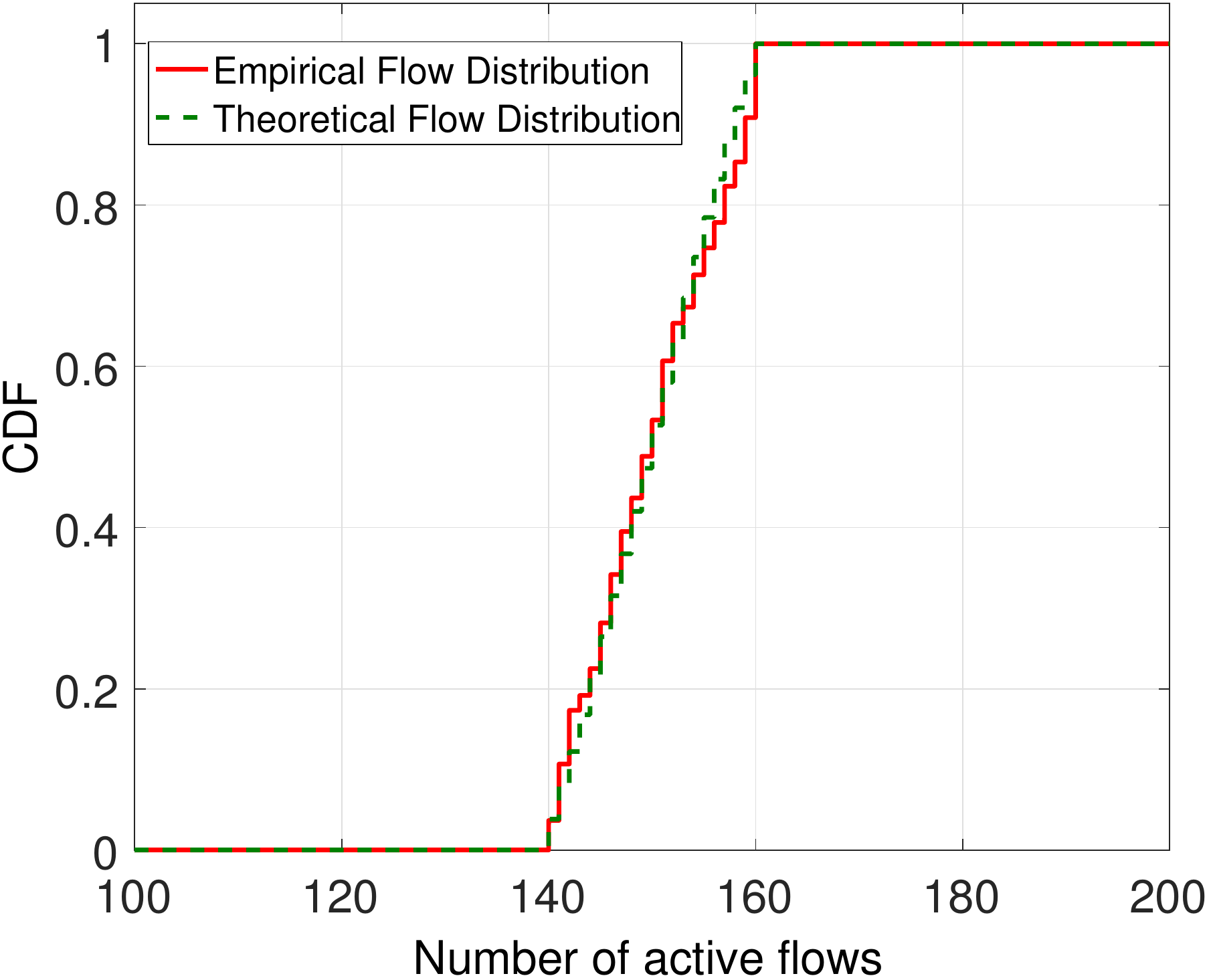}}\vspace{-2mm}
\caption{Variation in the number of active flows at a particular server
over time (left) and empirical vs.~theoretical flow distributions (right).
(a): flow-level power-of-$d$ policy;
(b): flow-level JSQ policy
(c)--(e): flow-level pull-based scheme for various threshold values~$l$
and~$h$.}
\label{fig:perfect1}
\end{figure}

\begin{figure}[]
\begin{center}
\includegraphics[width=3in]{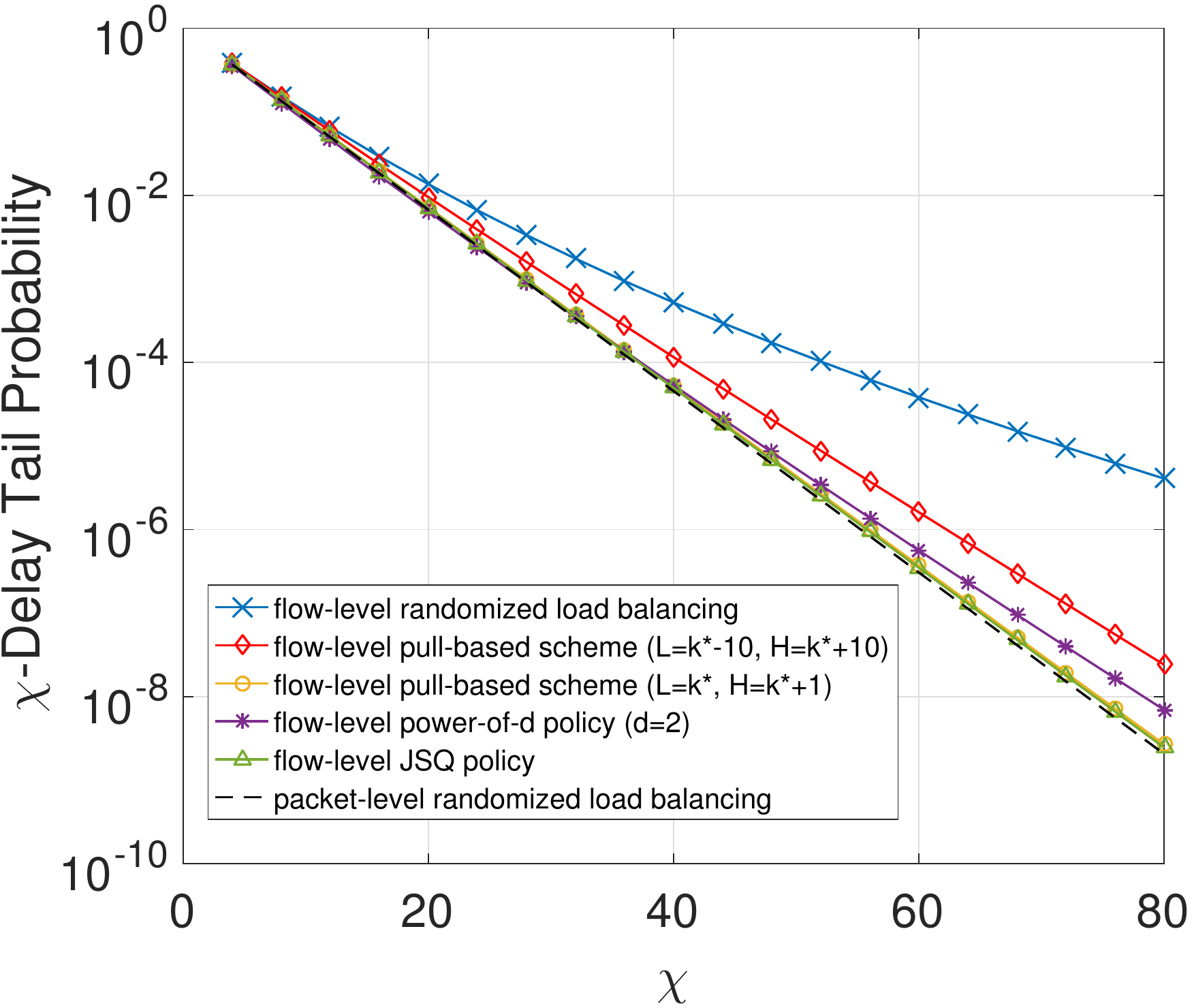}\vspace{-2mm}
\caption{Packet-level $\chi$-delay tail probability under various load
balancing schemes:
(1) Flow-level power-of-$d$ policy (and flow-level JSQ policy);
(2) Flow-level pull-based scheme;
(3) Packet-level randomized load balancing.}
\label{fig:perfect3}
\end{center}
\end{figure}

\begin{figure}[]
\subfigure[Scheme (III) with $h=160$]{\label{fig:sacrifice1-1}\includegraphics[width=42mm,height=35mm]{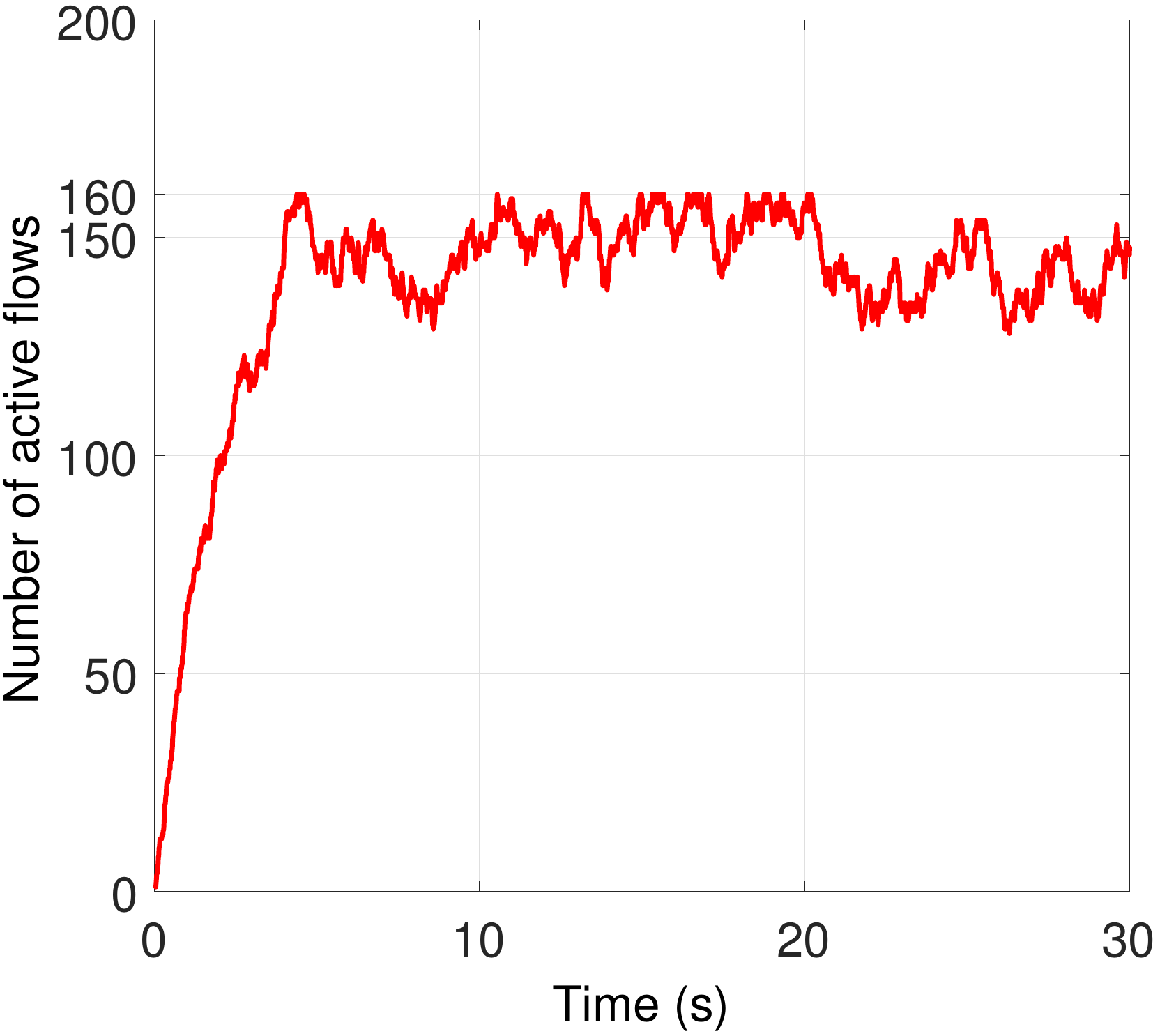}\includegraphics[width=42mm,height=35mm]{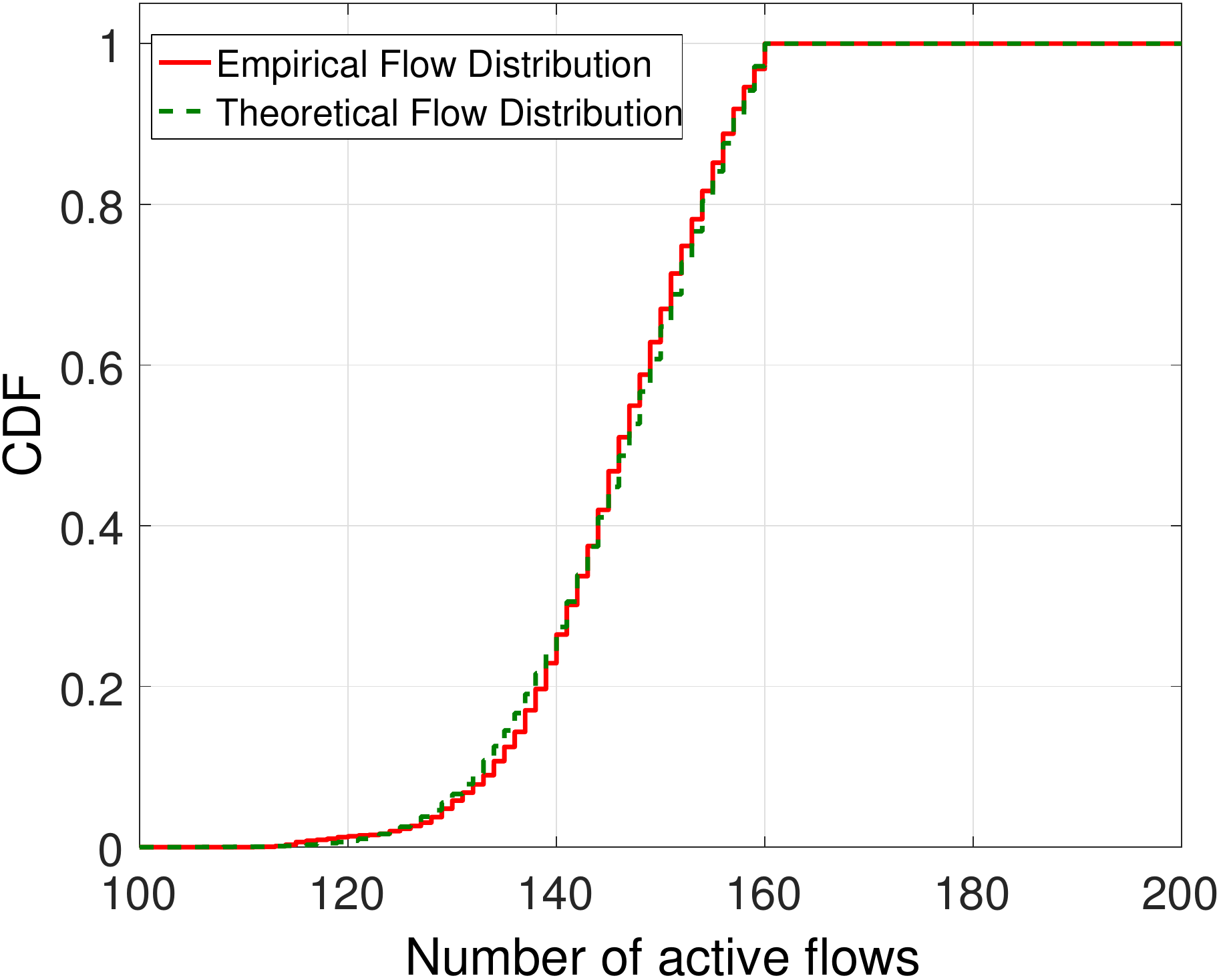}}
\subfigure[Scheme (IV) with $l=140, h=160$]{\label{fig:sacrifice1-2}\includegraphics[width=42mm,height=36mm]{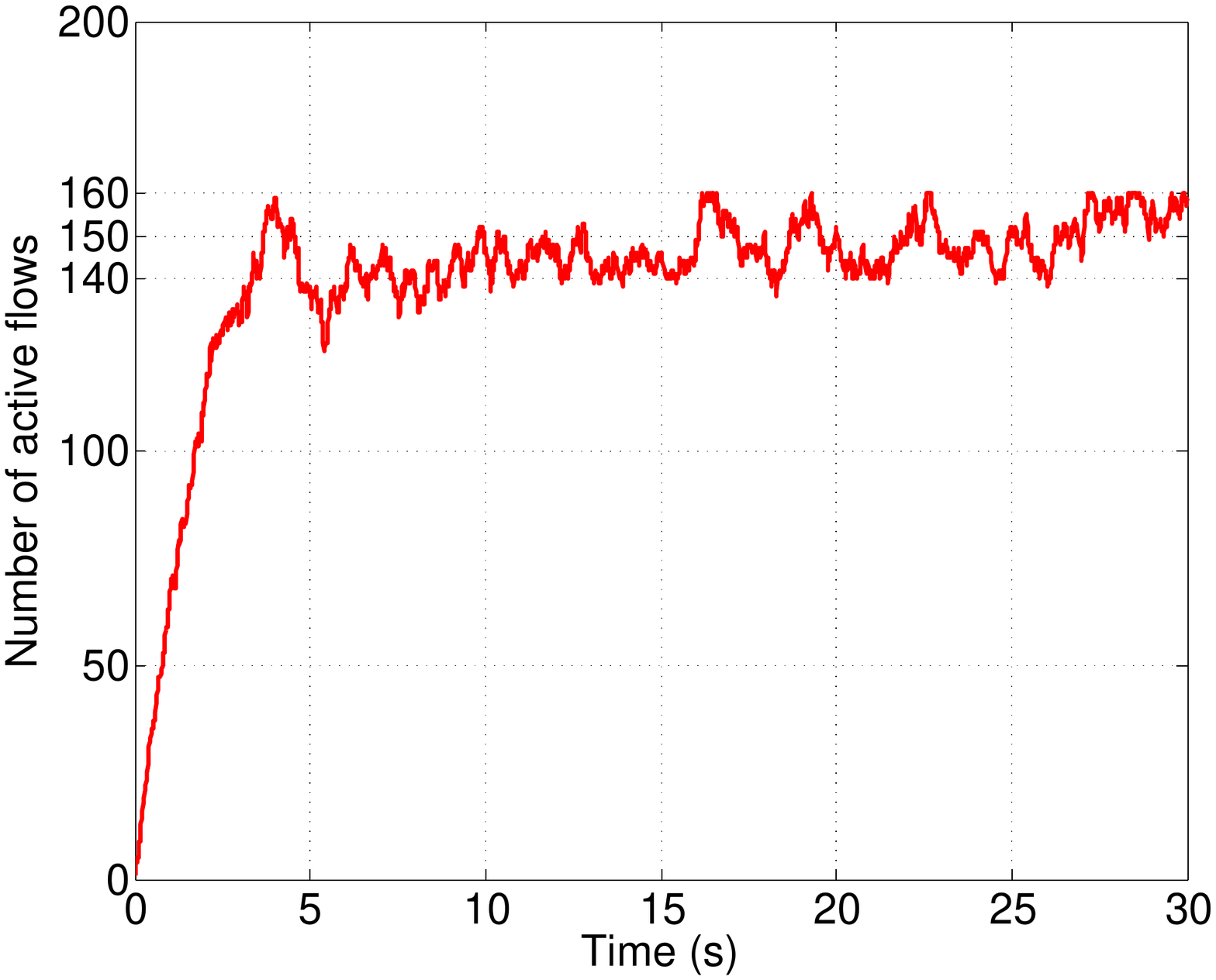}\includegraphics[width=42mm,height=35mm]{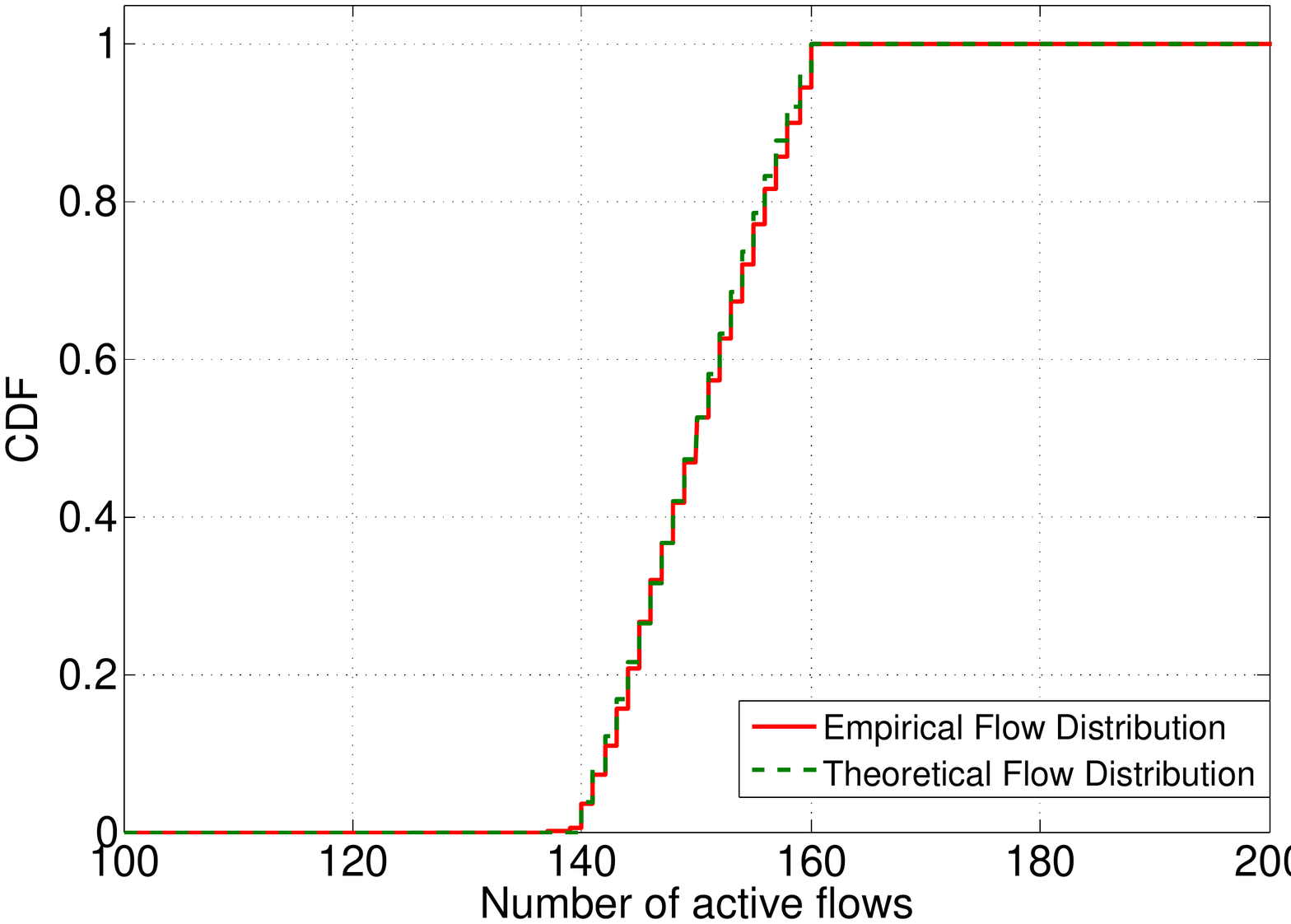}}
\subfigure[Scheme (V) with $h=160$]{\label{fig:sacrifice1-3}\includegraphics[width=42mm,height=36mm]{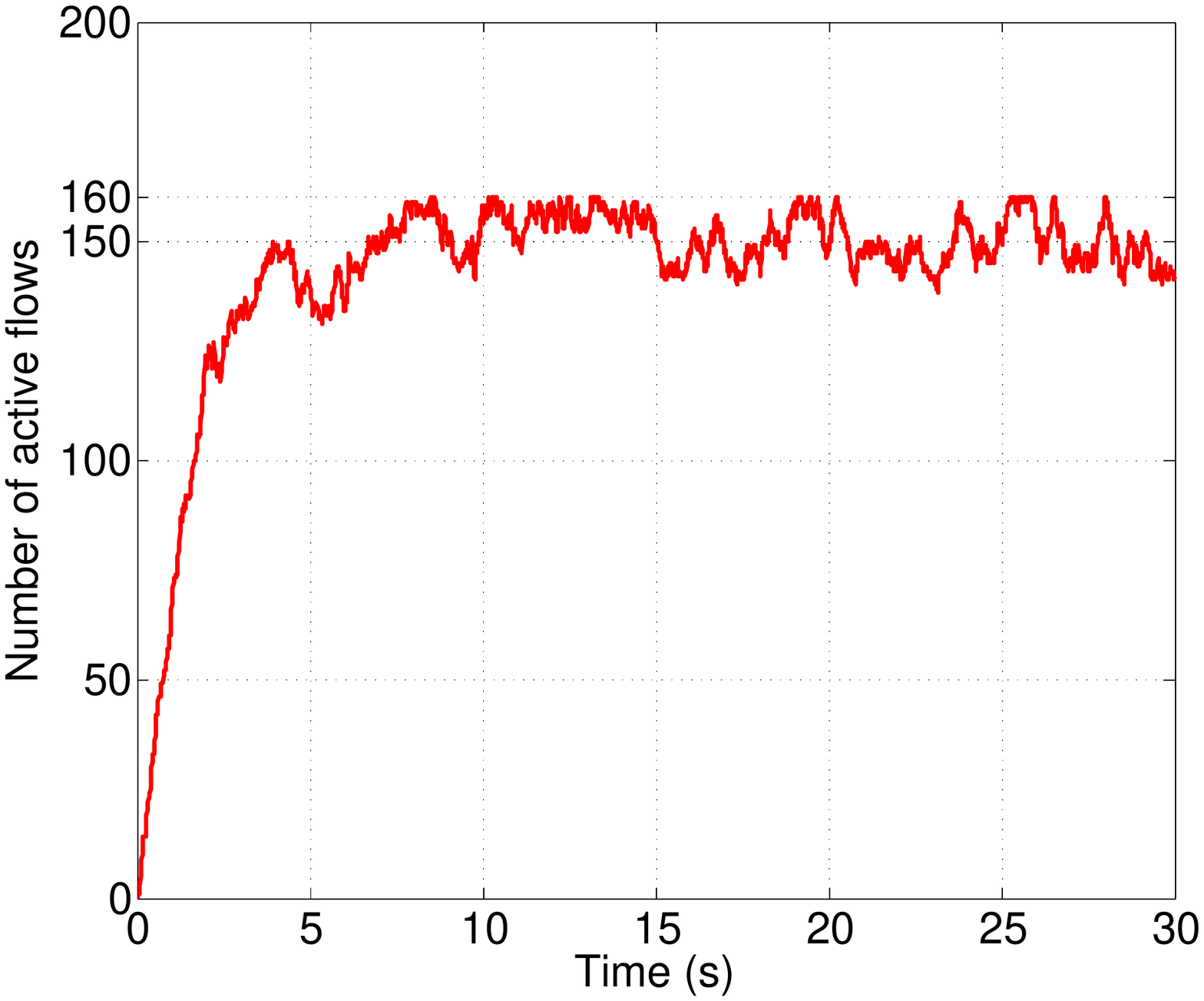}\includegraphics[width=42mm,height=35mm]{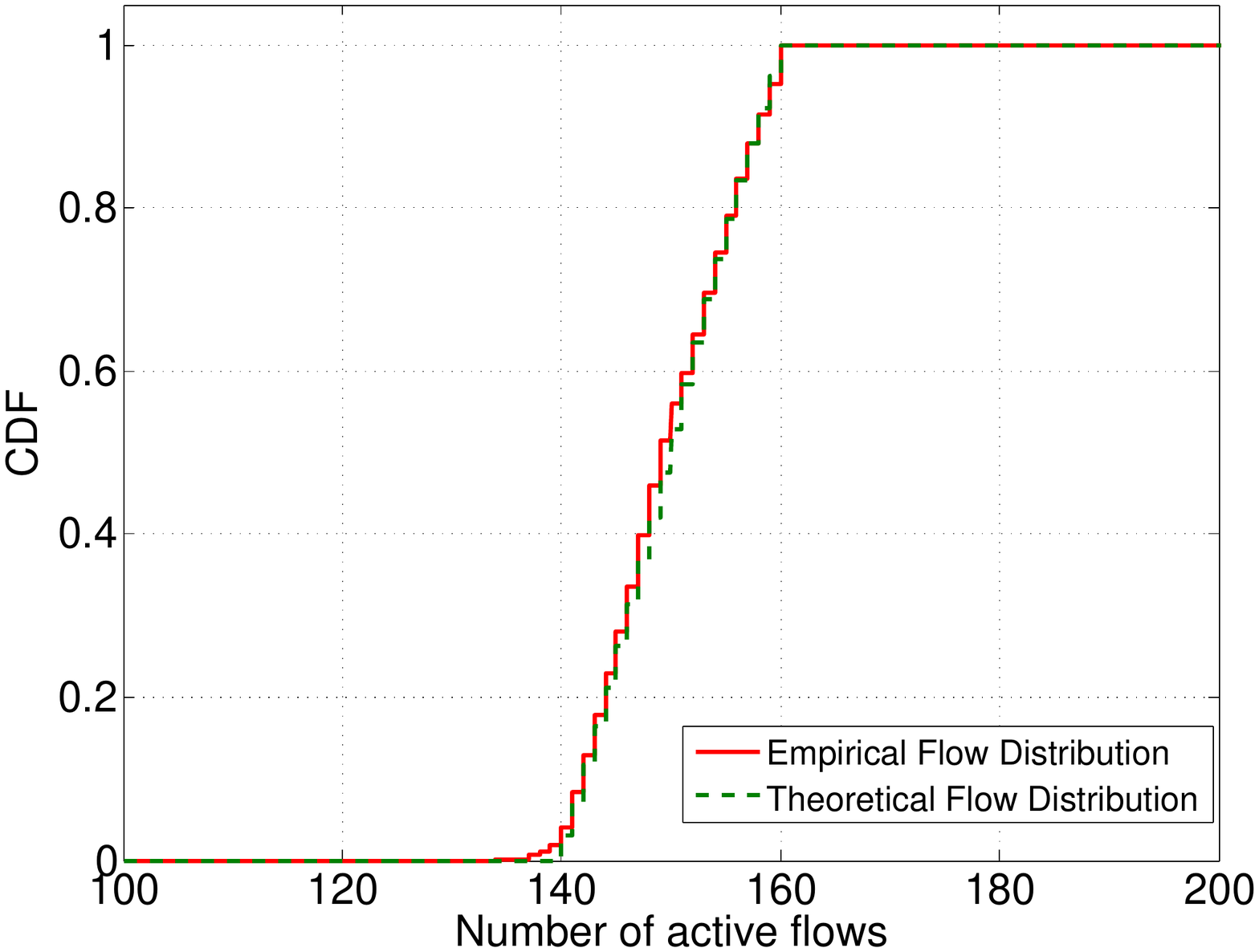}}\vspace{-2mm}
\caption{Variation in the number of active flows at a particular server
over time (left) and empirical vs.~theoretical flow distributions (right).
(a): Scheme (III);
(b): Scheme (IV);
(c): Scheme (V).}
\label{fig:sacrifice1}
\end{figure}

Simulations are conducted for a system with $n = 500$ servers.
Flows are initiated as a Poisson process with rate $\lambda = 100$
(per second) per server and flow durations are exponentially
distributed with mean $\beta = 1.5$ (seconds).
The average number of active flows at each server in stationarity is
$\rho = \lambda \beta = 150$.
The packet arrival rate is $\nu = 100$ (per second) per active flow
and the service rate at each server is $\mu = 20000$ packets (per second),
which implies that each server can handle up to $200$ concurrent flows.
Hence, the average system-wide utilization is $0.75$.

Simulating a system of the above size at the packet level is
prohibitively demanding for the time scale of flow dynamics.
Therefore, we adopt a hybrid analytical/simulation approach to
evaluate the packet-level performance.
Specifically, we first conduct flow-level simulations to obtain the
empirical flow distribution in terms of the probabilities~$p_i$,
and then substitute this into Equation~\eqref{perf3} to derive the
packet-level latency distribution.

\subsection{Simulation results}

We first evaluate the performance of \textbf{Schemes (I) and (II)}
which both preserve perfect stickiness.

Figure~\ref{fig:perfect1} illustrates the variation over time in the
number of active flows at a typical server as well
as the corresponding stationary flow distributions under the two schemes.
Specifically, Figure~\ref{fig:perfect1-1} shows the performance
of the flow-level JSQ policy.
It is observed that this scheme perfectly stabilizes the flow population
at $\rho = 150$ (with little variation in the steady-state regime).
Figure~\ref{fig:power-1} shows the performance of the flow-level
power-of-$d$ policy with $d=2$.
The number of active flows is roughly kept around $\rho=150$,
though the variation is larger than that under the flow-level JSQ policy. 
Figures \ref{fig:perfect1-2}--\ref{fig:perfect1-4} illustrate the
performance of Scheme (II), i.e., the pull-based flow assignment scheme.
When $l = 0$ and $h = \infty$ (Figure~\ref{fig:perfect1-2}),
this scheme behaves like flow-level randomized load balancing where
the variation in the flow population is much greater than that for the
flow-level JSQ policy.
When $l = k^* = 150$ and $h = k^*+1 = 151$ (Figure~\ref{fig:perfect1-3}),
the flow-level pull-based scheme effectively keeps the number of active
flows around $\rho = 150$, achieving similar performance as the
flow-level JSQ policy whereas the flow variation is somewhat larger.
When $l = k^*-10 = 140$ and $h = k^*+10 = 160$
(Figure~\ref{fig:perfect1-4}), the flow population is effectively
limited to the range $[140,160]$, validating the theoretical analysis.
It can further be observed that the empirical flow distributions
validate the theoretical results.
Note that we only have a theoretical bound for the flow-level
power-of-$d$ policy instead of the exact flow distribution.

Figure~\ref{fig:perfect3} compares the packet-level performance of
three load balancing schemes: flow-level power-of-$d$ scheme (and in particular flow-level JSQ policy), flow-level
pull-based scheme, and \textbf{packet-level} randomized load balancing.
There are several important observations.
First, the flow-level JSQ policy and the flow-level pull-based
scheme with $l = k^*$, $h = k^*+1$ achieve the best packet-level
latency performance among flow-level load balancing schemes.
This is expected since the two schemes yield the most balanced flow
distribution.
Under the flow-level power-of-$d$ scheme with $d=2$ and the flow-level
pull-based scheme with $l = k^*-10=140$ and $h = k^*+10=160$, the packet-level
latency performances are slightly worse but still reasonably good.
By comparison, the flow-level randomized load balancing scheme yields
the worst packet-level latency performance.
The second important observation is that even the best flow-level load
balancing scheme is outperformed by the simplest packet-level
randomized load balancing scheme.
This demonstrates the significant penalty for preserving strict stickiness.

Next, we numerically examine the performance of \textbf{Schemes (III),
(IV) and (V)} which may sacrifice stickiness for improvement in 
packet-level delay performance.

Figure~\ref{fig:sacrifice1} illustrates the variation over time in the
number of active flows at a typical server  as
well as the corresponding stationary flow distributions under the three schemes.
Specifically, Figure~\ref{fig:sacrifice1-1} shows the performance
of Scheme~(III), and indicates that
the empirical and the theoretical stationary distributions closely match.
Moreover, the number of active flows at each server is effectively
kept below the shedding threshold $h = 160$.
The corresponding results for Schemes (IV) and (V) are presented
in Figures~\ref{fig:sacrifice1-2} and~\ref{fig:sacrifice1-3},
respectively, and are qualitatively similar.
In case of Schemes (IV) and~(V), the number of active flows at each
server is effectively constrained in the range $[l,h]$ and $[i^*,h]$
in steady state, respectively.

Figures \ref{fig:flow-shedding3}--\ref{fig:flow-min3} plot the
stickiness violation probability as a function of the value of the
upper threshold~$h$ for Schemes (III)-(V), respectively,
and indicate that the empirical values match the theoretical curves,
validating the analytical results.

\begin{figure*}[]
\begin{center}
\subfigure[Scheme (III)]{\label{fig:flow-shedding3}
\includegraphics[width=2.2in,height=2in]{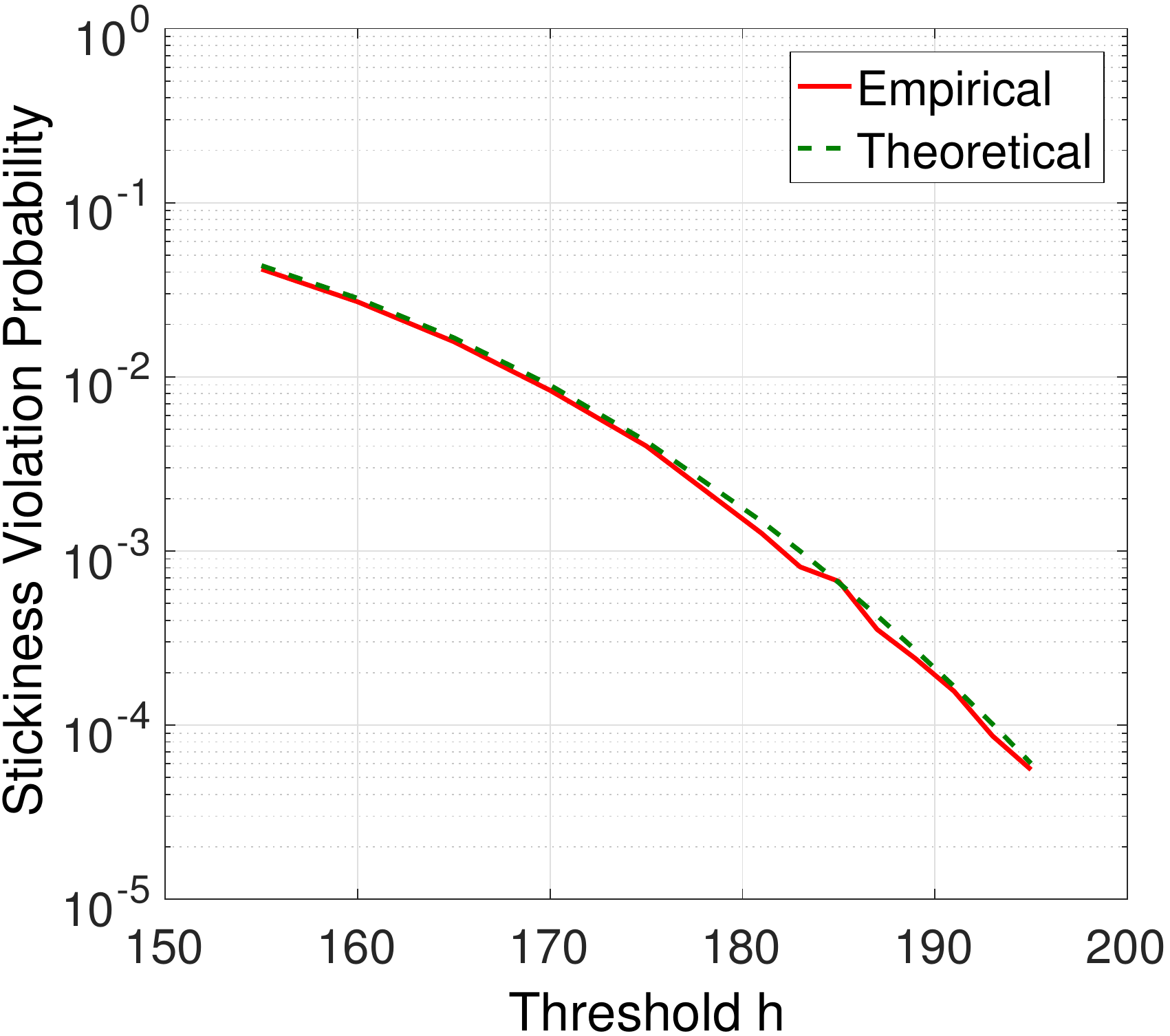}}
\subfigure[Scheme (IV)]{\label{fig:flow-invite3}
\includegraphics[width=2.2in,height=2in]{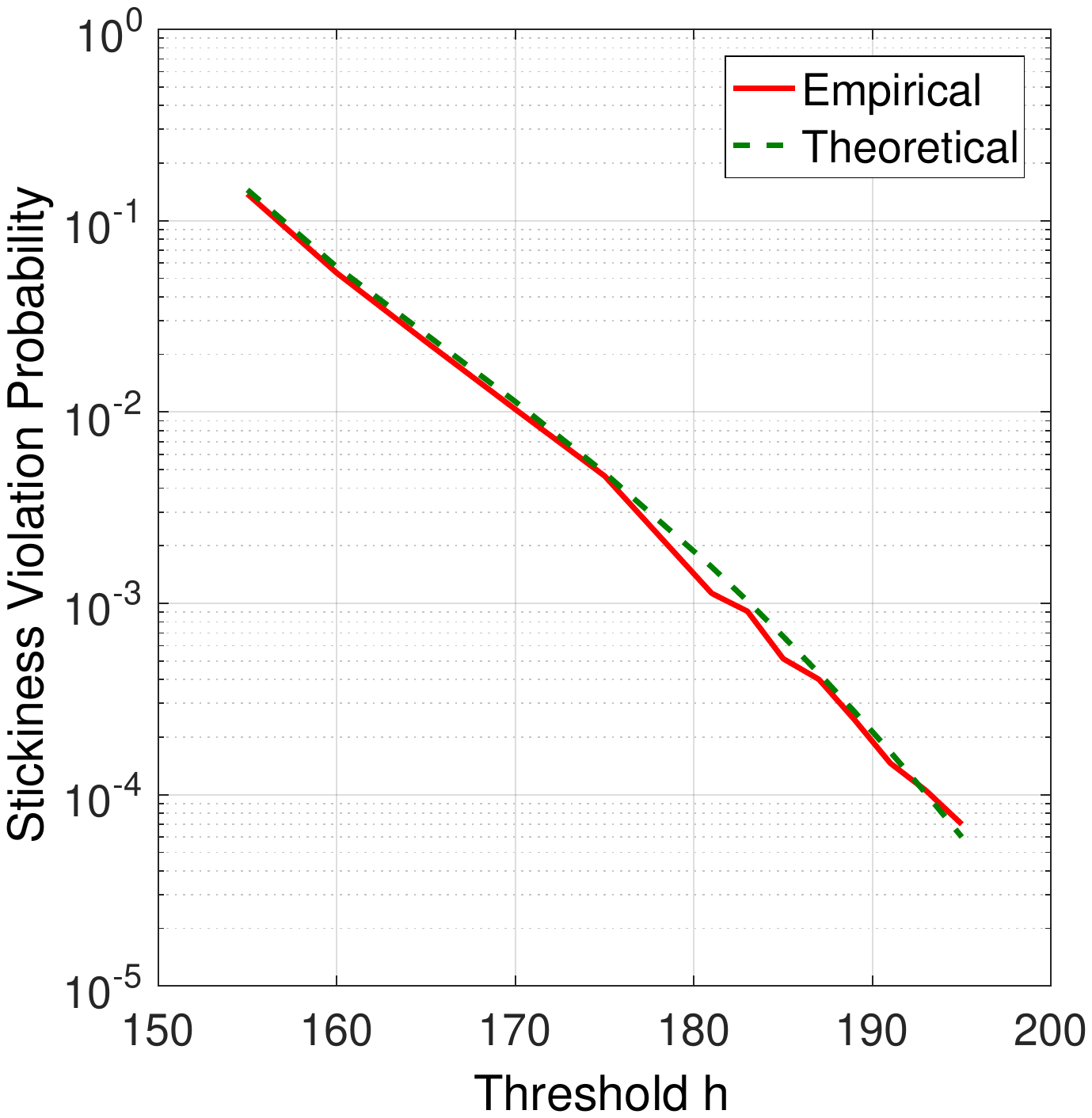}}
\subfigure[Scheme (V)]{\label{fig:flow-min3}
\includegraphics[width=2.2in,height=2in]{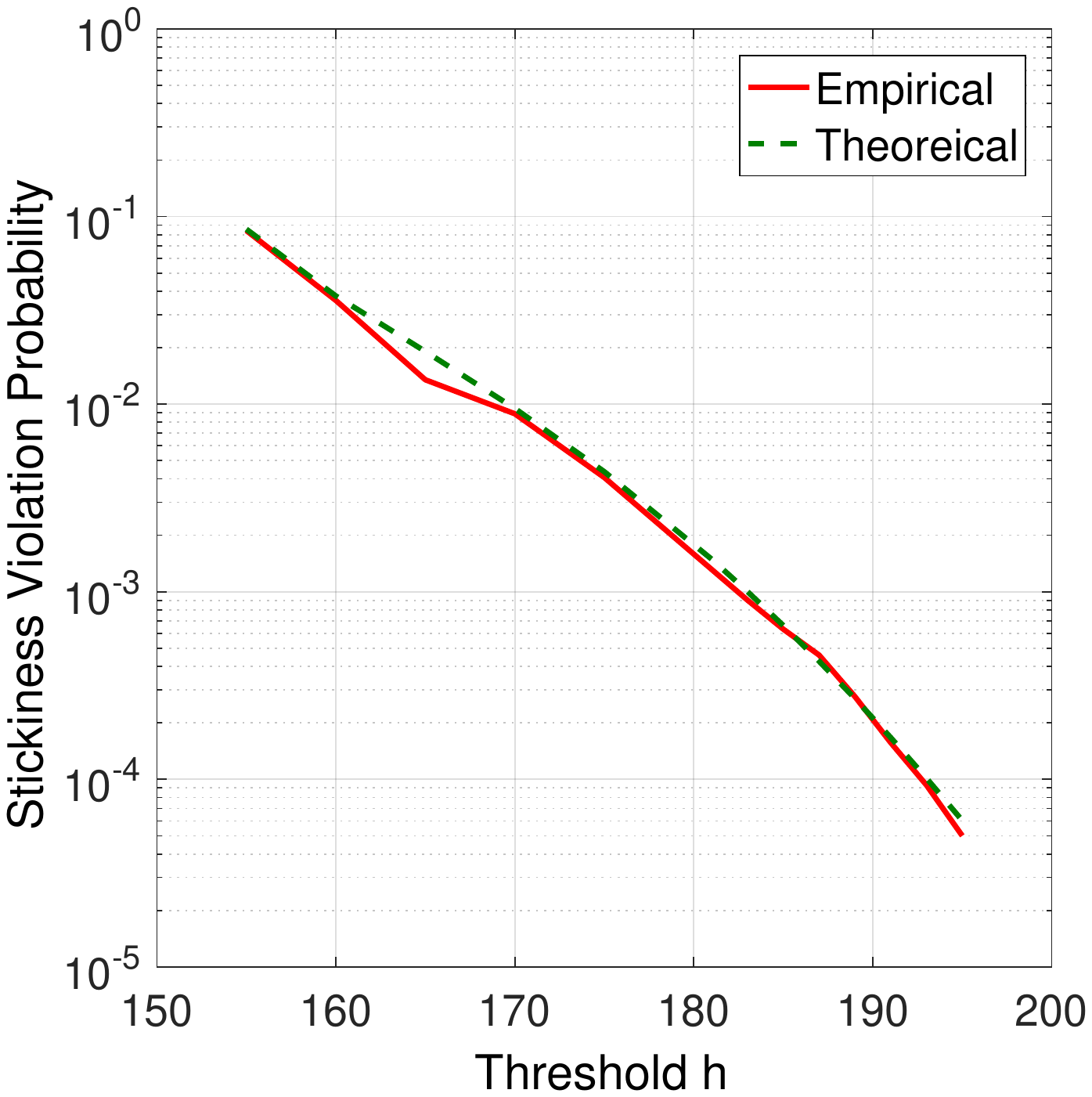}}\vspace{-2mm}
\caption{Stickiness violation probability for Schemes (III)-(V)
as a function of the upper threshold~$h$.}
\label{fig:sacrifice2}\vspace{-5mm}
\end{center}
\end{figure*}

\begin{figure*}[]
\begin{center}
\subfigure[Scheme (III)]{\label{fig:flow-shedding4}
\includegraphics[width=2.2in,height=2in]{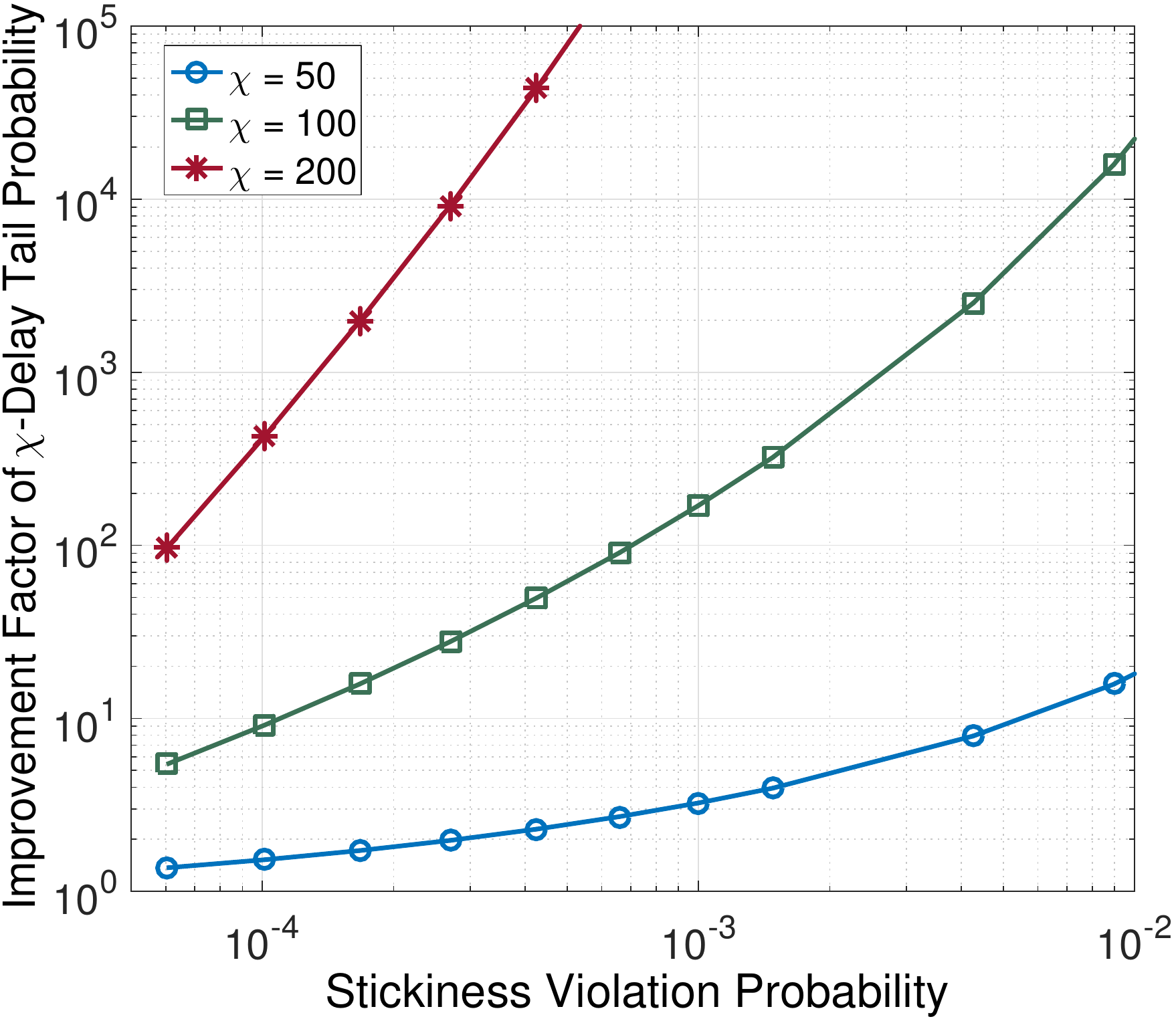}}
\subfigure[Scheme (IV)]{\label{fig:flow-invite4}
\includegraphics[width=2.2in,height=2in]{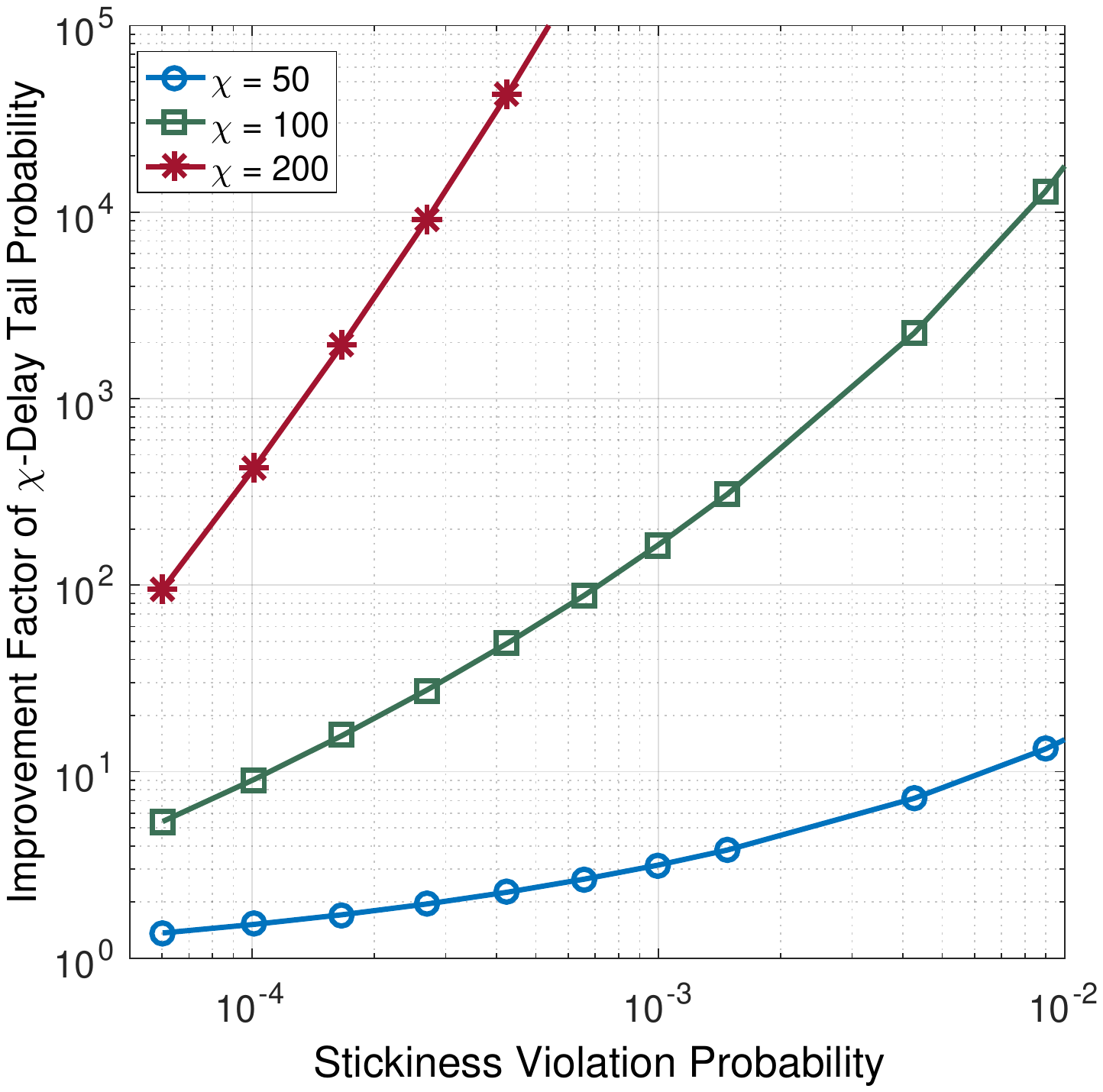}}
\subfigure[Scheme (V)]{\label{fig:flow-min4}
\includegraphics[width=2.2in,height=2in]{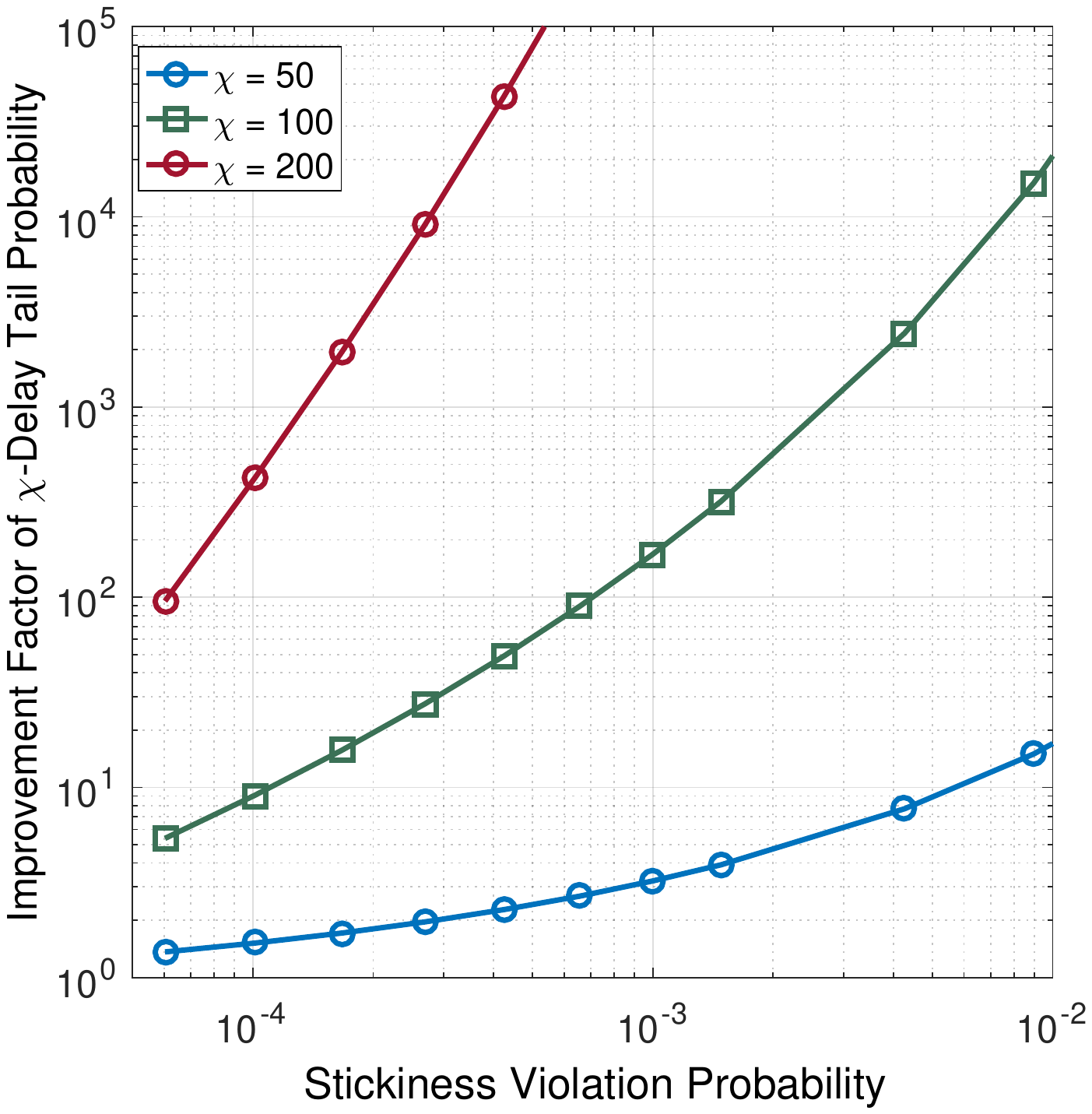}}\vspace{-2mm}
\caption{Stickiness violation versus improvement in packet-level latency
for Schemes (III)-(V).}
\label{fig:sacrifice3}\vspace{-5mm}
\end{center}
\end{figure*}

Figure \ref{fig:flow-shedding4} illustrates the trade-off between
the stickiness violation probability and the packet-level latency
performance, and shows qualitatively similar results for Schemes
(III), (IV) and (V).
To illuminate the benefits from stickiness violation, the packet-level
latency performance is measured by the improvement factor of the
$\chi$-delay tail probability as compared to the scenario with the
upper threshold $h = \infty$ (which preserves strict stickiness).
\textbf{We observe that relaxing the strict stickiness requirement by even
a minimal amount is highly effective in clipping the tail of the packet latency distribution.}
For example, for $\chi = 200$, a stickiness violation probability
as low as $\epsilon = 6 \times 10^{- 5}$ yields a reduction in the
delay tail probability by a factor~$100$.

\begin{figure}[]
\begin{center}
\includegraphics[width=64mm]{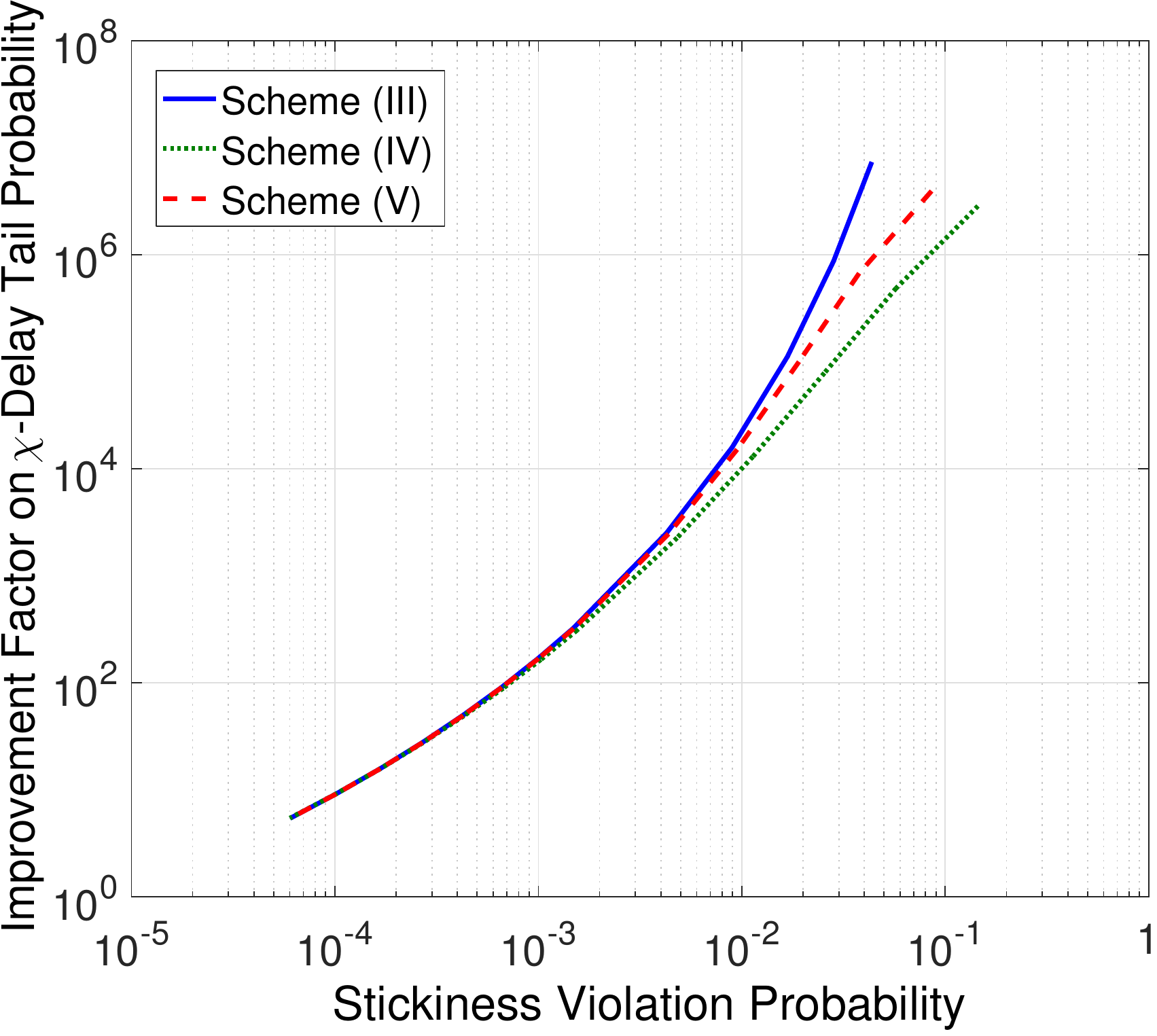}\vspace{-2mm}
\caption{Comparison among Schemes (III), (IV) and (V) with respect
to the trade-off between stickiness violation and packet-level latency
($\chi = 100$).}
\label{fig:flow-compare}\vspace{-5mm}
\end{center}
\end{figure}


Figure~\ref{fig:flow-compare} shows the comparison among Schemes (III),
(IV) and (V) in terms of the trade-off between stickiness violation
and packet-level latency.
It is observed that the three schemes have quite similar performance
when the stickiness violation probability is relatively small.
However, as the stickiness violation probability grows, Scheme (III)
tends to outperform Schemes (IV) and (V).
This may be explained from the fact that Scheme (III) {\em discards\/}
overloaded flows, while the other two schemes {\em transfer\/}
overloaded flows.
Moreover, Scheme~(V) performs better than Scheme (IV), at the expense of higher
communication overhead since it requires load information
from all servers whenever a flow transfer occurs.

\section{Bin-based load balancing scheme}
\label{sec:bin-level}

In the previous sections we examined the trade-off between the flow
stickiness violation probability and the packet-level latency
performance in a scenario where load balancing can be performed at the
granularity level of individual flows.
However, flow-level load balancing is not scalable in practice since
a flow table needs to be maintained to record the assignment of all
active flows in the system.
In large-scale deployments with massive numbers of flows,
the associated flow table may quickly become unmanageable.
In this section, we propose a scalable bin-based load balancing scheme
that explicitly accounts for flow stickiness.
Simulation results (see Subsection~\ref{sec:bin_sim}) show that this
scheme achieves a good trade-off between stickiness violation
and packet-level performance. 

\subsection{Description of Bin-based Load Balancing}
In this subsection we describe the bin-based load balancing scheme. As discussed earlier, an efficient load balancing scheme should achieve a good balance among the following three aspects.
\begin{itemize}
\item{\textbf{Scalability.} The scheme must be able to support high packet forwarding rates, and hence should only involve minimal complexity per packet, in terms of both computation and communication overhead. Thus the size of a packet forwarding table should not be too large, nor should the amount of state information required to configure and dynamically adapt the forwarding rules be too large.}
\item{\textbf{Stickiness.} The scheme should support flow stickiness and ideally have the tunability for the degree of stickiness.}
\item{\textbf{Packet-level Performance.} The scheme should evenly distribute the traffic load across the available servers so as to optimize some relevant performance metrics in terms of packet delay, such as tail statistics and mean values (see Section \ref{sec:metric-packet} for detailed metrics).}
\end{itemize}

\begin{figure}[]
\center
\includegraphics[width=3.2in]{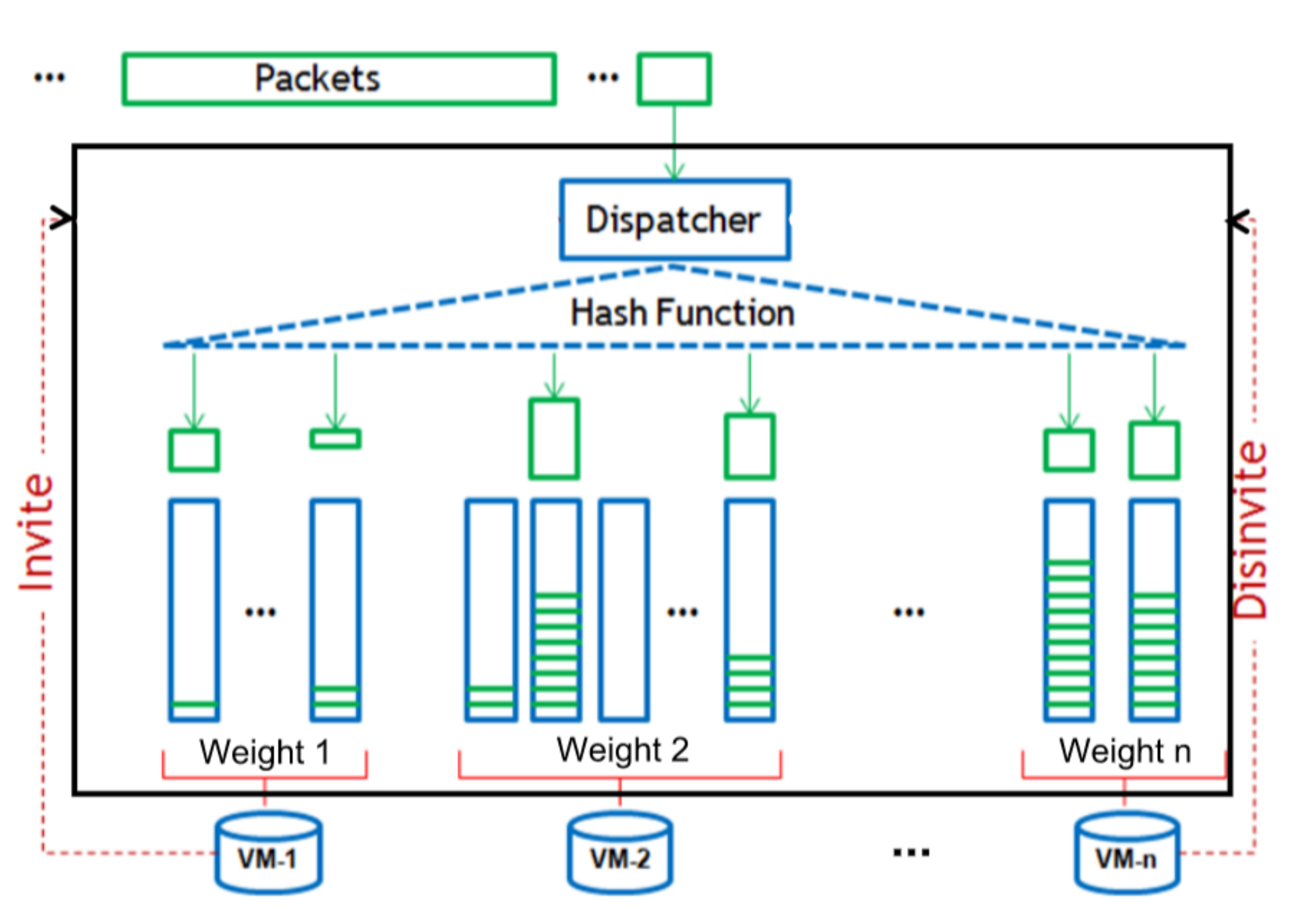}\vspace{-2mm}
\caption{Schematic of bin-based load balancing scheme}
\label{fig:system2}
\end{figure}

\begin{figure*}[]
\centering
\subfigure[Flow population variation over time, where $m=5n=2500$ bins are used.]{\label{fig:bin-variation1}\includegraphics[width=2.3in,height=2in]{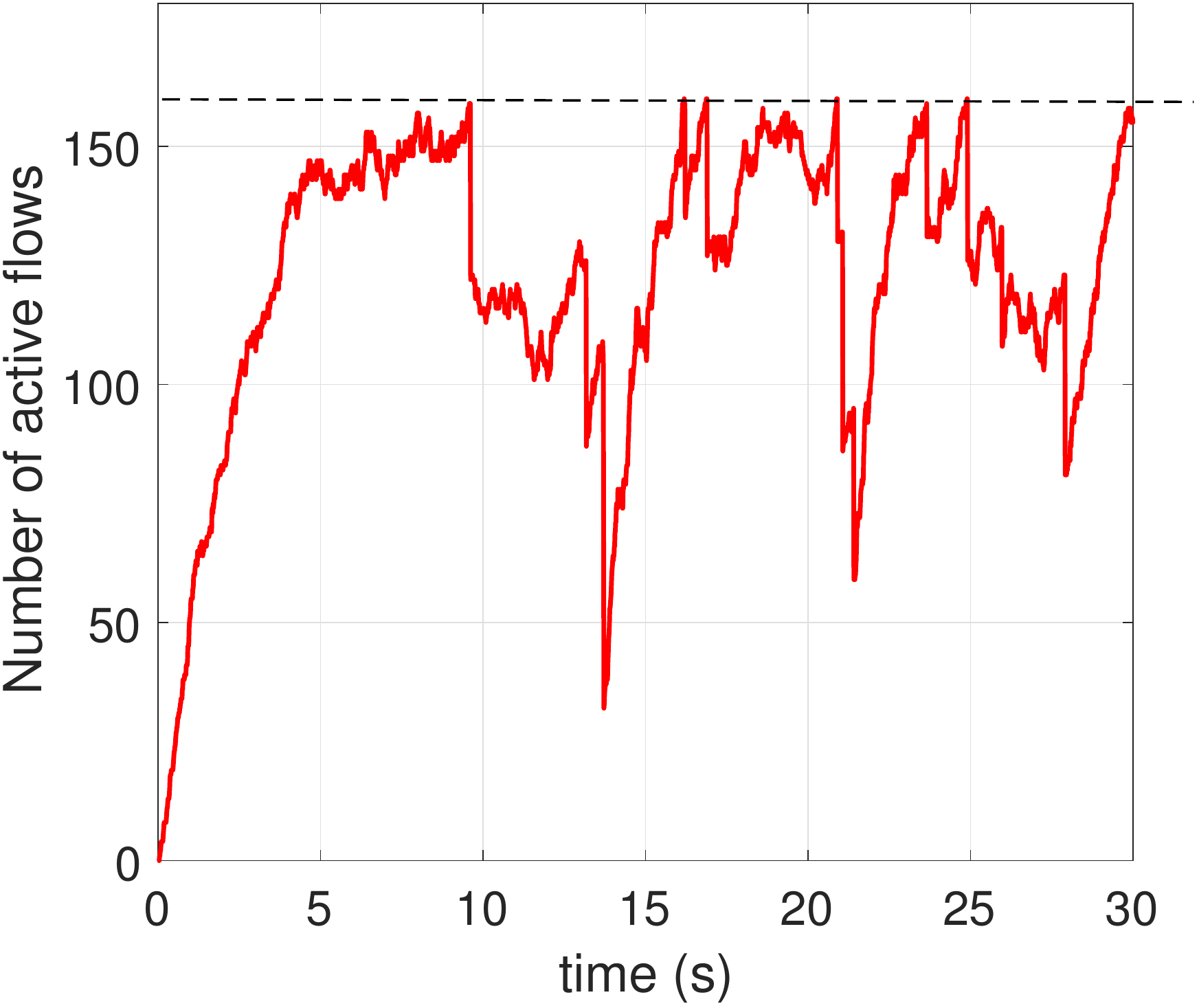}}
\subfigure[Flow population variation over time, where $m=10n=5000$ bins are used.]{\label{fig:bin-variation2}\includegraphics[width=2.3in,height=2in]{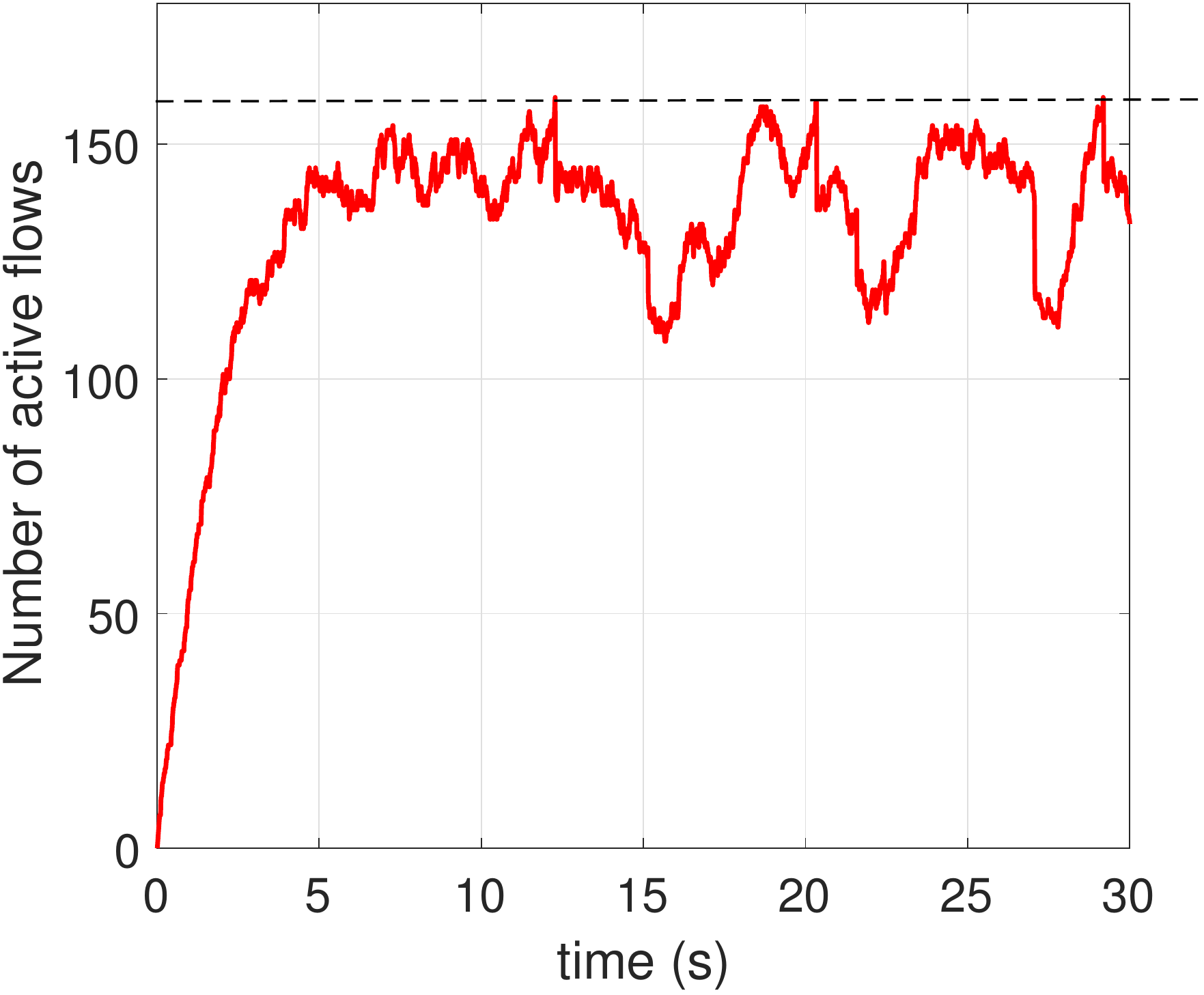}}
\subfigure[CDF of flow population]{\label{fig:bin-CDF}\includegraphics[width=2.3in,height=2in]{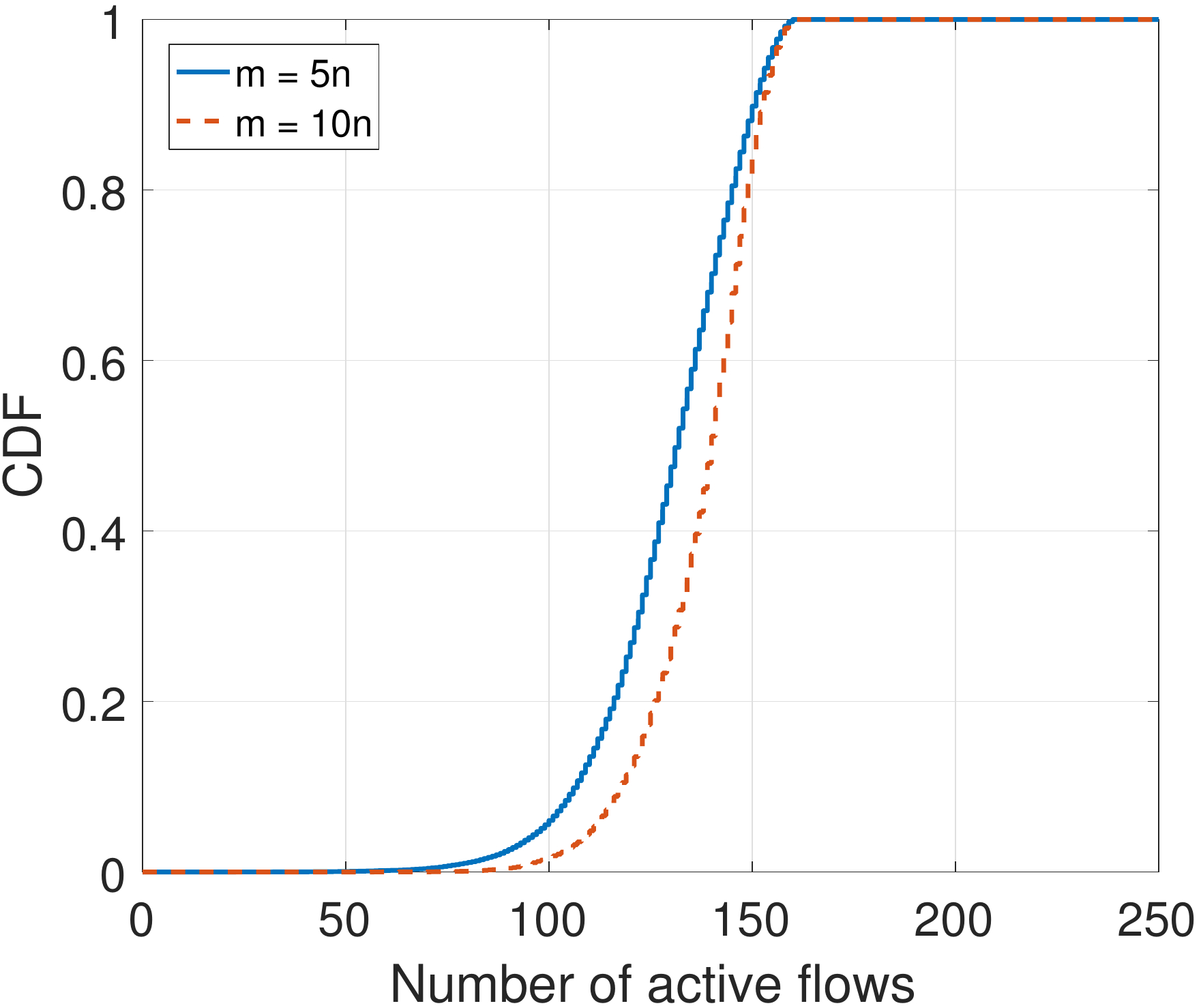}}
\caption{Variation in the number of active flows at a server over time and corresponding CDF of flow population, where $l=140$ and $h=160$.}
\label{fig:bin-variation}
\end{figure*}

In view of the above-specified requirements, we propose a bin-based load
balancing scheme, as is illustrated in Fig.~\ref{fig:system2}.
The key feature of the dispatching mechanism consists of a collection
of virtual 'bins' as an intermediary between incoming packet flows and servers.
Each incoming packet goes through the following two steps in order
to decide its destination server.

\begin{itemize}
\item[(1)]{\textbf{Packets to Bins.} The dispatcher maps each arriving packet  to some bin according to a \emph{static hash function} based on the five-tuple in the packet header. 
}

\item[(2)]{\textbf{Bins to Servers.} After determining the corresponding bin for each packet, the load balancing entity then looks up the \emph{bin table} which records the mapping from bins to servers. If a packet is hashed to bin $j$ and bin $j$ is associated with server $k$, then the packet is forwarded to server $k$ for processing. In contrast to the static hashing from packets to bins, the association of bins with servers is dynamically managed according the following pull-based bin re-allocation rule.}

\item[]{\textbf{Pull-based bin re-allocation rule.}  Each server maintains a simple load estimate  and reports status information to the load balancer. When the load at a server reaches the upper threshold~$h$,
a disinvite message is sent from the server to the dispatcher;
the disinvite message is revoked as soon as the load
drops below the level~$h$ again. When the load at a server falls below the lower threshold~$l$, an invite message is sent from the server to the dispatcher;
the invite message is retracted as soon as the load at the server reaches the level~$l$ again. Moreover, when the load at a server exceeds the threshold $h$, a randomly selected bin is deallocated from the server, and re-allocated to an arbitrary server with an outstanding invite message, if any. Otherwise, the bin is re-allocated to an arbitrary server without
an outstanding disinvite message, if any. If all servers have outstanding disinvite messages, then the bin is re-allocated to a randomly selected server.}
\end{itemize}

\noindent \textbf{Stickiness-delay trade-off.}
In the above bin-based mechanism, all packets belonging to the same
flow are hashed to the same bin.
If the mapping from bins to servers remains unchanged, then all
packets belonging to the same flow are forwarded to the server,
ensuring perfect flow stickiness.
However, in this case the above bin-based scheme becomes flow-level
randomized load balancing which delivers poor performance.
On the other hand, if a bin is re-allocated, then any existing active
flow in the bin will lose stickiness, in return for the improvement
of packet-level performance.
Note that the stickiness violation probability is determined by the
frequency of bin re-allocations and the number of active flows
in a re-allocated bin;
these quantities can be tuned by setting the thresholds~$l$ and~$h$
as well as the number of bins.
Thus, the bin-based scheme can achieve any desired trade-off between
stickiness violation and packet-level performance.

\noindent \textbf{Scalability issues.}
For each packet, the above bin-based scheme only involves one static
hashing computation (from packets to bins) and one simple table lookup
(from bins to servers).
The communication overhead in terms of load status reports is decoupled
from packet arrivals and occurs only occasionally if the thresholds~$l$
and~$h$ are properly set.
As a result, the bin-based load balancing scheme can support very high
packet forwarding rates.
The only bottleneck lies in the size of the bin table which is
proportional to the number of bins.
Simulation results (see Subsection~\ref{sec:bin_sim}) suggest that using
$m=10n$ bins (where $n$ is the number of servers) is sufficient to
achieve a good trade-off between stickiness violation and packet-level
performance.

\noindent \textbf{Heterogeneous scenarios.}
Although we focus on the scenario with homogeneous server capacities,
the bin-based scheme is well suited for heterogeneous scenarios
where servers may have different packet processing rates.
In particular, through the dynamic adjustment of bin assignment,
the number of bins associated with each server will ultimately
stabilize at a level proportional to its processing rate,
even if the bin assignment is incorrectly configured at the beginning.
Moreover, to speed up convergence of the bin adjustment process,
we can re-allocate a bin when it becomes ``almost empty''
(e.g., contains few flows).
However, this operation increases the stickiness violation probability,
leading to another interesting trade-off between stickiness violation
and convergence speed, which is however beyond the scope of the present
paper.

Unfortunately it turns out to be difficult to analytically derive the
stationary distribution of the flow population under the bin-based
scheme due to the complex mutual interaction between flow dynamics
and bin dynamics.
Hence we rely on simulation experiments to evaluate its performance.

\subsection{Numerical Evaluation}\label{sec:bin_sim}

The simulation setting is the same as in Section~\ref{sec:simulation-flow-level} and omitted for brevity. In the following, we focus on the influence of the two thresholds $l$ and $h$ as well as the number of bins $m$. As mentioned above, these parameters determine the balance among scalability, flow stickiness violation and packet-level performance. For simplicity, the load of a server is measured by the number of active flows at that server (in practice it is more convenient to measure average server utilization).

Figure \ref{fig:bin-variation} illustrates the variation over time in the number of active flows and the corresponding flow population distribution at a typical server under the bin-based scheme, where we set $l=140$ and $h=160$ (in the number of active flows). It is observed that the number of active flows is effectively kept below $h=160$. However, unlike those flow-level load balancing schemes in Section \ref{sec:flow-level-lb}, the flow population is not well kept above the lower threshold $l=140$. This implies a more imbalanced flow population distribution (and thus worse delay performance) under the bin-based scheme than under those flow-level load balancing schemes. Moreover, the more bins are used, the more balanced the flow population distribution is. In fact, as the number of bins~$m$ grows large, the bin-based scheme reduces to the flow-level load balancing scheme (IV) (see Section \ref{flow-perfect-stickiness}). However, using more bins leads to a larger bin table, which shows a trade-off between scalability and packet-level performance.

Figure \ref{fig:bin-number-violation} plots the stickiness violation probability as a function of the value of the bin re-allocation threshold $h$. It is observed that using more bins contributes to a lower stickiness violation probability. However, as mentioned earlier, using more bins implies a larger bin table. Thus, there is a trade-off between scalability and stickiness violation.

\begin{figure}[]
\center
\includegraphics[width=2.8in]{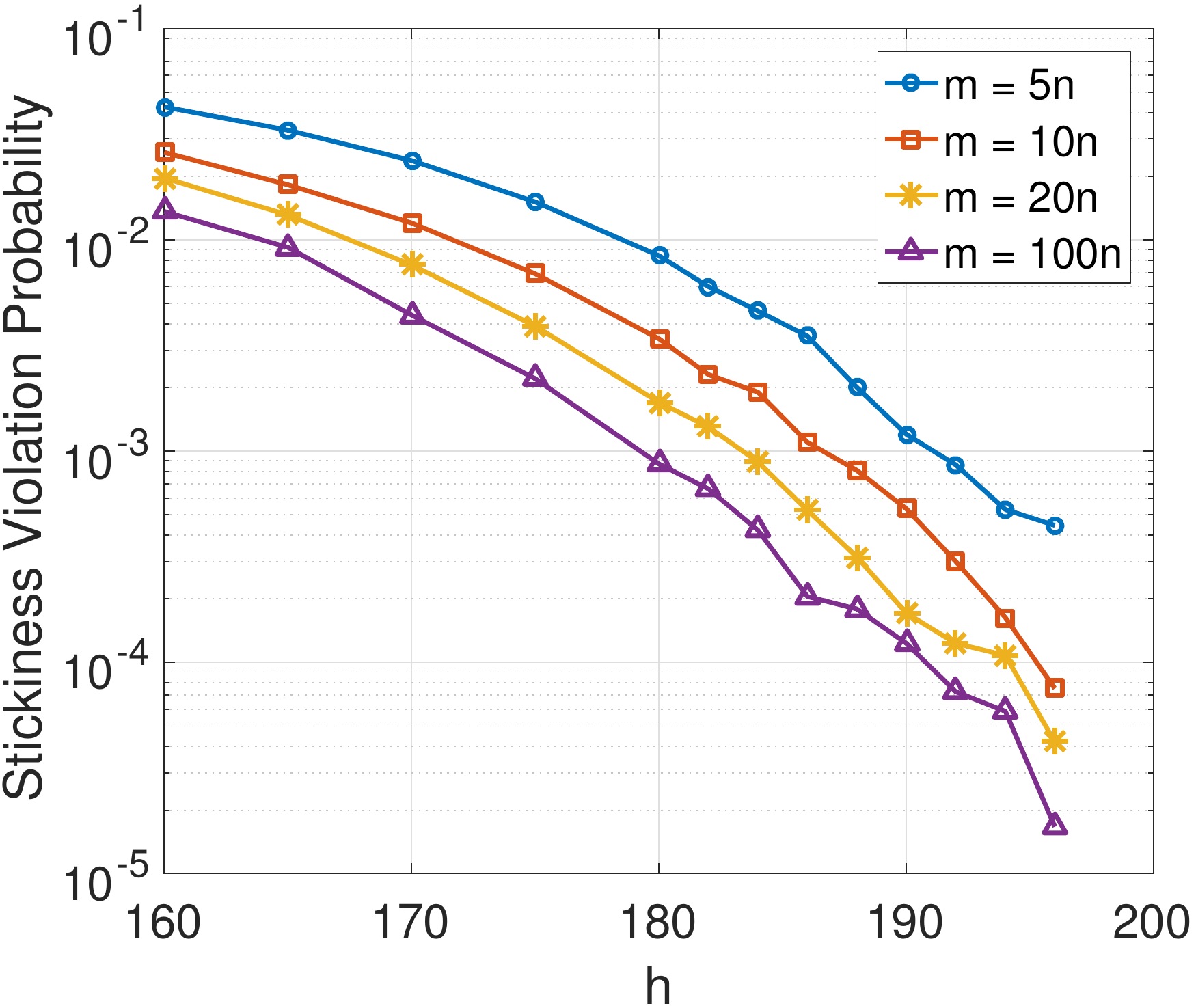}\vspace{-3mm}
\caption{Stickiness Violation Probability under different number of bins, where $m$ is the number bins and we set $l=140$.}
\label{fig:bin-number-violation}
\end{figure}

Figure \ref{fig:bin-number-trade-off} illustrates the trade-off curve between stickiness violation and packet-level performance, for various numbers of bins. There are two important observations. First, the trade-off curve becomes better as we increase the number of bins, meaning that a larger improvement in packet-level delay can be achieved with the same amount of stickiness violation.  However, as noted earlier, using more bins leads to worse scalability. Moreover, the improvement brought by increasing bins is diminishing as $m$ grows large. Considering modern data centers with tens of thousands of servers, we claim that using $m=10n$ (where $n$ is the number of servers) may be a good choice for balancing scalability, stickiness violation and packet-level performance. The second observation is that the performance benefits brought by relaxing the stickiness requirement is still significant even under the bin-based scheme. For example, when $m=10n$ bins are used, a stickiness violation probability as low as $\epsilon = 7\times 10^{-5}$ yields a reduction in the $\chi$-delay tail probability by a factor 100 (where $\chi=200$).

\begin{figure}[]
\center
\includegraphics[width=2.8in]{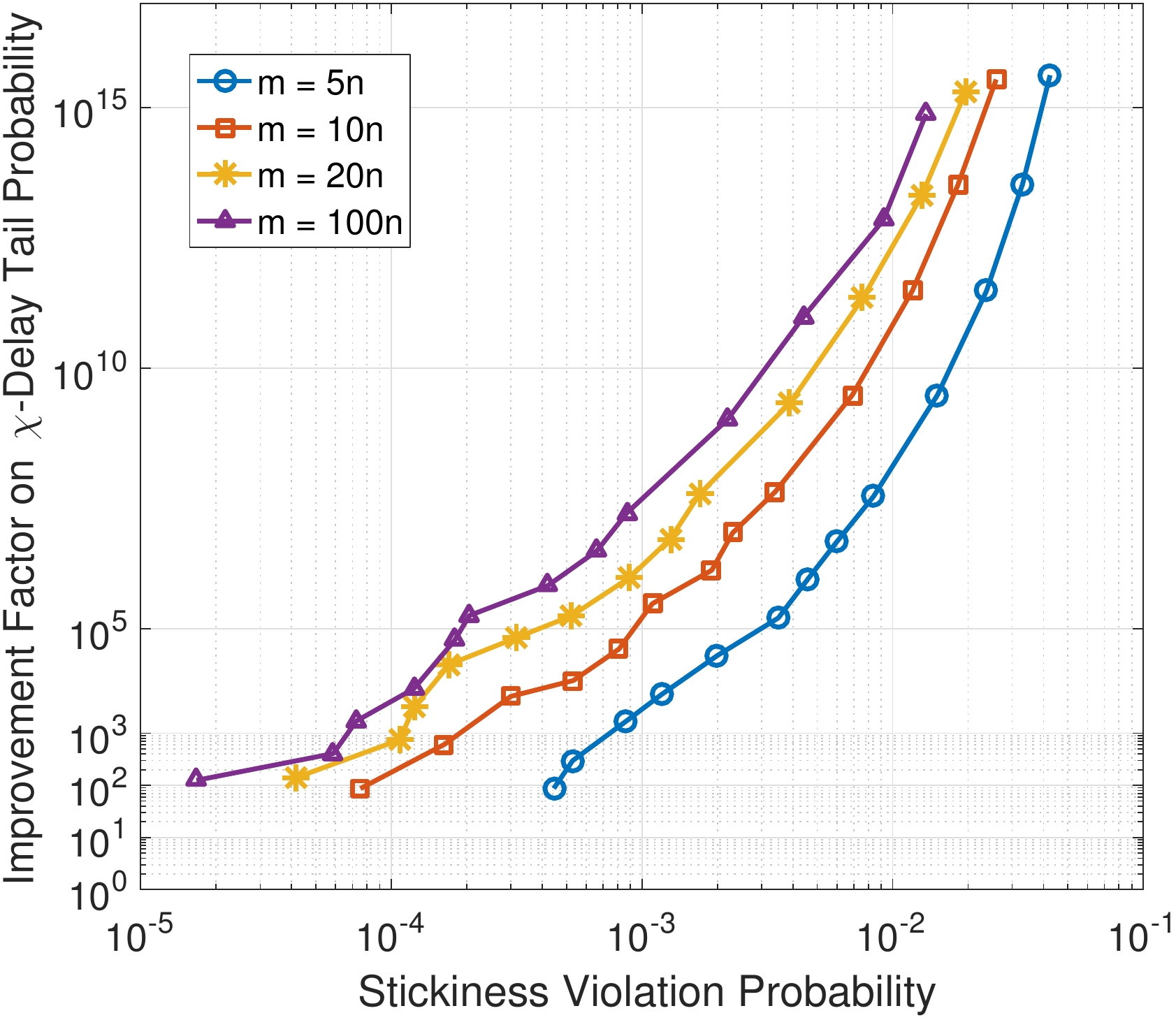}\vspace{-3mm}
\caption{Trade-off between stickiness violation and packet-level delay performance, where $m$ is the number bins and we set $l=140$.}
\label{fig:bin-number-trade-off}
\end{figure}


\section{Conclusion}
\label{conc}

We have investigated the fundamental trade-off between flow stickiness
violation and packet-level delay performance for load balancing.
Our theoretical and simulation results show that a stringent flow
stickiness requirement carries a significant penalty in terms
of packet-level delay performance.
Moreover, relaxing the stickiness requirement by a minuscule amount
is highly effective in clipping the tail of the latency distribution.
We further propose a bin-based load balancing scheme which achieves
a good balance among scalability, flow stickiness violation
and packet-level delay performance.

\newpage

\appendix

\section{Mean-field limits and fixed points for flow-level load
balancing schemes}

%
%
%
%

\subsection{Scheme (II): pull-based flow assignment}
\label{ap:II}

In order to determine the fixed point of Equations~(\ref{fp1})
for Scheme (II), we distinguish three cases, depending on whether
$\rho < l$, $l \leq \rho \leq h$, or $\rho > h$.\\

\noindent $\bullet$ \textbf{Case $\rho < l$}.

Since in stationarity the average number of active flows per server
is~$\rho$, there must be many servers with less than $l$~flows,
and hence there must be many invite messages available at all times.
This corresponds to case~(i) above, and yields the fixed-point equation:
\[
\rho \frac{s_{i-1} - s_i}{1 - s_l} = i (s_i - s_{i+1}),
\hspace*{.4in} i = 1, \dots, l,
\]
with $s_{l+1} = 0$, which may be equivalently written as
\[
\rho \frac{p_{i-1}}{1 - p_l} = i p_i
\hspace*{.4in} i = 1, \dots, l.
\]

Summing the above equations over $i = 1, \dots, l$, we obtain
\[
\frac{\rho}{1 - p_l} \sum_{i = 0}^{l - 1} p_i = \sum_{i = 0}^{l} i p_i.
\]
This shows that the normalization condition $\sum_{i = 0}^{l} p_i = 1$
is equivalent to $\sum_{i = 0}^{l} i p_i = \rho$, reflecting that the
average number of active flows per server in stationarity equals~$\rho$.

Rewriting the above equations, we obtain
\[
p_i = \frac{\sigma}{i} p_{i-1},
\hspace*{.4in} i = 1, \dots, l,
\]
with
\begin{equation}
\sigma = \frac{\rho}{1 - p_l},
\label{sigmaiII}
\end{equation}
yielding
\[
p_i = p_0 \frac{\sigma^i}{i!},
\hspace*{.4in} i = 0, \dots, l,
\]
with
\[
p_0 = \left[\sum_{i = 0}^{l} \frac{\sigma^i}{i!}\right]^{- 1}.
\]

We observe that the probabilities~$p_i$ correspond to the
stationary occupancy distribution of an Erlang loss system
with load~$\sigma$ and capacity~$l$.
In particular,
\begin{equation}
p_l = \frac{\frac{\sigma^l}{l!}}{\sum_{i = 0}^{l} \frac{\sigma^i}{i!}} =
\Erl(\sigma; l),
\label{plII1}
\end{equation}
where $\Erl(a; h)$ denotes the blocking probability in an Erlang
loss system with load~$a$ and capacity~$h$.
The relation $\rho = (1 - p_l) \sigma$ thus implies that the amount
of carried traffic is~$\rho$, which is consistent with the fact that
the average number of active flows per server in stationarity
equals~$\rho$.
(Denoting the blocked traffic by $\Delta\rho = \sigma - \rho$,
this may also be written in the form
\[
p_l = \frac{\Delta\rho}{\rho + \Delta\rho} = 1 - \frac{\rho}{\sigma},
\]
with
\[
\Delta\rho = p_l \sigma = (\rho + \Delta\rho) \Erl(\rho + \Delta\rho; l),
\]
which reveals a connection with an Erlang loss system with retrials.)
We deduce that $\sigma$ is the offered traffic volume in an Erlang loss
system with capacity~$l$ for which the carried traffic equals $\rho < l$.
Since the carried traffic in such a system is a strictly increasing
continuous function of the offered traffic, drops to~$0$ as the
offered traffic vanishes, and approaches~$l$ as the offered traffic
tends to infinity, we may conclude that $\sigma$ exists and is unique.\\


\noindent $\bullet$ \textbf{Case $l \leq \rho < h$}.

In this case, there are both invite and dis-invite messages generated
in stationarity, but the former are instantly used, while the latter
naturally disappear once the number of flows at the corresponding
server drops below~$h$ again.

This corresponds to case~(ii) above, and yields the fixed-point equation:
\[
\tilde\lambda \beta \frac{s_{i-1} - s_i}{1 - s_h} = i (s_i - s_{i+1}),
\hspace*{.4in} i = l + 1, \dots, h,
\]
with $s_l = 1$ and $s_i = 0$ for all $i \geq h + 1$,
which may equivalently be written as
\[
\tilde\lambda \beta \frac{p_{i-1}}{1 - p_h} = i p_i,
\hspace*{.4in} i = l + 1, \dots, h.
\]

Summing the above equations over $i = l + 1, \dots, h$, we obtain
\[
\frac{\rho - l p_l}{1 - p_h} \sum_{i = l}^{h - 1} p_i =
\sum_{i = l}^{h} i p_i - l p_l.
\]
This shows that the normalization condition $\sum_{i = l}^{h} p_i = 1$
is equivalent to $\sum_{i = l}^{h} i p_i = \rho$, reflecting that the
average number of active flows per server in stationarity equals~$\rho$.

Rewriting the above equations, we obtain
\begin{equation}
p_i = \frac{\sigma}{i} p_{i-1},
\hspace*{.4in} i = l + 1, \dots, h,
\label{piII1}
\end{equation}
with
\begin{equation}
\sigma = \frac{\tilde\lambda \beta}{1 - p_h} =
\frac{(\lambda - l p_l / \beta) \beta}{1 - p_h} =
\frac{\rho - l p_l}{1 - p_h},
\label{sigmaiiII}
\end{equation}
yielding
\begin{equation}
p_i = p_l \frac{\sigma^{i - l}}{i (i - 1) \dots (l+1)} =
p_l \frac{\sigma^{i - l} l!}{i!},
\hspace*{.4in} i = l, \dots, h,
\label{piII2}
\end{equation}
with
\begin{equation}
p_l = \left[\sum_{i = l}^{h} \frac{\sigma^{i - l} l!}{i!}\right]^{- 1}.
\label{plII}
\end{equation}

The probabilities $p_i$ may be interpreted as stationary occupancy
distribution of a modified Erlang loss system with load~$\sigma$
and capacity~$h$, where departing users are replaced by dummy users
to prevent the number of users from falling below~$l$.
Specifically, Equations~(\ref{sigmaiII}), (\ref{piII2}) and~(\ref{plII})
imply that $\sigma$ is a root of the equation
\[
\begin{split}
\sum_{i = h}^{l} i p_i &= 
\frac{\sum_{i = l}^{h} i \frac{\sigma^{i - l}}{i!}}
{\sum_{i = l}^{h} \frac{\sigma^{i - l}}{i!}} =
\frac{\sigma \sum_{i = l - 1}^{h - 1} \frac{\sigma^{i - l}}{i!}}
{\sum_{i = l}^{h} \frac{\sigma^{i - l}}{i!}} \\
&=
\sigma \left[1 + \frac{\frac{\sigma^{- 1}}{(l - 1)!} - \frac{\sigma^{h - l}}{h!}}
{\sum_{i = l}^{h} \frac{\sigma^{i - l}}{i!}}\right] \\
&=
\sigma [1 - p_h + \LRE(\sigma; l; h)] \\
&=
\sigma [1 - p_h] + l p_l \\
&= \rho,
\end{split}
\]
with
\[
\LRE(\sigma; l; h) = \frac{l p_l}{\sigma} = \frac{\mu l p_l}{\lambda}
\]
representing the average number of dummy users created per regular
user in the above-described modified Erlang loss system.
Thus, $\sigma$ equals the offered traffic volume in such a system
for which the carried traffic equals $\rho \in [l, h)$.
As before, the carried traffic is a strictly increasing continuous
function of the offered traffic, drops to~$l$ when the offered traffic
vanishes, and tends to~$h$ as the offered traffic goes to infinity,
and hence we may conclude that $\sigma$ exists and is unique.

In case $h = l + 1$, i.e., $l = k^*$, this yields $p_l = h - \rho$
and $p_h = \rho - l$, i.e., $s_l = 1$ and $s_h = \rho - l$.

In case $l = 0$, we obtain
\[
p_i = p_0 \frac{\sigma^i}{i!},
\hspace*{.4in} i = 0, \dots, h,
\]
with
\[
p_0 = \left[\sum_{i = 0}^{h} \frac{\sigma^i}{i!}\right]^{- 1},
\]
and
\[
\sigma = \frac{\rho}{1 - p_h}.
\]

In particular,
\[
p_h = \frac{\frac{\sigma^h}{h!}}{\sum_{i = 0}^{h} \frac{\sigma^i}{i!}},
\]
or equivalently,
\[
p_h = \Erl(\sigma; h), 
\]

Denoting $\Delta\rho = \sigma - \rho$, this may also be written
in the form
\[
p_h = \frac{\Delta\rho}{\rho + \Delta\rho} = 1 - \frac{\rho}{\sigma},
\]
with
\[
\Delta\rho = p_h \sigma = (\rho + \Delta\rho) \Erl(\rho + \Delta\rho; h),
\]
which reveals a connection with an Erlang loss system with retrials.

In particular, in case $l = 0$ and $h = \infty$, we find
\[
p_i = \ee^{- \rho} \frac{\rho^i}{i!},
\hspace*{.4in} i \geq 0.
\]
This reflects a far stronger property:
the numbers of active flows at the various servers are in fact
independent and Poisson distributed with parameter~$\rho$ in the
pre-limit system for any~$n$.\\

\noindent $\bullet$ \textbf{Case $\rho \geq h$}.

In this case, a server that sees the number of active flows drop from~$h$
to $h - 1$ and revokes its outstanding dis-invite message, will immediately
be assigned a newly arriving flow, and re-issue its dis-invite message.
In stationarity all servers have outstanding dis-invite messages.

This corresponds to case~(iii) above, and yields the fixed-point equation:
\[
\tilde\lambda (s_{i-1} - s_i) = i (s_i - s_{i+1}) / \beta,
\hspace*{.4in} i \geq h + 1,
\]
with $s_h = 1$, which may equivalently be written as
\[
\tilde\lambda p_{i-1} = i p_i / \beta,
\hspace*{.4in} i \geq h + 1.
\]

Summing the above equations over $i \geq h + 1$, we obtain
\[
(\rho - h p_h) \sum_{i = h}^{\infty} p_i = \sum_{i = h}^{\infty} i p_i - h p_h.
\]
This shows that the normalization condition $\sum_{i = h}^{\infty} p_i = 1$
is equivalent to $\sum_{i = h}^{\infty} i p_i = \rho$, reflecting that the
average number of active flows per server in stationarity equals~$\rho$.

Rewriting the above equations as
\begin{equation}
p_i = \frac{\sigma}{i} p_{i-1},
\hspace*{.4in} i \geq h + 1,
\label{piII3}
\end{equation}
with
\begin{equation}
\sigma = \tilde\lambda \beta =
(\lambda - h p_h / \beta) \beta = \rho - h p_h,
\label{sigmaiiiII}
\end{equation}
we obtain
\[
p_i = p_h \frac{\sigma^{i - h}}{i (i - 1) \dots (h+1)} =
p_h \frac{\sigma^{i - h} h!}{i!},
\hspace*{.4in} i \geq h,
\]
with
\begin{equation}
p_h = \left[\sum_{i = h}^{\infty} \frac{\sigma^{i - h} h!}{i!}\right]^{- 1}.
\label{phII}
\end{equation}

As before, the probabilities $p_i$ may be interpreted as the
stationary occupancy distribution of a modified Erlang system
where departing users are replaced by dummy users to prevent the
number of users from falling below~$h$.
Specifically, Equations~(\ref{sigmaiiiII}) and~(\ref{phII}) imply
that $\sigma$ is a root of the equation
\[
\begin{split}
\sum_{i = h}^{\infty} i p_i =
\frac{\sum_{i = h}^{\infty} i \frac{\sigma^{i-h}}{i!}}
{\sum_{i = h}^{\infty} \frac{\sigma^{i-h}}{i!}} &=
\frac{\frac{1}{h!} + \sigma \sum_{i = h}^{\infty} \frac{\sigma^{i-h}}{i!}}
{\sum_{i = h}^{\infty} \frac{\sigma^{i-h}}{i!}} \\
&=
\sigma [1 + \LRE(\sigma; h)] = \rho,
\end{split}
\]
with
\[
\LRE(\sigma; h) = \frac{h p_h}{\sigma} = \frac{\mu h p_h}{\lambda}
\]
representing the average number of dummy users created per regular
user in the above-described modified Erlang system.
Thus, $\sigma$ equals the offered traffic volume in such a system
for which the carried traffic equals $\rho > h$.
As before, the carried traffic is a strictly increasing continuous
function of the offered traffic, drops to~$h$ when the offered traffic
vanishes, and grows without bound when the offered traffic goes to
infinity, and hence we may conclude that $\sigma$ exists and is unique.

\subsection{Scheme (IV): random flow assignment with threshold-based
flow transfer to a server with an invite message}
\label{ap:IV}

In order to determine the fixed point of Equations~(\ref{fp1})
for Scheme (III), we distinguish three cases, depending on whether
$\rho < l$, $l \leq \rho \leq h$, or $\rho > h$.\\

\noindent $\bullet$ \textbf{Case $\rho < l$}.

Since in stationarity the average number of active flows per server
is~$\rho$, there must be many servers with less than $l$~flows.
This corresponds to case~(i) above, and yields the fixed-point equation:
\[
\rho (s_{i-1} - s_i) \frac{1 - s_l + s_h}{1 - s_l} = i (s_i - s_{i+1}),
\hspace*{.4in} i = 1, \dots, l,
\]
\[
\rho (s_{i-1} - s_i) = i (s_i - s_{i+1}),
\hspace*{.4in} i = l + 1, \dots, h,
\]
with $s_{h+1} = 0$, which may equivalently be written as
\[
\rho p_{i-1} \frac{1 - s_l + s_h}{1 - s_l} = i p_i,
\hspace*{.4in} i = 1, \dots, l,
\]
\[
\rho p_{i-1} = i p_i,
\hspace*{.4in} i = l + 1, \dots, h.
\]

Summing the above equations over $i = 1, \dots, h$, we obtain
\[
\rho \left(\frac{1 - s_l + s_h}{1 - s_l} \sum_{i = 1}^{l} p_{i - 1} +
\sum_{i = l + 1}^{h} p_{i-1}\right) = \sum_{i = 1}^{h} i p_i,
\]
which may be simplified to
\[
\rho \sum_{i = 0}^{h} p_i = \sum_{i = 0}^{h} i p_i.
\]
This shows that the normalization condition $\sum_{i = 0}^{h} p_i = 1$
is equivalent to $\sum_{i = 0}^{h} i p_i = \rho$, reflecting that the
average number of active flows per server in stationarity equals~$\rho$.

Rewriting the above equations, we obtain
\[
p_i = \frac{\sigma}{i} p_{i-1},
\hspace*{.4in} i = 1, \dots, l,
\]
with
\begin{equation}
\sigma = \frac{(1 - s_l + s_h) \rho}{1 - s_l},
\label{sigmaiIII}
\end{equation}
and
\[
p_i = \frac{\rho}{i} p_{i-1},
\hspace*{.4in} i = l + 1, \dots, h,
\]
yielding
\begin{equation}
p_i = p_0 \left\{ \begin{array}{ll} \frac{\sigma^i}{i!} & i = 0, \dots, l, \\
\frac{\sigma^l \rho^{i - l}}{i!} & i = l + 1, \dots, h \end{array} \right.
\label{piIII1}
\end{equation}
with
\begin{equation}
p_0 = \left[\sum_{i = 0}^{l} \frac{\sigma^i}{i!} +
\sum_{i = l + 1}^{h} \frac{\sigma^l \rho^{i - l}}{i!}\right]^{- 1}.
\label{p0III}
\end{equation}

Equations~(\ref{sigmaiIII})--(\ref{p0III}) imply that $\sigma$ is
a root of the equation
\[
\begin{split}
H(\sigma) &= \sum_{i = 0}^{h} i p_i \\
&=
\frac{\sum_{i = 0}^{l} i \frac{\sigma^i}{i!} +
\sum_{i = l + 1}^{h} i \frac{\sigma^l \rho^{i - l}}{i!}}
{\sum_{i = 0}^{l} \frac{\sigma^i}{i!} +
\sum_{i = l + 1}^{h} \frac{\sigma^l \rho^{i - l}}{i!}} \\
&=
\frac{\sigma \sum_{i = 0}^{l - 1} \frac{\sigma^i}{i!} +
\rho \sum_{i = l}^{h - 1} \frac{\sigma^l \rho^{i - l}}{i!}}
{\sum_{i = 0}^{l} \frac{\sigma^i}{i!} +
\sum_{i = l + 1}^{h} \frac{\sigma^l \rho^{i - l}}{i!}} \\
&=
\sigma [1 - s_l + s_h] + \rho s_l \\
&= \rho.
\end{split}
\]
It is easily verified that the function $H(\sigma)$ is strictly
increasing and continuous, drops to~$0$ as $\sigma \downarrow 0$,
and tends to a value above~$h$ as $\sigma \to \infty$,
and hence we may conclude that $\sigma$ exists and is unique.\\

\noindent $\bullet$ \textbf{Case $l \leq \rho < h$}.

This corresponds to case~(ii) above, and yields the fixed-point equation: for $i = l, l + 1, \dots, h$
\[
\begin{split}
\lambda (s_{i-1} - s_i) + \tilde\lambda \tilde{q}_{i-1}(s) &=
\lambda (s_{i-1} - s_i) + \tilde\lambda \frac{s_{i-1} - s_i}{1 - s_h} \\
&=
\left(\lambda + \frac{\tilde\lambda}{1 - s_h}\right) (s_{i-1} - s_i) \\
&=
i (s_i - s_{i+1}) / \beta,
\end{split}
\]
with $s_l = 1$ and $s_i = 0$ for all $i \geq h + 1$, which may
equivalently be written as
\[
\left(\lambda + \frac{\tilde\lambda}{1 - p_h}\right) p_{i-1} =
i p_i / \beta,
~i = l + 1, \dots, h.
\]

Thus we obtain
\begin{equation}
p_i = \frac{\sigma}{i} p_{i-1},
~i = l + 1, \dots, h,
\label{piIII2}
\end{equation}
with
\begin{equation}
\begin{split}
\sigma &=
\lambda + \frac{\tilde\lambda}{1 - p_h} \\
&=
\frac{\lambda (1 - p_h) + \lambda s_h  - l p_l / \beta}{1 - p_h} \\
&= \frac{(\lambda - l p_l / \beta) \beta}{1 - p_h} \\
&=
\frac{\rho - l p_l}{1 - p_h}.
\label{sigmaiiIII}
\end{split}
\end{equation}

Now observe that Equations~(\ref{piIII2}) and~(\ref{sigmaiiIII})
coincide with the corresponding Equations~(\ref{piII1})
and~(\ref{sigmaiiII}) for Scheme~(II).
Hence the fixed point for Scheme (III) is identical,
and may be interpreted in a similar fashion.\\

\noindent $\bullet$ \textbf{Case $\rho \geq h$}.

In this case, a server that sees the number of flows drop from~$h$
to $h - 1$ will immediately be assigned a newly arriving flow.

This corresponds to case~(iii) above, and yields the fixed-point equation:
\[
\tilde\lambda (s_{i-1} - s_i) = i (s_i - s_{i+1}) / \beta,
\hspace*{.4in} i \geq h + 1,
\]
with $s_h = 1$, which may equivalently be written as
\[
\tilde\lambda p_{i-1} = i p_i / \beta,
\hspace*{.4in} i \geq h + 1.
\]
Thus we obtain
\begin{equation}
p_i = \frac{\sigma}{i} p_{i-1},
\hspace*{.4in} i \geq h + 1,
\label{piIII3}
\end{equation}
with
\begin{equation}
\sigma = \tilde\lambda \beta^2 = (\lambda - h p_h / \beta) \beta = \rho - h p_h.
\label{sigmaiiiIII}
\end{equation}

Just like observed above, Equations~(\ref{piIII3})
and~(\ref{sigmaiiiIII}) coincide with the corresponding
Equations~(\ref{piII3}) and~(\ref{sigmaiiiII}) for Scheme~(II).
Hence the fixed point for Scheme (IV) is identical,
and may again be interpreted in a similar fashion.

\subsection{Scheme (V): random flow assignment with threshold-based
flow transfer to the least-loaded server}
\label{ap:V}

In order to determine the fixed point of Equations~(\ref{fp1})
for Scheme~(V) in case $\alpha = 0$, we distinguish two cases,
depending on whether $\rho < h$ or $\rho \geq h$.\\

\noindent $\bullet$ \textbf{Case $\rho < h$}.

This corresponds to case~(i) above, and yields the fixed-point equation:
\[
\begin{split}
\lambda q_m(s) &= \lambda (s_m - s_{m+1}) + \tilde\lambda \\
&=
\lambda s_h + (\lambda - \frac{m}{\beta}) (s_m - s_{m+1}) \\
&=
\frac{m + 1}{\beta} (s_{m+1} - s_{m+2}), 
\end{split}
\]
i.e.,
\[
\rho s_h + (\rho - m) (s_m - s_{m+1}) = (m + 1) (s_{m+1} - s_{m+2}),
\]
and
\[
\lambda q_{i-1}(s) = \lambda (s_{i-1} - s_i) = \frac{i}{\beta} (s_i - s_{i+1}),
~i = m + 2, \dots, h,
\]
with $s_i = 1$ for all $i \leq m$ and $s_i = 0$ for all $i \geq h + 1$,
which may equivalently be written as
\[
\rho p_h + (\rho - m) p_m = (m + 1) p_{m+1},
\]
and
\[
\rho p_{i-1} = i p_i,
~i = m + 2, \dots, h.
\]

Summing the above equations over $i = m + 1, \dots, h$, we obtain
\[
\rho \sum_{i = m}^{h} p_i = \sum_{i = 1}^{h} i p_i.
\]
This shows that the normalization condition $\sum_{i = m}^{h} p_i = 1$
is equivalent to $\sum_{i = m}^{h} i p_i = \rho$, reflecting that the
average number of active flows per server in stationarity equals~$\rho$.

Rewriting the above equations, we obtain
\[
p_i = \frac{\rho}{i} p_{i-1},
~i = m + 2, \dots, h,
\]
yielding
\[
p_h = \frac{\rho^{h - i}}{(i + 1) (i + 2) \dots h} p_i =
\frac{\rho^{h - i} i!}{h!} p_i,
~i = m + 1, \dots, h,
\]
or equivalently,
\[
p_i = \frac{h!}{\rho^{h - i} i!} p_h,
~i = m + 1, \dots, h,
\]
and in particular,
\[
p_{m + 1} = \frac{h!}{\rho^{h - m - 1} (m + 1)!} p_h.
\]
Thus,
\[
\begin{split}
p_m &= \frac{(m + 1) p_{m+1} - \rho p_h}{\rho - m} \\
&=
\frac{\rho p_h}{\rho - m} \left[\frac{h! \rho^m}{m! \rho^h - 1}\right] \\
&=
\frac{\rho p_h}{\rho - m} \left[\frac{h! \rho^{m - 1}}{(m - 1)! \rho^h}
\frac{\rho}{m} - 1\right].
\end{split}
\]

Note that $\frac{\rho^m}{m!} > \frac{\rho^h}{h!}$ is necessary
and sufficient for $p_m > 0$, ensuring
\[
\tilde\lambda = \lambda p_h - \frac{m}{\beta} p_m < \lambda p_h.
\]

Also, if $\frac{\rho^{m - 1}}{(m - 1)!} > \frac{\rho^h}{h!}$, then
\[
p_m > \frac{\rho p_h}{\rho - m} \left[\frac{\rho}{m} - 1\right] =
\frac{\rho p_h}{m},
\]
which would imply
\[
\tilde\lambda = \lambda p_h - \frac{m}{\beta} p_m < 0.
\]
Thus, we must have $\frac{\rho^{m - 1}}{(m - 1)!} < \frac{\rho^h}{h!}$,
and hence the value of~$m$ is uniquely determined as
\[
\min\{i: \frac{\rho^i}{i!} > \frac{\rho^h}{h!}\}.
\]\\

\noindent $\bullet$ \textbf{Case $\rho \geq h$}.

This corresponds to case~(ii) above, and yields the fixed-point equation:
\[
\begin{split}
\lambda q_m(s) = \tilde\lambda &= \lambda - \frac{m}{\beta} (s_m - s_{m+1}) \\
&=
\frac{m + 1}{\beta} (s_{m+1} - s_{m+2}), 
\end{split}
\]
i.e.,
$
\rho - m (s_m - s_{m+1}) = (m + 1) (s_{m+1} - s_{m+2}),
$
with $s_i = 1$ for all $i < m$ and $s_i = 0$ for all $i \geq m + 2$,
which may equivalently be written as
\[
\rho - m p_m = (m + 1) p_{m+1},
\]
with $p_i = 0 $ for all $i \neq m, m + 1$.

This yields $p_m = m + 1 - \rho$, $p_{m+1} = \rho - m$,
with $m = k^* = \lfloor \rho \rfloor$.

\end{document}